\documentclass[11pt]{article}
\usepackage{amsmath,amsbsy,amssymb,amsthm,constants,enumitem,citesort,cite} 
\usepackage{hyperref}
\usepackage{epsfig,epsf}
\usepackage{epstopdf}
\usepackage{bm}

\usepackage{graphicx,wrapfig,epsfig}
\usepackage{caption}
\usepackage{subcaption}
\usepackage{float}

\hoffset=0in \voffset=0in \evensidemargin=0in \oddsidemargin=0in
\textwidth=6.5in \topmargin=0in \headheight=0.0in \headsep=0.0in
\textheight=9in

\def\real    { \mathbb{R} }
\newtheorem{thm}{Theorem}
\newtheorem{lemma}{Lemma}

\newtheorem{prop}{Proposition}
\newtheorem{definition}{Definition}

\newcommand{\eps}{\epsilon}
\newcommand{\bitem}{\begin{itemize}}
\newcommand{\eitem}{\end{itemize}}

\newcommand{\argmin}{\mathrm{argmin}}

\newcommand{\beqn}{\begin{equation}}
\newcommand{\eeqn}{\end{equation}}
\newcommand{\balign}{\begin{align}}
\newcommand{\ealign}{\end{align}}



\newcommand{\notechecked}[1]{}


\def \CandesOne {C_1}
\def \CandesTwo {C_2}
\def \CddOne {C_3}
\def \CddTwo {C_4}
\def \CddThree {C_5}
\def \KappaOne {C_6}
\def \KappaTwo {C_7}
\def \GrimesOne {C_8}
\def \GrimesTwo {C_9}
\def \GrimesThree {C_{10}}

\def \curve { \gamma } 
\def \proj { \Phi } 
\def \noise { n } 
\def \dim {N}      
\def \pdim {M}     

\def \manifold { \mathcal{M} } 
\def \condition { \tau } 
\def \volume { V_\manifold } 
\def \gdist { {d_\manifold} } 

\def \sparsity {K} 
\def \mdim {K}     
\def \tan { \mathcal{T} } 
\def \ints {\mathbb{N}} 

\newcommand{\dist}[2]{\left\| #1 - #2 \right\|}
\newcommand{\norm}[1]{\left\| #1 \right\|} 

\newcommand{\pr}[1]{\operatorname{P}\left\{ #1 \right\}}
\newcommand{\vol}[2]{\mbox{vol}_{#1}(#2)}

\def \ball {\mathcal{B}} 
\def \intersection {\mathcal{A}} 
\def \thresh {\gamma} 
\def \Mresolution {\eta} 
\def \Tresolution {\eta'} 
\def \Uresolution {\delta} 
\def \coverno {\mathcal{N}} 
\def \packing {\mathcal{P}} 
\def \Ucover {T} 
\def \Ucoverno {\mbox{\textbf{t}}} 
\def \failprob {\rho} 
\def \Uparcover {L} 
\def \Uparcoverno {\operatorname{l}} 
\def \coverbnd{\operatorname{c}} 
\def \links{Q} 
\def \Vol{\mbox{vol}} 
\def \sing{\sigma} 
\def \dev{\delta} 

\def \uniti {\mbox{i}}

\pagestyle{plain}

\title{{\bf New Analysis of Manifold Embeddings and Signal Recovery from Compressive Measurements}}

\author{Armin Eftekhari and Michael B. Wakin\footnote{Email: {aeftekha,mwakin}@mines.edu. This work was partially supported by NSF grant DMS-0603606, DARPA grant HR0011-08-1-0078, NSF grant CCF-0830320, and NSF CAREER grant CCF-1149225.}\\[3mm] \small{Department of Electrical Engineering and Computer Science, Colorado School of Mines}}

\date{June 2013; Revised April 2014}

\begin{document}

\maketitle

\vspace{-0.2in}

\begin{abstract}
Compressive Sensing (CS) exploits the surprising fact that the information contained in a sparse signal can be preserved in a small number of compressive, often random linear measurements of that signal. Strong theoretical guarantees have been established concerning the embedding of a sparse signal family under a random measurement operator and on the accuracy to which sparse signals can be recovered from noisy compressive measurements. In this paper, we address similar questions in the context of a different modeling framework. Instead of sparse models, we focus on the broad class of manifold models, which can arise in both parametric and non-parametric signal families. Using tools from the theory of empirical processes, we improve upon previous results concerning the embedding of low-dimensional manifolds under random measurement operators. We also establish both deterministic and probabilistic instance-optimal bounds in $\ell_2$ for manifold-based signal recovery and parameter estimation from noisy compressive measurements. In line with analogous results for sparsity-based CS, we conclude that much stronger bounds are possible in the probabilistic setting. Our work supports the growing evidence that manifold-based models can be used with high accuracy in compressive signal processing.
\end{abstract}

\noindent {\bf Keywords.} Manifolds, Compressive Sensing, dimensionality reduction, random projections, manifold embeddings, signal recovery, parameter estimation.

~

\noindent {\bf AMS Subject Classification.} 53A07, 57R40, 62H12,
68P30, 94A12, 94A29.

\section{Introduction}
\label{sec:intro}

\subsection{Concise signal models}

A significant byproduct of the Information Age has been an explosion
in the sheer quantity of raw data demanded from sensing systems.
From digital cameras to mobile devices, scientific computing to
medical imaging, and remote surveillance to signals intelligence,
the size (or dimension) $\dim$ of a typical desired signal continues
to increase. Naturally, the dimension $\dim$ imposes a direct burden
on the various stages of the data processing pipeline, from the data
acquisition itself to the subsequent transmission, storage, and/or
analysis.

Fortunately, in many cases, the information contained within a
high-dimensional signal actually obeys some sort of concise,
low-dimensional model. Such a signal may be described as having just
$\sparsity \ll \dim$ degrees of freedom for some $\sparsity$.
Periodic signals bandlimited to a certain frequency are one example;
they live along a fixed $\sparsity$-dimensional linear subspace of
$\real^\dim$. Piecewise smooth signals are an example of {\em sparse
signals}, which can be written as a succinct linear combination of
just $\sparsity$ elements from some basis such as a wavelet
dictionary. Still other signals may live along $\mdim$-dimensional
submanifolds of the ambient signal space $\real^\dim$; examples
include collections of signals observed from multiple viewpoints in
a camera or sensor network. In general, the conciseness of these
models suggests the possibility for efficient processing and
compression of these signals.

\subsection{Compressive measurements}

Recently, the conciseness of certain signal models has led to the
use of {\em compressive measurements} for simplifying the data
acquisition process. Rather than designing a sensor to measure a
signal $x \in \real^\dim$, for example, it often suffices to design
a sensor that can measure a much shorter vector $y = \proj x$, where
$\proj$ is a linear measurement operator represented as an $\pdim
\times \dim$ matrix, and where typically $\pdim \ll \dim$. As we
discuss below in the context of Compressive Sensing (CS), when
$\proj$ is properly designed, the requisite number of measurements
$\pdim$ typically scales with the information level $\sparsity$ of
the signal, rather than with its ambient dimension $\dim$.

\notechecked{edited this paragraph to clarify issues about storing/applying $\proj$} Surprisingly, the requirements on the measurement matrix $\proj$ can often be met by choosing $\proj$ randomly from an acceptable distribution. Most commonly, the entries of $\proj$ are chosen to be independent and identically distributed (i.i.d.)\ Gaussian random variables, although the use of structured random matrices is on the rise~\cite{krahmer2012suprema,eftekhari2012restricted}. Physical architectures have been proposed for hardware that will enable the acquisition of signals using compressive measurements~\cite{duarte2008spi,candes2008ics,healy2008cpi,demod}; many of these collect the compressive measurements $y$ of a signal $x$ directly, without explicitly computing a matrix multiplication on board. The potential benefits for data acquisition are numerous. These systems can enable simple, low-cost acquisition of a signal directly in compressed form without requiring knowledge of the signal structure in advance. Some of the many possible applications include distributed source coding in sensor networks~\cite{dcsJournalAccepted}, medical imaging~\cite{lustig2008csm}, and high-rate analog-to-digital conversion~\cite{candes2008ics,healy2008cpi,tropp2010beyond}. We note that, in all cases, the measurement matrix $\proj$ must be known to any decoder that will be used to process the compressed measurement vector $y$, but with suitable synchronization between the compressive measurement system and the decoder (e.g., exchanging a seed used to initialize a random number generator), it it not necessary for $\proj$ to be explicitly transmitted along with $y$.


\subsection{Signal understanding from compressive measurements}

Having acquired a signal $x$ in compressed form (in the form of a
measurement vector $y$), there are many questions that may then be
asked of the signal. These include: \notechecked{removed mention of sketching (formerly Q2), and replaced ``Q3'' with ``Q2'' for parameter estimation throughout the document}
\begin{itemize}
\item [Q1.] {\em Recovery:} What was the original signal $x$?
\item [Q2.] {\em Parameter estimation:} Supposing $x$ was generated from a
$\mdim$-dimensional parametric model, what was the original
$\mdim$-dimensional parameter that generated $x$?
\end{itemize}
Given only the measurements $y$ (possibly corrupted by noise),
solving either of the above problems requires exploiting the concise,
$\sparsity$-dimensional structure inherent in the
signal.\footnote{Other problems, such as finding the nearest
neighbor to $x$ in a large database of signals~\cite{JL_Indyk}, can
also be solved using compressive measurements and do not require
assumptions about the concise structure in $x$.} CS addresses
questions Q1 and Q2 under the assumption that the signal $x$ is
$\sparsity$-sparse (or approximately so) in some basis or
dictionary; in Section~\ref{sec:cs} we outline some key
theoretical bounds from CS regarding the accuracy to which these
questions may be answered.

\subsection{Manifold models for signal understanding}
\label{sec:mmodelsunder}

In this paper, we will address these questions in the context of a
different modeling framework for concise signal structure. Instead
of sparse models, we focus on the broad class of {\em manifold
models}, which arise both in settings where a $\mdim$-dimensional
parameter $\theta$ controls the generation of the signal and also in
non-parametric settings.

As a very simple illustration, consider the articulated signal in
Figure~\ref{fig:CSM}(a). We let $g(t)$ be a fixed continuous-time
Gaussian pulse centered at $t=0$ and consider a shifted version of
$g$ denoted as the parametric signal $f_\theta(t) := g(t-\theta)$
with $t,\theta \in [0,1]$. We then suppose the discrete-time signal
$x = x_\theta \in \real^\dim$ arises by sampling the continuous-time
signal $f_\theta(t)$ uniformly in time, i.e., $x_\theta(n) =
f_\theta(n/\dim)$ for $n=1,2,\dots,\dim$. As the parameter $\theta$
changes, the signals $x_\theta$ trace out a continuous
one-dimensional (1-D) curve $\manifold = \{x_\theta: \theta \in
[0,1]\} \subset \real^\dim$. The conciseness of our model (in
contrast with the potentially high dimension $\dim$ of the signal
space) is reflected in the low dimension of the path $\manifold$.

\begin{figure}[t]
\begin{center}
(a) \includegraphics[width=50mm]{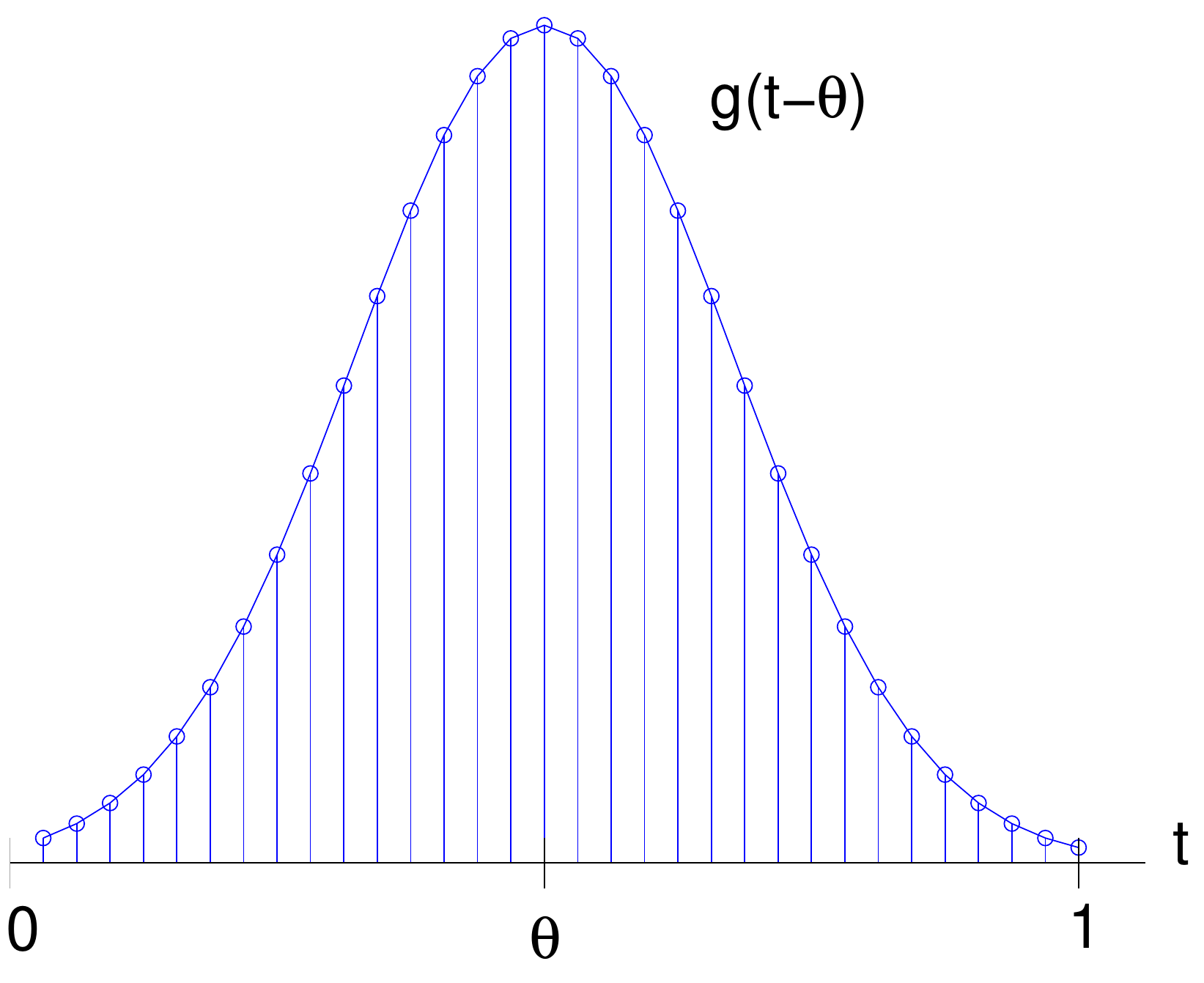} \quad\quad
(b) \includegraphics[width=60mm]{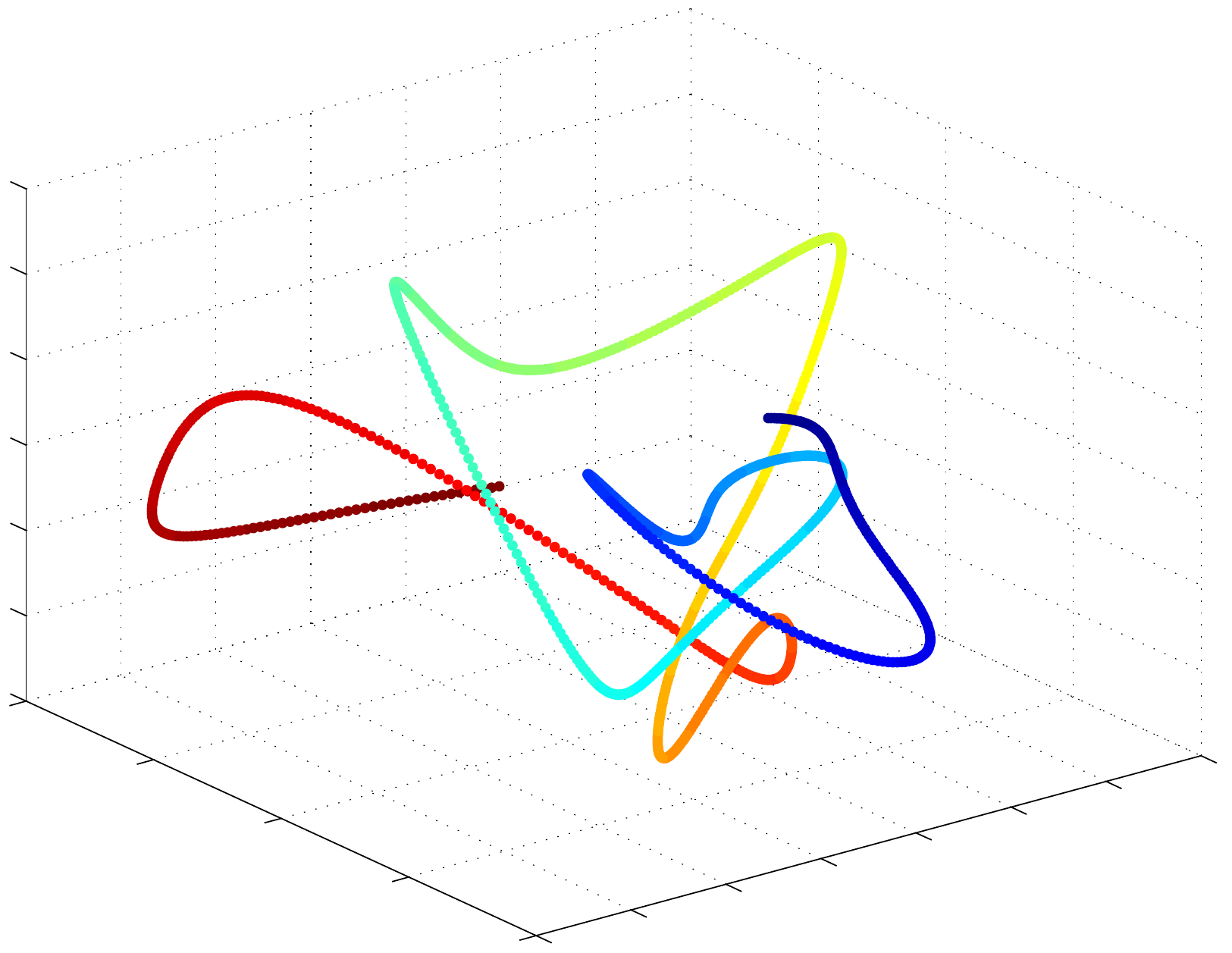}
\end{center}
\vspace*{-5mm} \caption{\small\sl \label{fig:CSM} (a) The
articulated signal $f_\theta(t) = g(t-\theta)$ is defined via shifts
of a primitive function $g$, where $g$ is a Gaussian pulse. Each
signal is sampled at $\dim$ points, and as $\theta$ changes, the
resulting signals trace out a 1-D manifold in $\real^\dim$. (b)
Projection of the manifold from $\real^\dim$ into $\real^3$ via a random $3 \times \dim$ matrix; the color/shading represents different values of $\theta
\in [0,1]$.}
\end{figure}

In the real world, manifold models may arise in a variety of
settings. A $\mdim$-dimensional parameter $\theta$ could reflect
uncertainty about the 1-D timing of the arrival of a signal (as in
Figure~\ref{fig:CSM}(a); see also~\cite{eftekhari2011matched}), the 2-D orientation and position of an
edge in an image, the 2-D translation of an image under study~\cite{park2012geometric}, the
multiple degrees of freedom in positioning a camera or sensor to
measure a scene~\cite{davenport2010joint}, the physical degrees of freedom in an articulated
robotic or sensing system, or combinations of the above. Manifolds
have also been proposed as approximate models for signal databases
such as collections of images of human faces or of handwritten
digits~\cite{Eigenfaces,digits,broomhead01wh}.

Consequently, the potential applications of manifold models are
numerous in signal processing. In some applications, the signal $x$
itself may be the object of interest, and the concise manifold model
may facilitate the acquisition or compression of that signal.
Alternatively, in parametric settings one may be interested in using
a signal $x = x_\theta$ to infer the parameter $\theta$ that
generated that signal.
In an application known as manifold learning, one may be presented
with a collection of data $\{x_{\theta_1}, x_{\theta_2}, \dots,
x_{\theta_n}\}$ sampled from a parametric manifold and wish to
discover the underlying parameterization that generated that
manifold.
Multiple manifolds can also be considered simultaneously, for
example in problems that require recognizing an object from one of
$n$ possible classes, where the viewpoint of the object is uncertain
during the image capture process. In this case, we may wish to know
which of $n$ manifolds is closest to the observed image $x$.

While any of these questions may be answered with full knowledge of the high-dimensional signal $x \in \real^\dim$, there is growing theoretical and experimental support that they can also be answered from only compressive measurements $y = \proj x$. In past work~\cite{baraniuk2009random}, we have shown that given a sufficient number $\pdim$ of random measurements, one can ensure with high probability that a manifold $\manifold \subset \real^\dim$ has a stable embedding in the measurement space $\real^\pdim$ under the operator $\proj$, such that pairwise Euclidean and geodesic distances are approximately preserved on its image $\proj \manifold$. We will discuss this in more detail later in Section~\ref{sec:mstable}, but a key aspect is that the number of requisite measurements $\pdim$ is linearly proportional to the information level of the signal, i.e., the dimension $\mdim$ of the manifold. In that work, the number of measurements was also logarithmically dependent on the ambient dimension $N$, although this dependence was later removed in the asymptotic case in~\cite{clarkson2008tighter} using a different set of assumptions on the manifold.

{\em The first contribution of this paper---presented in Section~\ref{sec:mstable}---is that we provide an improved lower bound on the number of random measurements to guarantee a stable embedding of a signal manifold.} In particular, we make the same assumptions on the manifold as in our past work~\cite{baraniuk2009random} but provide a measurement bound that is independent of the ambient dimension $N$.  Our bound is non-asymptotic, and we provide explicit constants. \notechecked{added next sentence.} Additionally we point out that this result is generic in the sense that it applies to any compact and smooth submanifold of $\mathbb{R}^{N}$ for which certain geometric properties (namely volume, dimension, and condition number) are known.

In order to do this, we use tools from the theory of empirical processes (namely, the idea of ``generic chaining''~\cite{talagrand2005generic}), which have recently been used to develop state-of-the-art RIP results for structured measurement matrices in CS~\cite{Rudelson-Dudley,tropp2010beyond,ra09,rombergGaussian2009,rombergNeelsh2010,rauhut10re,eftekhari2011matched,krahmer2012suprema}. More elementary arguments (e.g., involving simple concentration of measure inequalities) have previously been used in CS (see, e.g.,~\cite{JLCS}) for deriving RIP bounds for unstructured i.i.d.\ measurement matrices, and we also used such arguments in~\cite{baraniuk2009random} to derive a manifold embedding guarantee. However, it appears that the stronger machinery of the empirical process approach is necessary to derive stronger bounds, both in RIP problems and in manifold embedding problems. A chaining argument was employed in~\cite{clarkson2008tighter}, and in this paper we present a chaining argument that is suitable for studying the manifold embedding problem under our set of assumptions on the manifold. Because this chaining framework is fairly technical, we develop it entirely in the appendices so that the body of the paper will be as self-contained and expository as possible for someone seeking merely to understand the substance and context of our results. \notechecked{added next two sentences}(We do, however, include an example in the body of the paper to provide insight into the machinery developed in the appendices.) We also observe that similar results are attainable through the use of the Dudley inequality~\cite{ledoux2011probability}, but a direct argument (as the one presented here) has the advantages of potentially better exploiting the geometry of the model and therefore producing tighter bounds, offering improved insight into the problem, and being more amenable to future improvements to our arguments.


As a very simple illustration of the embedding phenomenon, Figure~\ref{fig:CSM}(b) presents an experiment where just $\pdim = 3$ compressive measurements are acquired from each point $x_\theta$ described in Figure~\ref{fig:CSM}(a). We let $\dim = 1024$ and construct a randomly generated $3 \times \dim$ matrix $\proj$ whose entries are i.i.d.\ Gaussian random variables with zero mean and variance of $1/3$. Each point $x_\theta$ from the original manifold $\manifold \subset \real^{1024}$ maps to a unique point $\proj x_\theta$ in $\real^3$; the manifold embeds in the low-dimensional measurement space. Given any $y = \proj x_{\theta'}$ for $\theta'$ unknown, then, it is possible to infer the value $\theta'$ using only knowledge of the parametric model for $\manifold$ and the
measurement operator $\proj$. Moreover, as the number $\pdim$ of compressive measurements increases, the manifold embedding becomes much more stable and remains highly self-avoiding.

Indeed, there is strong evidence that, as a consequence of this phenomenon, questions such as Q1 (signal recovery) and Q2 (parameter estimation) can be accurately solved using only compressive measurements of a signal $x$, and that these procedures are robust to noise and to deviations of the signal $x$ away from the manifold $\manifold$~\cite{mbwPhdThesis,davenport2007sfc,shah2011iterative}. Additional theoretical and empirical justification has followed for
the manifold learning~\cite{hegde2007rpm} and multiclass recognition problems~\cite{davenport2007sfc} described above. Consequently, many of the advantages of  compressive measurements that are beneficial in sparsity-based CS (low-cost sensor design, reduced transmission requirements, reduced storage requirements,
lack of need for advance knowledge of signal structure, simplified computation in the low-dimensional space $\real^\pdim$, etc.)\ may also be enjoyed in settings where manifold models capture the concise signal structure. Moreover, the use of a manifold model can often capture the structure of a signal in many fewer degrees of freedom $\mdim$ than would be required in any sparse representation, and thus the measurement rate $\pdim$ can be greatly reduced
compared to sparsity-based CS approaches.

\notechecked{updated next sentence} {\em The second contribution of this paper---presented in Section~\ref{sec:srpe}---is that we establish theoretical bounds on the accuracy to which questions Q1 (signal recovery) and Q2 (parameter estimation) may be answered.} To do this, we rely largely on the new analytical chaining framework described above. We consider both deterministic and probabilistic instance-optimal bounds, and we see strong similarities to analogous results that have been derived for sparsity-based CS. As with sparsity-based CS, we show for manifold-based CS that for any fixed $\proj$, uniform deterministic $\ell_2$ recovery bounds for recovery of all $x$ are necessarily poor. We then show that, as with sparsity-based CS, providing for any $x$ a probabilistic bound that holds over most $\proj$ is possible with the desired accuracy. We consider both noise-free and noisy measurement settings and compare
our bounds with sparsity-based CS. \notechecked{added next sentence.} Finally, it should be noted that our results concerning question Q1 are independent of the parametrization of the manifold, whereas, in contrast, our results concerning question Q2 are specific to the given parametrization of the manifold.

{\em We feel that a third contribution of this paper comes in the form of the analytical tools we use to study the above problems.} Our chaining argument allows us to study not only the embedding problem (as in~\cite{clarkson2008tighter}) but also Q1 and Q2. Moreover, in Appendix~\ref{sec:useful diff geom reuslts}, which we call the ``Toolbox,'' we present a collection of implications of our assumption that the manifold has bounded {\em condition number} (see Section~\ref{sec:mmodels} for definition). This elementary property, also known as the {\em reach} of a manifold in the geometric measure theory literature~\cite{federer1959curvature}, \notechecked{updated cite for reach} has become somewhat popular in the analysis of manifold models for signal processing (e.g., see~\cite{niyogi,baraniuk2009random,yap2013stable,iwen2012approximation,verma2012,davenport2010joint,davenport2007sfc}). The seminal paper~\cite{niyogi} (also see \cite{federer1959curvature}) \notechecked{added cite} contains a collection of implications of bounded condition number that have been used directly or indirectly in numerous works, including~\cite{baraniuk2009random,yap2013stable,iwen2012approximation,verma2012,davenport2010joint,davenport2007sfc}. We restate some of these implications in the Toolbox. Unfortunately, after very careful study we were unable to confirm for ourselves some of the original proofs appearing in~\cite{niyogi}. Therefore, some of the statements and proofs in the Toolbox below differ slightly from their original counterparts in~\cite{niyogi}. We hope that these results will provide a useful reference for the continued study of manifolds with bounded condition number.

\subsection{Paper organization}

Section~\ref{sec:background} provides the necessary background on sparsity-based CS and on manifold models to place our work in the proper context. In Section~\ref{sec:mstable}, we state our improved bound regarding stable embeddings of manifolds. In Section~\ref{sec:srpe}, we then formalize our criteria for answering questions Q1 and Q2 in the context of manifold models. We first confront the task of deriving deterministic instance-optimal bounds in $\ell_2$ and then consider probabilistic instance-optimal bounds in $\ell_2$. We conclude in Section~\ref{sec:concl} with a final discussion. The Toolbox (Appendix~\ref{sec:useful diff geom reuslts}) establishes a collection of useful results in differential geometry that are frequently used throughout our technical proofs, which appear in the remaining appendices.

\section{Background}
\label{sec:background}

\subsection{Sparsity-Based Compressive Sensing} \label{sec:cs}

\subsubsection{Sparse models}

The concise modeling framework used in CS is
{\em sparsity}. Consider a signal $x \in \real^\dim$ and suppose the
$\dim \times \dim$ matrix $\Psi = [\psi_1 ~ \psi_2 ~ \cdots~
\psi_N]$ forms an orthonormal basis for $\real^\dim$. We say $x$ is
$\sparsity$-sparse in the basis $\Psi$ if for $\alpha \in
\real^\dim$ we can write $x = \Psi \alpha$, where $\|\alpha\|_0 = \sparsity < \dim$. (The $\ell_0$-norm notation
counts the number of nonzeros of the entries of $\alpha$.) In a
sparse representation, the actual information content of a signal is
contained exclusively in the $\sparsity < \dim$ positions and values
of its nonzero coefficients.

For those signals that are approximately sparse, we may measure
their proximity to sparse signals as follows. We define
$\alpha_\sparsity \in \real^\dim$ to be the vector containing only
the largest $\sparsity$ entries of $\alpha$ in magnitude, with the remaining
entries set to zero. Similarly, we let $x_\sparsity = \Psi
\alpha_\sparsity$. It is then common to measure the proximity to
sparseness using either $\|\alpha-\alpha_\sparsity\|_1$ or
$\norm{\alpha-\alpha_\sparsity}$ (the latter of which equals
$\norm{x-x_\sparsity}$ because $\Psi$ is orthonormal). Here and elsewhere in this paper, $\|\cdot\|$ stands for the $\ell_2$ norm.

\subsubsection{Stable embeddings of sparse signal families}
\label{sec:csmeas}

CS uses the concept of sparsity to simplify the data acquisition
process. Rather than designing a sensor to measure a signal $x \in
\real^\dim$, for example, it often suffices to design a sensor that
can measure a much shorter vector $y = \proj x$, where $\proj$ is a
linear measurement operator represented as an $\pdim \times \dim$
matrix, and typically $\pdim \ll \dim$.

The measurement matrix $\proj$ must have certain properties in order
to be suitable for CS. One desirable property (which leads to the
theoretical results we mention in Section~\ref{sec:csresults}) is
known as the Restricted Isometry Property
(RIP)~\cite{CandesUES,CandesECLP,CandesSSR}. We say a matrix $\proj$
meets the {\em RIP of order $\sparsity$ with respect to the basis
$\Psi$} if for some $\Uresolution_\sparsity > 0$,
$$
(1-\Uresolution_\sparsity) \norm{\alpha} \le \norm{\proj \Psi \alpha} \le
(1+\Uresolution_\sparsity) \norm{\alpha}
$$
holds for all $\alpha \in \real^\dim$ with $\|\alpha\|_0 \le
\sparsity$. Intuitively, the RIP can be viewed as guaranteeing a
{\em stable embedding} of the collection of $\sparsity$-sparse
signals within the measurement space $\real^\pdim$. In particular,
supposing the RIP of order $2\sparsity$ is satisfied with respect to
the basis $\Psi$, then for all pairs of $\sparsity$-sparse signals
$x_1, x_2 \in \real^\dim$, we have
\begin{equation}
(1-\Uresolution_{2\sparsity}) \norm{x_1-x_2} \le \norm{\proj x_1 - \proj
x_2} \le (1+\Uresolution_{2\sparsity}) \norm{x_1-x_2}. \label{eq:rip2}
\end{equation}

\notechecked{edited the next sentence.} Although deterministic constructions of matrices meeting the RIP with few rows (ideally proportional to the sparsity level $K$) are still a work in progress, it is known that the RIP can often be met by
choosing $\proj$ randomly from an acceptable distribution. For
example, let $\Psi$ be a fixed orthonormal basis for $\real^\dim$
and suppose that
\begin{equation}
 \pdim \ge \Cl{RIPConst} \sparsity
\log(\dim/\sparsity) \label{eq:nummeas}
\end{equation}
for some constant $\Cr{RIPConst}$. Then supposing that the entries of
the $\pdim \times \dim$ matrix $\proj$ are drawn as i.i.d.\ Gaussian random variables with mean $0$ and
variance $\frac{1}{\pdim}$, it follows that with high probability
$\proj$ meets the RIP of order $\sparsity$ with respect to the basis
$\Psi$. Two aspects of this construction deserve special notice:
first, the number $\pdim$ of measurements required is linearly
proportional to the information level $\sparsity$ (and logarithmic in the ambient dimension $N$), and second,
neither the sparse basis $\Psi$ nor the locations of the nonzero
entries of $\alpha$ need be known when designing the measurement
operator $\proj$. Other random distributions for $\proj$ may also be
used, all requiring approximately the same number of measurements~\cite{rauhut2010compressive,krahmer2012suprema,eftekhari2012restricted}.

\subsubsection{Sparsity-based signal recovery}
\label{sec:csresults}

\notechecked{removed ``and sketching'' from section name} Although the sparse structure of a signal $x$ need not be known when
collecting measurements $y = \proj x$, a hallmark of CS is the use
of the sparse model in order to facilitate understanding from the
compressive measurements. A variety of algorithms have been proposed
to answer Q1 (signal recovery), where we seek to solve the
apparently undercomplete set of $\pdim$ linear equations $y = \proj
x$ for $\dim$ unknowns. The canonical
method~\cite{DonohoCS,CandesUES,CandesRUP} is known as {\em
$\ell_1$-minimization} and is formulated as follows: first solve
\begin{equation}
\widehat{\alpha} = \argmin_{\alpha' \in \real^\dim} \|\alpha'\|_1
~\mathrm{subject~to}~ y = \proj \Psi \alpha', \label{eq:l1min}
\end{equation}
and then set $\widehat{x} = \Psi \widehat{\alpha}$. This recovery program can also be extended to account for measurement noise. The following bound is known.
\begin{thm}{\em \cite{candes2008rip}} Suppose that $\proj$
satisfies the RIP of order $2\sparsity$ with respect to $\Psi$ and
with constant $\Uresolution_{2\sparsity} < \sqrt{2}-1$. Let $x \in
\real^\dim$, and suppose that $y = \proj x + \noise$ where $\norm{\noise} \le \epsilon$. Then let
$$ \widehat{\alpha} =
\arg\min_{\alpha' \in \real^\dim} \|\alpha'\|_1
~\mathrm{subject~to}~ \norm{y - \proj \Psi \alpha'} \le \epsilon,
$$
and set $\widehat{x} = \Psi \widehat{\alpha}$. Then
\begin{equation}
\norm{x-\widehat{x}} = \norm{\alpha-\widehat{\alpha}} \le \CandesOne
\sparsity^{-\frac{1}{2}} \|\alpha-\alpha_\sparsity\|_1 + \CandesTwo
\epsilon. \label{eq:mixed2}
\end{equation}
for constants $\CandesOne$ and $\CandesTwo$.
 \label{theo:mixed2}
\end{thm}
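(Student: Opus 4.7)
The plan is to follow the now-classical argument of Candès, working with the error vector in the coefficient domain and exploiting two facts: that both the true coefficient vector and its reconstruction lie in the feasible tube, and that the reconstruction is $\ell_1$-optimal within that tube.

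First I would set $h := \widehat{\alpha} - \alpha$ and note two consequences. By the triangle inequality applied to the feasibility constraints satisfied by $\alpha$ and $\widehat{\alpha}$, one obtains the \emph{tube bound} $\norm{\proj \Psi h} \le 2\epsilon$. By the $\ell_1$-optimality of $\widehat{\alpha}$, one has $\|\widehat{\alpha}\|_1 \le \|\alpha\|_1$; splitting both sides by the support $T_0$ of $\alpha_\sparsity$ and applying the triangle inequality yields the \emph{cone bound}
\begin{equation*}
\|h_{T_0^c}\|_1 \le \|h_{T_0}\|_1 + 2\|\alpha - \alpha_\sparsity\|_1,
\end{equation*}
so $h$ lies in an $\ell_1$-cone around the support of $\alpha_\sparsity$, modulo the tail term.

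Next I would partition $T_0^c$ into disjoint blocks $T_1, T_2, \dots$ of size $\sparsity$ each, ordered by decreasing magnitude of the entries of $h_{T_0^c}$. The standard sorting lemma gives, for each $j \ge 2$, $\norm{h_{T_j}} \le \sparsity^{-1/2}\|h_{T_{j-1}}\|_1$, and hence the tail estimate $\sum_{j \ge 2} \norm{h_{T_j}} \le \sparsity^{-1/2}\|h_{T_0^c}\|_1$. Writing $h = h_{T_0 \cup T_1} + \sum_{j \ge 2} h_{T_j}$, the idea is to control $\norm{h_{T_0\cup T_1}}$ using both the tube bound and the RIP (of order $2\sparsity$), and then to recover $\norm{h}$ by combining with the tail estimate above.

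For the central step I would apply the RIP to $h_{T_0 \cup T_1}$, writing
\begin{equation*}
(1-\Uresolution_{2\sparsity})\norm{h_{T_0 \cup T_1}}^2 \le \norm{\proj\Psi h_{T_0\cup T_1}}^2 = \iprod{\proj\Psi h_{T_0\cup T_1}}{\proj\Psi h} - \sum_{j\ge 2}\iprod{\proj\Psi h_{T_0\cup T_1}}{\proj\Psi h_{T_j}}.
\end{equation*}
The first inner product is bounded by $2\epsilon \, \norm{h_{T_0 \cup T_1}}$ via Cauchy--Schwarz and the tube bound. Each cross-term is handled by the standard RIP-based near-orthogonality inequality for disjointly-supported $\sparsity$-sparse vectors, which gives $|\iprod{\proj\Psi u}{\proj\Psi v}| \le \Uresolution_{2\sparsity}\norm{u}\norm{v}$ (proved via the parallelogram identity on $u/\norm{u} \pm v/\norm{v}$). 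Summing over $j \ge 2$ and invoking the tail estimate from the previous paragraph, plus the cone bound and the elementary inequality $\|h_{T_0}\|_1 \le \sqrt{\sparsity}\norm{h_{T_0}} \le \sqrt{\sparsity}\norm{h_{T_0\cup T_1}}$, yields a linear inequality of the form $\norm{h_{T_0\cup T_1}} \le A\epsilon + B\sparsity^{-1/2}\|\alpha-\alpha_\sparsity\|_1$, where the coefficients $A,B$ depend on $\Uresolution_{2\sparsity}$. The threshold $\Uresolution_{2\sparsity} < \sqrt{2}-1$ is precisely what is needed to keep the coefficient of $\norm{h_{T_0\cup T_1}}$ on the right-hand side strictly below $1$ so that it can be absorbed.

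Finally, using $\norm{h} \le \norm{h_{T_0\cup T_1}} + \sum_{j\ge 2}\norm{h_{T_j}}$ together with the tail estimate and the cone bound produces the stated inequality, with explicit constants $\CandesOne,\CandesTwo$ expressible in terms of $\Uresolution_{2\sparsity}$. The main obstacle, and the place where the proof is delicate, is the bookkeeping in the central step: one must simultaneously track the contribution of the noise level $\epsilon$, the tail $\|\alpha-\alpha_\sparsity\|_1$, and the piece $\norm{h_{T_0\cup T_1}}$, and choose the threshold $\sqrt{2}-1$ exactly so that the self-referential term can be absorbed. The equality $\norm{x-\widehat{x}} = \norm{\alpha-\widehat{\alpha}}$ is then immediate from the orthonormality of $\Psi$.
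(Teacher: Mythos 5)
Your proposal reconstructs the standard argument of Cand\`es for this theorem, which the paper itself does not prove but simply quotes from the cited reference \cite{candes2008rip}; the tube bound, the cone constraint, the size-$\sparsity$ block decomposition of $T_0^c$ with the sorting lemma, and the RIP-based near-orthogonality step are exactly the ingredients of that proof, so your sketch is correct and matches the source. The one place to be careful is that the near-orthogonality inequality you state holds for disjointly supported $\sparsity$-sparse vectors, so before applying it to the cross-terms you must split $h_{T_0\cup T_1}$ into $h_{T_0}+h_{T_1}$ and use $\norm{h_{T_0}}+\norm{h_{T_1}}\le\sqrt{2}\,\norm{h_{T_0\cup T_1}}$ --- this is precisely where the $\sqrt{2}$ in the threshold $\Uresolution_{2\sparsity}<\sqrt{2}-1$ originates, a point your write-up acknowledges only implicitly.
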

This result is not unique to $\ell_1$ minimization; similar
bounds have been established for signal recovery using greedy
iterative algorithms OMP \cite{davenport2010analysis}, ROMP~\cite{romp}, and CoSAMP~\cite{cosamp}.
Bounds of this type are extremely encouraging for signal processing.
From only $\pdim$ measurements, it is possible to recover $x$ with
quality that is comparable to its proximity to the nearest
$\sparsity$-sparse signal, and if $x$ itself is $\sparsity$-sparse
and there is no measurement noise, then $x$ can be recovered
exactly. Moreover, despite the apparent ill-conditioning of the
inverse problem, the measurement noise is not dramatically amplified
in the recovery process.

These bounds are known as {\em deterministic}, {\em
instance-optimal} bounds because they hold deterministically for any
$\proj$ that meets the RIP, and because for a given $\proj$ they
give a guarantee for recovery of any $x \in \real^\dim$ based on its
proximity to the concise model.

The use of $\ell_1$ as a measure for proximity to the concise model
(on the right hand side of (\ref{eq:mixed2}))
arises due to the difficulty in establishing $\ell_2$ bounds on the
right hand side. Indeed, it is known that deterministic $\ell_2$
instance-optimal bounds cannot exist that are comparable to (\ref{eq:mixed2}). In particular, for any
$\proj$, to ensure that $\norm{x-\widehat{x}} \le \CddOne
\norm{x-x_\sparsity}$ for all $x$, it is known~\cite{CDD} that this
requires that $\pdim \ge \CddTwo \dim$ regardless of $\sparsity$.

However, it is possible to obtain an instance-optimal $\ell_2$ bound
for sparse signal recovery in the noise-free setting by changing
from a deterministic formulation to a {\em probabilistic}
one~\cite{CDD,devore08in}. In particular, by considering any given
$x \in \real^\dim$, it is possible to show that for {\em most}
random $\proj$, letting the measurements $y = \proj x$, and
recovering $\widehat{x}$ via $\ell_1$-minimization (\ref{eq:l1min}),
it holds that
\begin{equation}
\norm{x-\widehat{x}} \le \CddThree
\norm{x-x_\sparsity}.\label{eq:devore}
\end{equation}
While the proof of this statement~\cite{devore08in} does not involve
the RIP directly, it holds for many of the same random distributions
that work for RIP matrices, and it requires the same number of
measurements (\ref{eq:nummeas}) up to a constant.

\notechecked{removed old ``Q2'' notation from mention of sketching} Similar bounds hold for the closely related problem of ``sketching,'' where the goal is to use the compressive measurement vector $y$ to identify and report only approximately $\sparsity$ expansion coefficients that best describe the original signal, i.e., a sparse approximation to $\alpha_\sparsity$. In the case where $\Psi = I$, an efficient randomized measurement process coupled with a customized recovery algorithm~\cite{gilbert2007osa} provides signal sketches that meet a deterministic mixed-norm $\ell_2/\ell_1$ instance-optimal bound analogous to~(\ref{eq:mixed2}) in the noise-free setting. A desirable aspect of this construction is that the computational complexity scales with only $\log(\dim)$ (and is polynomial in $\sparsity$); this is possible because only approximately $\sparsity$ pieces of information must be computed to describe the signal. Though at a higher computational cost, the aforementioned greedy algorithms (such as CoSaMP) for signal recovery can also be interpreted as sketching techniques in that they produce explicit sparse approximations to $\alpha_\sparsity$.
Finally, for signals that are sparse in the Fourier domain ($\Psi$ consists of the DFT vectors), probabilistic $\ell_2/\ell_2$ instance-optimal bounds have
been established for a specialized sketching algorithm~\cite{gilbert05im,gilbert2008tutorial} that are analogous to (\ref{eq:devore}).

\subsection{Manifold models and properties}
\label{sec:mmodels}

\subsubsection{Overview}

\notechecked{created new subsubsection heading}

As we have discussed in Section~\ref{sec:mmodelsunder}, there are
many possible modeling frameworks for capturing concise signal
structure. Among these possibilities are the broad class of manifold
models.

Manifold models arise, for example, in settings where the signals of
interest vary continuously as a function of some $\mdim$-dimensional
parameter. Suppose, for instance, that there exists some parameter
$\theta$ that controls the generation of the signal. We let
$x_\theta \in \real^\dim$ denote the signal corresponding to the
parameter $\theta$, and we let $\Theta$ denote the
$\mdim$-dimensional parameter space from which $\theta$ is drawn. In
general, $\Theta$ itself may be a $\mdim$-dimensional manifold and
need not be embedded in an ambient Euclidean space. For example,
supposing $\theta$ describes the 1-D rotation parameter in a
top-down satellite image, we have $\Theta = \mathbb{S}^1$.

Under certain conditions on the parameterization $\theta \mapsto
x_\theta$, it follows that $\manifold := \{x_\theta: \theta \in \Theta\}$
forms a $\mdim$-dimensional {\em submanifold} of $\real^\dim$. An
appropriate visualization is that the set $\manifold$ forms a
nonlinear $\mdim$-dimensional ``surface'' within the
high-dimensional ambient signal space $\real^\dim$. Depending on the
circumstances, we may measure the distance between points two points
$x_{\theta_1}$ and $x_{\theta_2}$ on the manifold $\manifold$ using
either the ambient Euclidean distance $\norm{x_{\theta_1}-x_{\theta_2}}$
or the geodesic distance along the manifold, which we denote as
$\gdist(x_{\theta_1},x_{\theta_2})$. In the case where the geodesic
distance along $\manifold$ equals the native distance in parameter
space, i.e., when
\begin{equation}
\gdist(x_{\theta_1},x_{\theta_2}) = d_\Theta(\theta_1, \theta_2),
\label{eq:isometry}
\end{equation}
we say that $\manifold$ is {\em isometric} to $\Theta$. The
definition of the distance $d_\Theta(\theta_1, \theta_2)$ depends on
the appropriate metric for the parameter space $\Theta$; supposing
$\Theta$ is a convex subset of Euclidean space, then we can let
$d_\Theta(\theta_1, \theta_2) = \norm{\theta_1-\theta_2}$.

While our discussion above concentrates on the case of manifolds
$\manifold$ generated by underlying parameterizations, we stress
that manifolds have also been proposed as approximate
low-dimensional models within $\real^\dim$ for nonparametric signal
classes such as images of human faces or handwritten
digits~\cite{Eigenfaces,digits,broomhead01wh}. These signal families
may also be considered.

The results we present in this paper will make reference to certain
characteristic properties of the manifold under study. These terms
are originally defined in~\cite{niyogi,baraniuk2009random} and are repeated
here for completeness. First, our results will depend on a measure
of regularity for the manifold. For this purpose, we adopt the notion of the {\em
condition number} of a manifold, which is also known as the reach of a manifold in the geometric measure theory literature~\cite{federer1959curvature,niyogi}.

\begin{definition}{\em \cite{niyogi}}
Let $\manifold$ be a compact Riemannian submanifold of $\real^\dim$.
The {\em condition number} is defined as $1/\condition$, where
$\condition$ is the largest number having the following property:
The open normal bundle about $\manifold$ of radius $r$ is embedded
in $\real^\dim$ for all $r < \condition$. \label{def:cn}
\end{definition}

The condition number $1/\condition$ controls both local properties
and global properties of the manifold. Its role is summarized in two
key relationships (see the Toolbox and~\cite{niyogi} for more detail). First, the the curvature of any
unit-speed geodesic path on $\manifold$ is bounded by
$1/\condition$. Second, at long geodesic distances, the condition
number controls how close the manifold may curve back upon itself.
For example, supposing $x_1,x_2 \in \manifold$ with $\gdist(x_1,x_2)
> \condition$, it must hold that $\norm{x_1-x_2} > \condition/2$.

We continue with a brief but concrete example to illustrate specific
values for these quantities. Let $\dim
> 0$, $\kappa > 0$, $\Theta = \real \!\! \mod 2\pi$, and suppose $x_\theta
\in \real^\dim$ is given by
$$
x_\theta = [\kappa \cos(\theta); ~ \kappa \sin(\theta); ~ 0; ~0;
\cdots 0]^T.
$$
In this case, $\manifold = \{x_\theta: \theta \in \Theta\}$ forms a
circle of radius $\kappa$ in the $x(1), x(2)$ plane. The manifold
dimension $\mdim=1$, and the condition number $1/\condition = 1/\kappa$. 
We also refer in our results to the $\mdim$-dimensional volume of \notechecked{removed ``the''} $\manifold$, denoted by $\volume$, which
in this example corresponds to the circumference $2\pi \kappa$ of
the circle.

\notechecked{added the following example and subsubsection.} We conclude this section with a less trivial example of computing the condition number (or, alternatively, reach).

\subsubsection{Example: Complex exponential curve}
\label{sec:complexexpcurve}

For an integer $f_{C}$, set $N:=2f_{C}+1$. Let $\beta:\mathbb{R}\rightarrow\mathbb{C}^{N}$ denote the complex exponential curve defined as
\begin{equation}\label{eq:def of sine manifold}
\beta_{t}=\beta(t)=\left[\begin{array}{c}
e^{-\uniti2\pi f_{C}t}\\
e^{-\uniti2\pi(f_{C}-1)t}\\
\vdots\\
e^{\uniti2\pi(f_{C}-1)t}\\
e^{\uniti2\pi f_{C}t}
\end{array}\right]
\end{equation}
for $t\in\mathbb{R}$. The following result, proved in Appendix \ref{sec:example}, gives an estimate of the condition number (which we denote here by  $1/\tau_{\beta}$) of the complex exponential curve $\beta$.\footnote{Unlike $\beta$, which is a subset of $\mathbb{C}^N$, its real or imaginary parts live in $\mathbb{R}^N$ and are perhaps more consistent with the rest of this paper (which studies submanifolds of $\mathbb{R}^N$). However, finding the condition number of $\mbox{re}(\beta)$ or $\mbox{im}(\beta)$ is far more tedious and therefore not pursued here for the sake of the clarity of our exposition.} The reader may refer to~\cite{yap2013geometric} for related computations concerning the complex exponential curve.

\begin{lemma}\label{lemma:example}
For the complex exponential curve $\beta$ in $\mathbb{C}^N$ (as defined in \eqref{eq:def of sine manifold}), let $1/\tau_\beta$ denote its condition number. Then, for some integer $N_{\text{sine}}$ and (known) constant $\alpha_{\text{sine}}<1$, the following holds if $N>N_{\text{sine}}$:
$$
\alpha_{\text{sine}}\sqrt{N} \le \tau_{\beta} \le \sqrt{N}.
$$
\end{lemma}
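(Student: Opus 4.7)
The plan is to invoke Federer's characterization of reach,
$\tau_{\beta}^{-1}=\sup_{s\ne t}\frac{2\,\|P^{\perp}_{T_{\beta(s)}\beta}(\beta(t)-\beta(s))\|}{\|\beta(t)-\beta(s)\|^{2}}$,
and to exploit the translational symmetry of $\beta$. Because $\beta(t+s)=D_{s}\beta(t)$ for the unitary diagonal matrix $D_{s}=\operatorname{diag}(e^{\uniti 2\pi ks}:-f_{C}\le k\le f_{C})$, the Federer ratio is translation invariant, and by the $1$-periodicity of $\beta$ it suffices to fix $s=0$ and consider $t\in(0,1)$. A direct calculation recasts every relevant inner product through the Dirichlet kernel $D_{N}(t)=\sum_{k=-f_{C}}^{f_{C}}e^{\uniti 2\pi kt}=\sin(N\pi t)/\sin(\pi t)$: one obtains $\|\beta(t)-\beta(0)\|^{2}=2(N-D_{N}(t))$, $\mathrm{Re}\langle\beta(t)-\beta(0),\beta'(0)\rangle=-D_{N}'(t)$, $\|\beta'(0)\|^{2}=-D_{N}''(0)$, and the orthogonality $\mathrm{Re}\langle\beta'(0),\beta''(0)\rangle=0$ (a consequence of $\sum_{k}k^{3}=0$). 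The Federer ratio then reduces to
\[
R(t)^{2}\;=\;\frac{(N-D_{N}(t))^{2}}{2(N-D_{N}(t))-(D_{N}'(t))^{2}/|D_{N}''(0)|},
\]
and the lemma is equivalent to $\alpha_{\text{sine}}^{2}N\le\inf_{t\in(0,1)}R(t)^{2}\le N$.

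The upper bound is immediate from the principal curvature. Since $\beta'(0)\perp\beta''(0)$, one has $\kappa=\|\beta''(0)\|/\|\beta'(0)\|^{2}$, and the standard power-sum identities for $\sum k^{2}$ and $\sum k^{4}$ give the closed form $\kappa^{2}=3(3f_{C}^{2}+3f_{C}-1)/[5f_{C}(f_{C}+1)N]$; elementary arithmetic shows $\kappa^{2}\ge 1/N$ for every $f_{C}\ge 1$, and combining with $\tau_{\beta}\le 1/\kappa$ delivers $\tau_{\beta}\le\sqrt{N}$.

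For the lower bound I would split $(0,1)$ at a cutoff $\delta$. In the \emph{far regime} $t\in[\delta,1-\delta]$, the classical estimate $|D_{N}(t)|\le 1/\sin(\pi t)$ gives $D_{N}(t)\le(1-2\alpha_{\text{sine}}^{2})N$ for all $N\ge N_{\text{sine}}(\alpha_{\text{sine}},\delta)$, after which the crude dominance $\|P^{\perp}\|\le\|\beta(t)-\beta(0)\|$ produces $R(t)^{2}\ge(N-D_{N}(t))/2\ge\alpha_{\text{sine}}^{2}N$. In the \emph{near regime} $t\in(0,\delta)\cup(1-\delta,1)$ the tangential projection nearly cancels the full chord, so one Taylor-expands to fourth order: the orthogonality $\mathrm{Re}\langle\beta'(0),\beta''(0)\rangle=0$ together with the power-sum structure $\|\beta^{(j)}(0)\|^{2}=2\sum_{k=1}^{f_{C}}(2\pi k)^{2j}$ force the leading tangential term to cancel, giving $\|P^{\perp}(\beta(t)-\beta(0))\|^{2}=\tfrac{1}{4}t^{4}\|\beta''(0)\|^{2}+O(t^{6})$ and $R(t)\to 1/\kappa\sim\sqrt{5N}/3$ as $t\to 0$; Bernstein-type bounds $\|D_{N}^{(j)}\|_{\infty}\le(2\pi f_{C})^{j}N$ control the remainder uniformly, yielding $R(t)\ge(1-O(\delta^{2}N^{2}))/\kappa$ on the near regime. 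The main obstacle I anticipate is matching the two regimes with compatible constants: the Taylor expansion is sharp only out to $\delta\lesssim 1/N$, whereas the crude far-regime bound is useful only when $t$ is an $O(1)$ distance from $\{0,1\}$, so bridging the intermediate window $1/N\lesssim t\lesssim 1$ requires sharper uniform control of $(D_{N}'(t))^{2}/|D_{N}''(0)|$ against $N-D_{N}(t)$. The sample evaluations $R(1/2)\sim\sqrt{N/2}$ and $\lim_{t\to 0}R(t)=\sqrt{5N}/3$ already force $\alpha_{\text{sine}}\le 1/\sqrt{2}$, so whatever constant the argument ultimately produces must lie strictly below this threshold.
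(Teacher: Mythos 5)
Your reduction is sound and in places cleaner than the paper's: by Federer's tangent-distance characterization of the reach and the translation invariance $\beta(t+s)=D_s\beta(t)$, everything collapses to the single scalar function $R(t)^2=(N-D_N(t))^2\big/\bigl[2(N-D_N(t))-(D_N'(t))^2/|D_N''(0)|\bigr]$, and your identities for $\|\beta(t)-\beta(0)\|^2$, $\mathrm{Re}\langle\beta(t)-\beta(0),\beta'(0)\rangle$, and $\|\beta'(0)\|^2$ all check out. Your upper bound via the constant curvature, $\kappa^2=3(3f_C^2+3f_C-1)/[5f_C(f_C+1)N]\ge 1/N$, is correct and asymptotically even a bit sharper than the paper's, which instead observes that $\beta$ lies on the sphere of radius $\sqrt{N}$, so the radial direction is normal and the origin (at distance $\sqrt N$) is equidistant from the entire curve. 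The paper's lower-bound argument runs in the same two-regime spirit but works directly with the definition of reach (a normal perturbation $\chi$ at $\gamma_s$ and the condition $\mathrm{Re}\langle\chi,\gamma_{s'}-\gamma_s\rangle<N-\langle\gamma_s,\gamma_{s'}\rangle$), which sidesteps the delicate cancellation in your denominator at the cost of carrying a unit-speed reparametrization.

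The obstacle you flag at the end is a genuine gap, and it is exactly where the work lies. Your far-regime estimate $|D_N(t)|\le 1/\sin(\pi t)$ yields $D_N(t)\le(1-2\alpha_{\text{sine}}^2)N$ only for $t\gtrsim c/N$ with $c$ of order $1/(1-2\alpha_{\text{sine}}^2)$, while your near-regime Taylor bound carries an error $O(\delta^2N^2)$ that is $O(1)$ --- not small --- once $\delta\sim 1/N$. So with the tools you list, the two regimes do not meet with a usable constant, and the window $t\asymp 1/N$, where $N\pi t$ sweeps through a full period of $\sin(N\pi t)$ and no small-parameter expansion is available, remains uncovered. The paper closes precisely this window with a dedicated Dirichlet-kernel estimate (Lemma~\ref{lem:(Dirichlet-kernel)}): the uniform bound $|D_N(z)|\le N\bigl(1-(N\pi z)^2/40\bigr)+\alpha_2Nz^2$ on the \emph{whole} interval $|z|\le 2/N$, obtained from the global inequality $|\sin a|\le|a-a^3/40|$ on $[0,2\pi]$ rather than from an asymptotic remainder, together with the side-lobe bound $|D_N(z)|\le\alpha_1N$, $\alpha_1\approx0.23$, for $|z|>2/N$. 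The first of these is the ``sharper uniform control of $(D_N'(t))^2/|D_N''(0)|$ against $N-D_N(t)$'' that you correctly identify as missing; until an estimate of this type (valid for $N\pi t$ of order one, not merely $o(1)$) is supplied, the lower bound $\tau_\beta\ge\alpha_{\text{sine}}\sqrt{N}$ is not established.
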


\section{Stable embeddings of manifolds}
\label{sec:mstable}

In cases where the signal class of interest $\manifold$ forms a low-dimensional submanifold of $\real^\dim$, we have theoretical justification that the information necessary to distinguish and recover signals $x \in \manifold$ can be well-preserved under a sufficient number of compressive measurements $y = \proj x$. In particular, it was first shown in~\cite{baraniuk2009random} that an RIP-like property holds for families of manifold-modeled signals. The result stated that, under a random projection onto $\real^M$, pairwise distances between the points on $\manifold$ are approximately preserved with high probability, provided that $M$, \notechecked{added ``the''} the number of measurements, is large enough. Mainly, $M$ should scale linearly with the dimension $K$ of $\manifold$ and logarithmically with the ambient dimension $N$. The dependence on $N$ was later removed in~\cite{clarkson2008tighter}, which used a different set of assumptions on the manifold to help derive a sharper lower bound on the requisite number of random measurements. Unfortunately, the results given in \cite{clarkson2008tighter} hold only as the isometry constant $\epsilon \rightarrow 0$ in \eqref{eq:rip2}, with asymptotic threshold and constants fixed but unspecified. The manifold properties assumed in~\cite{clarkson2008tighter} are arguably more complicated and less commonly used than the manifold volume and condition number which are at the heart of our results. (On the other hand, there do exist manifolds where the properties assumed in~\cite{clarkson2008tighter} clearly provide a stronger analysis.)

In this section, we establish an improved lower bound on $M$ to ensure a stable embedding of a manifold. We make the same assumptions on the manifold as in our past work~\cite{baraniuk2009random} but provide a measurement bound that is independent of the ambient dimension $N$.  Our bound holds for every $\epsilon\le1/3$ and we provide explicit constants. The proof, presented in Appendix~\ref{sec:Proof of new manifold embedding}, draws from the ideas in generic
chaining \cite{talagrand2005generic}, which have been recently used to develop state-of-the-art RIP results for structured measurement matrices in CS~\cite{Rudelson-Dudley,tropp2010beyond,ra09,rombergGaussian2009,rombergNeelsh2010,rauhut10re,eftekhari2011matched,krahmer2012suprema}. As in~\cite{clarkson2008tighter}, we control the failure probability of the manifold embedding by forming a so-called {\em chain} on a sequence of increasingly finer covers on the index set of the random process~\cite{talagrand2005generic,ledoux2011probability}. Aside from delivering an improved bound (and also allowing us to study Q1 and Q2 in Section~\ref{sec:srpe}), we hope that our exposition in this paper will encourage yet more researchers in the field of CS to use this powerful technique.

\begin{thm} \label{thm:new manifold embedding}
	Let $\manifold$ be a compact $\mdim$-dimensional Riemannian
	submanifold of $\real^\dim$ having condition number $1/\condition$ and
	volume $\volume$. Conveniently assume that\footnote{\notechecked{Edited the footnote} Theorem~\ref{thm:new manifold embedding} still holds, with a worse (larger) lower bound in \eqref{eq:mmeasmain}, after relaxing the assumption in \eqref{eq:mild assump on volume 1 main}. One example of a manifold that does satisfy the assumption in \eqref{eq:mild assump on volume 1 main} is the complex exponential curve described in Section~\ref{sec:complexexpcurve}, as long as $N \ge 7$.}
	\begin{equation}\label{eq:mild assump on volume 1 main}
		\frac{\volume}{\condition^K}\ge \left(\frac{21}{2\sqrt{K}}\right)^{K}.
	\end{equation}
	Fix $0 <\eps \le 1/3$ and $0 < \rho < 1$. Let $\proj$ be a random $\pdim \times
	\dim$ matrix populated with i.i.d.\ zero-mean Gaussian random variables with variance of $1/M$ with
	\begin{equation}
		M  \ge 18\epsilon^{-2}\max\left(24K+2K\log\left(\frac{\sqrt{K}}{\condition\epsilon^{2}}\right)+\log(2\volume^{2}),\log\left(\frac{8}{\failprob}\right)\right).
		\label{eq:mmeasmain}
	\end{equation}
	Then with probability at least $1-\rho$ the
	following statement holds: For every pair of points $x_1,x_2 \in
	\manifold$,
	\begin{equation}
		(1-\epsilon) \norm{x_1-x_2} \le \norm{\proj x_1 - \proj x_2} \le
		(1+\epsilon) \norm{x_1-x_2}. \label{eq:mmain}
	\end{equation}
\label{theo:manifoldjl}
\end{thm}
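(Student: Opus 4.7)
The natural reformulation is to define the normalized secant set
$$U = \left\{\frac{x_1-x_2}{\|x_1-x_2\|} : x_1,x_2 \in \manifold,\; x_1 \neq x_2\right\} \subset \mathbb{S}^{N-1},$$
so that \eqref{eq:mmain} is equivalent to $\sup_{u \in U}\bigl|\,\|\proj u\|^2 - 1\,\bigr| \le 2\epsilon - \epsilon^2$. I will study the centered random process $Z_u := \|\proj u\|^2 - 1$, whose increments $Z_u - Z_v$ are sub-exponential with mixed sub-Gaussian / sub-exponential tails controlled by $\|u-v\|$ (Bernstein-type inequality for quadratic forms in Gaussians). A single-point concentration shows $|Z_u| \le \epsilon$ with probability $1 - 2e^{-c M \epsilon^2}$; the task is to upgrade this to a uniform bound over $U$ via a chaining argument, paying only for the metric entropy of $U$ rather than $\log(\text{net size})$ at the finest scale alone.

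I would construct a hierarchy of $\eta_j$-nets $U_j \subset U$ at dyadic scales $\eta_j = 2^{-j}\eta_0$, with $\eta_0$ chosen so that the coarsest scale covers $U$ using the ambient sphere, and terminate once $\eta_j$ drops below a critical resolution $\eta_\star \sim \epsilon^2 \tau$ where the random process becomes sufficiently smooth to absorb the tail into the leading term. The crucial ingredient is a covering-number bound $|U_j| \lesssim (V/\tau^K)\cdot \eta_j^{-2K}$, which I obtain from the Toolbox in two regimes. For \textbf{far secants} (where $\|x_1-x_2\| \gtrsim \tau$), an $\eta_j$-net on $U$ is built from an $\eta_j\tau$-net of $\manifold$; by the volume/condition-number covering lemma this costs $(V/\tau^K)\eta_j^{-K}$ points on each side, hence $\eta_j^{-2K}$ pairs. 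For \textbf{near secants} (where $\|x_1-x_2\| \lesssim \tau$), the bounded-curvature property $\|\curv\| \le 1/\tau$ forces the secant direction to be within $O(\|x_1-x_2\|/\tau)$ of a tangent vector at a nearby basepoint, so one can cover by first choosing a point in a net on $\manifold$ and then a direction in a net on the unit sphere of its tangent space; this yields the same order count and explains why the factor inside the log in \eqref{eq:mmeasmain} is $\sqrt{K}/(\tau\epsilon^2)$.

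With these covers in hand, I would apply a Bernstein-style union bound at each level: for each pair $(u_j, u_{j-1})$ with $u_j \in U_j$ and $u_{j-1}$ its parent in $U_{j-1}$, the increment $|Z_{u_j} - Z_{u_{j-1}}|$ is controlled by a term depending on $\sqrt{M}\eta_j$ and $M \eta_j^2$. Summing over dyadic levels yields a telescoping bound $\sup_u |Z_u| \lesssim \sum_j \eta_j \sqrt{(K \log \eta_j^{-1} + \log V^2)/M}$, matching the leading geometric chaining rate; absorbing the geometric sum into its largest term produces exactly the max inside \eqref{eq:mmeasmain}, while the $\log(8/\rho)$ term comes from adjusting each level's failure budget by a geometric allotment. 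I expect the main technical obstacle to be the coupling of the two covering regimes at the transition scale $\eta_j \sim 1$: a naive cover of the near-diagonal piece by a tangent-bundle net can overcount, so a careful partition of $U$ into ``local'' and ``cross'' secants (mirroring $\dneg$ and $\dpos$ in the notation list) is needed, together with a demonstration that the hypothesis \eqref{eq:mild assump on volume 1 main} makes the volume term dominant and keeps the constants clean.
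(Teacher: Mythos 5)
Your architecture is the paper's: pass to the normalized secant set $U(\manifold)$, cover it at dyadic scales by treating long secants (via secants between points of a net on $\manifold$) and short secants (via nets on tangent balls at net points) separately, and chain with Gaussian concentration plus a union bound at each level; your accounting for the $\log(2\volume^2)$ and $\log(8/\rho)$ terms is also right. But two of your steps would fail as stated. The first is the covering construction. You build the scale-$\eta_j$ cover of the far secants from an $(\eta_j\condition)$-net of $\manifold$ and put the long/short threshold at $\|x_1-x_2\|\gtrsim\condition$. With those choices the short regime does not close: a chord of length $l_2$ is only within $O(\sqrt{l_1/\condition}+l_2/\condition)$ of a tangent vector at a net point $l_1$ away (Lemma~\ref{lemma:distance of a shorrt chord to a nearby tangent plane}), which is $O(1)$ rather than $O(\eta_j)$ when $l_2\sim\condition$ and $l_1\sim\eta_j\condition$. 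You flag this transition-scale issue yourself but leave it unresolved, and it is the crux of the whole argument: the paper takes the manifold net at resolution $\Mresolution\sim\condition\delta^2$, \emph{quadratic} in the target secant resolution $\delta$ (exploiting Lemma~\ref{lemma:implicit in Clarkson}: perturbing the endpoints by $r$ moves the direction of a chord of length $\kappa\sqrt{r}$ by only $4\kappa^{-1}\sqrt{r}$), and places the long/short threshold at $\|x_1-x_2\|\sim\condition\delta$, so that both regimes achieve error $\delta$ and the entropy bound of Lemma~\ref{lemma:Tj is a refinement of U(M)} follows. Without these scales the claimed covering numbers, and hence \eqref{eq:mmeasmain}, are not established.

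The second gap is the truncation of the chain at $\eta_\star\sim\epsilon^2\condition$. Stopping at a finite scale forces you to bound the remainder $\sup_u|Z_u-Z_{\pi_J(u)}|$, and the only a priori modulus of continuity for $u\mapsto\|\proj u\|^2$ over an arbitrary subset of the sphere is through $\sing_M(\proj)\approx\sqrt{N/M}$; making $\sqrt{N/M}\,\eta_\star\lesssim\epsilon$ reintroduces the ambient dimension $N$ into the measurement bound, which is exactly the dependence the theorem is designed to remove (and is how the $\log \dim$ factor arose in the earlier analysis of~\cite{baraniuk2009random}). The paper avoids this by running the chain over all $j\in\ints$ with no termination: the telescoping sum converges because $\|\pi_{j+1}(y)-\pi_j(y)\|\le 2^{-j+1}\delta$, and the exponentially growing cover cardinalities $4^{4jK}$ are beaten by demanding a deviation threshold that grows linearly in $j$ at each level, whose one-sided Gaussian tail decays like $e^{-cjM}$ (Lemmas~\ref{lemma:chaining} and~\ref{lemma:basics for fixed}). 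Replace the truncation by the infinite chain (or supply a genuinely $N$-free remainder bound) and correct the covering scales, and your outline becomes the paper's proof.
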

The proof of the above result can be found in Appendix~\ref{sec:Proof of new manifold embedding}. In essence, manifolds with higher volume or
with greater curvature have more complexity, which leads to an increased number of measurements~(\ref{eq:mmeasmain}).
By comparing (\ref{eq:rip2}) with (\ref{eq:mmain}), we see a strong
analogy to the RIP of order $2\sparsity$. This theorem establishes
that, like the class of $\sparsity$-sparse signals, a collection of
signals described by a $\mdim$-dimensional manifold $\manifold
\subset \real^\dim$ can have a stable embedding in an
$\pdim$-dimensional measurement space. Moreover, the requisite
number of random measurements $\pdim$ is once again almost linearly
proportional to the information level (or number of degrees of
freedom) $\mdim$. It is important to note that in \eqref{eq:mmeasmain}, the combined dependence on the manifold dimension $K$, condition number $1/\condition$, and	volume $\volume$ cannot, generally speaking, be improved. \notechecked{edited the rest of this paragraph for clarity} In particular, consider the case where $\mathcal{M}$ is a $K$-dimensional unit ball in $\mathbb{R}^{N}$, that is $\mathcal{M}=\mathcal{B}_{K}$. Clearly, in this case, $\tau=1$. Additionally, from \eqref{eq:bnds on VkBk}, we observe that $V_{\mathcal{M}}=V_{\mathcal{B}_{K}}\propto K^{-K/2}$
and so $\log V_{\mathcal{M}}^{2}\propto -K\log K$. As a result, plugging for $V_{\mathcal{M}}$ back into \eqref{eq:mmain} cancels the term $2K\log\sqrt{K}=K\log K$ on the right hand side of \eqref{eq:mmeasmain}. It follows that the lower bound in \eqref{eq:mmeasmain} scales with $K$ (rather than $K\log K$) in this case. We conclude that, in this special case, the lower bound in \eqref{eq:mmain} is optimal (up to a constant factor).

As was the case with the RIP for sparse signal processing, this sort of result has a number of possible implications for manifold-based signal processing. First, individual signals obeying a manifold model can be acquired and stored efficiently using compressive measurements, and it is unnecessary to employ the manifold model itself as part of the compression process. Rather, the model needs only to be used for signal understanding from the compressive measurements. Second, problems such as Q1 (signal recovery) and Q2 (parameter estimation) can be addressed, as we discuss in Section~\ref{sec:srpe}. Aside from this theoretical analysis, we have reported promising experimental recovery/estimation results with various classes of parametric signals~\cite{mbwPhdThesis,davenport2007sfc}. Also, taking a different analytical perspective (a statistical one, assuming additive white Gaussian measurement noise), estimation-theoretic quantities such as the Cramer-Rao Lower Bound (for a specialized set of parametric problems) have been shown to be preserved in the compressive measurement space as a consequence of the stable embedding~\cite{ramasamy2012compressive}. Third, the stable embedding results can also be extended to the case of multiple manifolds that are simultaneously embedded~\cite{davenport2007sfc}; this allows both the classification of an observed object to one of several possible models (different manifolds) and the estimation of a parameter within that class (position on a manifold). Fourth, collections of signals obeying a manifold model (such as multiple images of a scene photographed from different perspectives) can be acquired using compressive measurements, and the resulting manifold structure will be preserved among the suite of measurement vectors in $\real^\pdim$ \cite{davenport2010joint,park2012geometric}. Fifth, we have provided empirical and theoretical support for the use of manifold learning in the reduced-dimensional space~\cite{hegde2007rpm}; this can dramatically simplify the computational and storage demands on a system for processing large databases of signals.

\notechecked{in the remainder of this section, added example to give reader some insight and to offer pointers into the various proof sections} Before presenting an application of Theorem~\ref{thm:new manifold embedding} in the next section, we would like to outline the chaining argument used in its proof through an example. Suppose that $\manifold$ is the unit circle embedded in $\mathbb{R}^N$ and that we observe this manifold through a measurement operator $\Phi\in\mathbb{R}^{M\times N}$. To study the quality of embedding, we first need to identify the set of all secants connecting two points in $\manifold$. In this example, the set of all normalized secants of $\manifold$ (which we denote by $U(\manifold)$) also forms a unit circle and equals $\manifold$, i.e., $U(\manifold)=\manifold$. Let $\{C_j\}$ be a sequence of increasingly finer covers on $\manifold$ (or equivalently on $U(\manifold)$). Constructing a sequence of covers for the secants of a manifold in general is studied in Appendix~\ref{sec:Constructing a Sequence of Covers for U(M)}.  For an arbitrary normalized secant $y=(x_1-x_2)/\|x_1-x_2\|$ with $x_1,x_2\in\manifold$, let $\pi_j(y)$ represent the nearest point to $y$ on the $j$th cover (for every $j\ge 0$). This construction is illustrated in Figure~\ref{fig:Example}.
\begin{figure}[t]
\begin{center}
\includegraphics[width=5.5in]{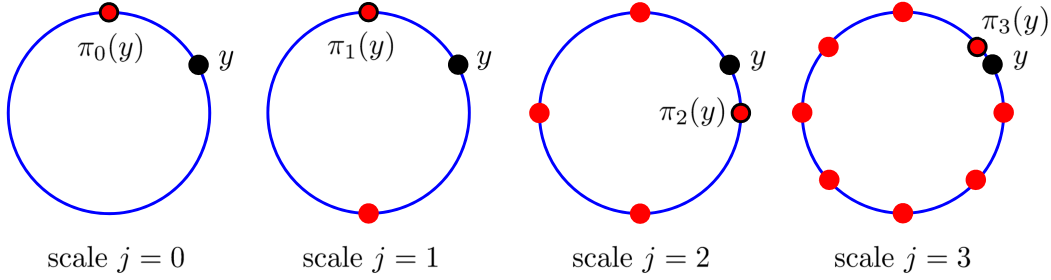}
\end{center}
\caption{\small\sl A sequence of increasingly finer covers for the unit circle. Also shown is an arbitrary point $y$ and its projection $\pi_j(y)$ onto each cover.}
\label{fig:Example}
\end{figure}
We can use the telescoping sum 
$$
y =\pi_0(y)+ \sum_{j\ge 1} (\pi_{j}(y)-\pi_{j-1}(y))
$$
to write that
\begin{align}\label{eq:example}
		& \pr{\sup_{x_1,x_2\in\manifold} \frac{ \|\Phi x_1-\Phi x_2\|}{\|x_1-x_2\|}>1+\epsilon}\nonumber\\
		& = \pr{\sup_{y\in U(\manifold)} \|\Phi y\|>1+\epsilon}\nonumber\\
		& = \pr{\sup_{y\in \manifold} \|\Phi y\|>1+\epsilon}\nonumber\\
		& \le \pr{\sup_{{y\in\manifold}}\|\Phi \pi_0(y)\|+\sum_{j\ge1} \sup_{y\in\manifold} \|\Phi(\pi_{j}(y)-\pi_{j-1}(y))\|>1+\sum_{j\ge0} \epsilon_j}\nonumber\\
		& \le \pr{\max_{{p\in C_0}}\|\Phi p\|+\sum_{j\ge1} \max_{(p,q)\in C_{j}\times C_{j-1}} \|\Phi  (p-q)\|>1+\sum_{j\ge0} \epsilon_j}\nonumber\\
		& \le \#C_0\cdot\max_{p\in C_0} \pr{\|\Phi p\|>1+\epsilon_0}	+ \sum_{j\ge1} \#C_{j}\cdot\#C_{j-1} \max_{(p,q)\in C_{j+1}\times C_j} \pr{\|\Phi (p-q)\|>\epsilon_j},	
\end{align}
where $\{\epsilon_j\}$ is an exponentially-fast decreasing sequence of constants such that $\sum_j \epsilon_j=\epsilon$. The third line uses the fact that $U(\manifold)=\manifold$ here. The last line above uses the union bound. Therefore the failure probability of obtaining a stable embedding of $\manifold$ is controlled by an infinite series that only involves the sequence of covers constructed earlier. As we will see in more detail later, given enough measurements, the (exponentially growing) size of the covers $\#C_j$ can be balanced by the (exponentially decreasing) failure probabilities in the last line of \eqref{eq:example} to guarantee that overall failure probability is exponentially small. A more general version of this chaining argument is detailed in Appendix~\ref{sub:Applying-the-Chaining arg}.

\section{Manifold-based signal recovery and parameter estimation}
\label{sec:srpe}

In this section, we establish theoretical bounds on the accuracy to which problems Q1 (signal recovery) and Q2 (parameter estimation) can be solved under a manifold model. To be specific, let us consider a length-$\dim$ signal $x$ that is not necessarily $K$-sparse, but rather that we assume lives on or near some known $\mdim$-dimensional manifold $\manifold \subset \real^\dim$. From a collection of measurements
$$y = \proj x + \noise,$$
where $\proj$ is a random $\pdim \times \dim$ matrix and $\noise \in
\real^\pdim$ is an additive noise vector, we would like to recover
either $x$ or a parameter $\theta$ that generates $x$.

\notechecked{edited this paragraph to remove assumption that minimizer is unique} For the signal recovery problem, we will consider the following as a method for estimating $x$. Solve the program
\begin{equation}\label{eq:csmrecover}
\min_{x'\in\mathcal{M}}\|y-\Phi x'\|,
\end{equation}
and let $\widehat{x}$, as an estimate of $x$, be a solution of the above program. We also let $x^{*}$ be a solution of the program
\begin{equation}
\min_{x'\in\mathcal{M}}\|x-x'\|\label{eq:csmopt}
\end{equation}
and, therefore, an optimal “nearest neighbor” to $x$ on $\mathcal{M}$. To consider signal recovery successful, we would like to guarantee that $\norm{x-\widehat{x}}$ is not much larger than $\norm{x-x^\ast}$.

\notechecked{edited this paragraph to remove assumption that minimizer is unique} For the parameter estimation problem, where we presume $x \approx
x_\theta$ for some $\theta \in \Theta$, we propose a similar method for estimating $\theta$ from the compressive measurements. Solve the program
\begin{equation}
\min_{\theta'\in\Theta}\|y-\Phi x_{\theta'}\|,\label{eq:parmrecover}
\end{equation}
 and let $\widehat{\theta}$, as an estimate of $\theta$, be a solution of the above program. Also let $\theta^{*}$ be a solution of the program
\begin{equation}
 \min_{\theta'\in\Theta}\|x-x_{\theta'}\|.\label{eq:parmopt}
\end{equation}
Here, $\theta^{*}$ is an “optimal estimate” that could be obtained using the full data $x\in\mathbb{R}^{N}$. (If $x = x_\theta$ exactly for some $\theta$, then $\theta^\ast =
\theta$; otherwise this formulation allows us to consider signals
$x$ that are not precisely on the manifold $\manifold$ in
$\real^\dim$. This generalization has practical relevance; a local
image block, for example, may only approximately resemble a straight
edge, which has a simple parameterization.) To consider parameter
estimation successful, we would like to guarantee that
$d_\Theta(\widehat{\theta},\theta^\ast)$ is small.

As we will see, bounds pertaining to accurate signal recovery can
often be extended to imply accurate parameter estimation as well.
However, the relationships between distance $d_\Theta$ in parameter
space and distances $d_\manifold$ and $\|\cdot\|$ in the signal
space can vary depending on the parametric signal model under study.
Thus, for the parameter estimation problem, our ability to provide
generic bounds on $d_\Theta(\widehat{\theta},\theta^\ast)$ will be
restricted. In this paper we focus primarily on the signal recovery
problem and provide preliminary results for the parameter estimation
problem that pertain most strongly to the case of isometric
parameterizations.

In this paper, we do not confront in depth the question of how a recovery program such as (\ref{eq:csmrecover}) can be efficiently solved. Some efforts in this direction have recently appeared in~\cite{shah2011iterative,iwen2012approximation,hegde2012signal}.
In~\cite{shah2011iterative}, the authors guarantee the success of a gradient-projection algorithm for recovering a signal that lives exactly on the manifold from noisy compressive measurements. The keys to the success of this method are a stable embedding of the manifold (as is guaranteed by~\cite{baraniuk2009random} or our Theorem~\ref{thm:new manifold embedding}) and the knowledge of the projection operator onto the manifold within $\real^N$.
In~\cite{iwen2012approximation}, the authors construct a geometric multi-resolution approximation of a manifold using a collection of affine subspaces. A major contribution of that work is a recovery algorithm that works by assigning a measured signal to the closest projected affine subspace in the compressed domain. Two recovery results are presented. In the first of these, the number of measurements is independent of the ambient dimension and the recovery error holds for any given signal in the ambient space. All of this is analogous to our Theorem~\ref{theo:bound3} (a probabilistic instance-optimal bound in $\ell_2$), but the recovery is guaranteed for a particular algorithm. Unlike that result, however, our Theorem~\ref{theo:bound3} includes explicit constants, allows for the consideration of measurement noise, and falls nearly for free out of our novel analytical framework based on chaining. A second result appearing in~\cite{iwen2012approximation} provides a special type of deterministic instance-optimal bound for signal recovery and involves embedding arguments that extend those in~\cite{baraniuk2009random}. It would be interesting to see if our improved embedding guarantees in the present paper could now be used to remove the dependence on the ambient dimension appearing in that result.
In~\cite{chen2010compressive}, the authors provide a Bayesian treatment of the signal recovery problem using a mixture of (low-rank) Gaussians for approximating the manifold.
Furthermore, some discussion of signal recovery is provided in~\cite{baraniuk2009random}, with application-specific examples provided in~\cite{mbwPhdThesis,davenport2007sfc}. Unfortunately, it is difficult to propose a single general-purpose algorithm for solving (\ref{eq:csmrecover}) in $\real^\pdim$, as even the problem (\ref{eq:csmopt}) in $\real^\dim$ may be difficult to solve depending on certain nuances (such as topology) of the individual
manifold.
Additional complications arise when the manifold $\manifold$ is non-differentiable, as may happen when the signals $x$ represent 2-D images. However, just as a multiscale regularization can be incorporated into Newton's method for solving (\ref{eq:csmopt}) (see~\cite{WakinSPIEiam}), an analogous regularization can be
incorporated into a compressive measurement operator $\proj$ to facilitate Newton's method for solving (\ref{eq:csmrecover}) (see~\cite{donohoECS,mbwPhdThesis,msSmashed}). For manifolds that lack differentiability, additional care must be taken when applying results such as Theorem~\ref{theo:manifoldjl}.
We therefore expect that the research on signal recovery and approximation based on low-dimensional manifold models will witness even more growth in the future.

It is  also crucial to study the theoretical limits and guarantees in this problem; in what follows, we will consider both deterministic and
probabilistic instance-optimal bounds for signal recovery and
parameter estimation, and we will draw comparisons to the
sparsity-based CS results of Section~\ref{sec:csresults}. Our bounds
are formulated in terms of generic properties of the manifold (as
mentioned in Section~\ref{sec:mmodels}), which will vary from signal
model to signal model. In some cases, calculating these may be
possible, whereas in other cases it may not. Nonetheless, we feel
the results in this paper highlight the relative importance of these
properties in determining the requisite number of measurements.

\subsection{A deterministic instance-optimal bound in $\ell_2$}
\label{sec:deter}

\notechecked{revised paragraph to mention noise} We begin by seeking an instance-optimal bound. That is,
for a measurement matrix $\proj$ that meets (\ref{eq:mmain}) for all
$x_1, x_2 \in \manifold$, we seek an upper bound for the relative
reconstruction error
$$
\frac{\norm{x-\widehat{x}}}{\norm{x-x^\ast}}
$$
that holds uniformly for all $x \in \real^\dim$. We would also like this bound to account for noise in the measurements. In this section we
consider only the signal recovery problem; however, similar bounds would apply to parameter estimation. We have the following result, which is proved in Appendix~\ref{app:kappa}.

\begin{thm}
\notechecked{adjusted theorem to account for noise} Fix $0 <\eps \le 1/3$ and $0 < \rho < 1$. Let $\manifold$ be as described in Theorem~\ref{thm:new manifold embedding}. Assume that $\proj$ satisfies \eqref{eq:mmain} for all pairs of points $x_1,x_2\in\manifold$. Take
$x \in \real^\dim$, let $y = \proj x+n$, and let the recovered estimate $\widehat{x}$ and \notechecked{replaced ``the'' with ``an'' so as not to imply minimizer is unique} an optimal estimate $x^\ast$ be as defined in (\ref{eq:csmrecover}) and (\ref{eq:csmopt}). Then the following holds:
\begin{equation}
\norm{x-\widehat{x}} \le
 (1+2\epsilon)\left(2\sing_M(\proj)+1\right)\norm{x-x^\ast}+(2+4\epsilon)\|n\|,\label{eq:kappabound}
\end{equation}
where $\sing_M(\proj)$ is the largest singular value of $\proj$.
\label{thm:kappa}
\end{thm}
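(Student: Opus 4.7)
The plan is to bound $\|x-\widehat{x}\|$ by splitting it as $\|x-x^{*}\|+\|x^{*}-\widehat{x}\|$ via the triangle inequality, and then estimating $\|x^{*}-\widehat{x}\|$ using the hypothesis that $\Phi$ embeds $\manifold$. The key point is that both $x^{*}$ and $\widehat{x}$ lie on $\manifold$, so \eqref{eq:mmain} applies to their difference, giving
\[
\|x^{*}-\widehat{x}\|\le \frac{1}{1-\epsilon}\,\|\Phi x^{*}-\Phi\widehat{x}\|.
\]
By contrast, the error $x-x^{*}$ is in general off the manifold, so when $\Phi$ is applied to it we can only estimate $\|\Phi(x-x^{*})\|\le \sing_{M}(\Phi)\,\|x-x^{*}\|$; this is the reason $\sing_{M}(\Phi)$ enters the final bound.

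Next I would exploit the defining optimality of $\widehat{x}$. Since $x^{*}\in\manifold$ and $\widehat{x}$ solves \eqref{eq:csmrecover}, we have $\|y-\Phi\widehat{x}\|\le \|y-\Phi x^{*}\|$. Combined with the triangle inequality and $y=\Phi x+n$, this yields
\[
\|\Phi x^{*}-\Phi\widehat{x}\|\le\|\Phi x^{*}-y\|+\|y-\Phi\widehat{x}\|\le 2\|\Phi x^{*}-y\|\le 2\|\Phi(x-x^{*})\|+2\|n\|\le 2\sing_{M}(\Phi)\|x-x^{*}\|+2\|n\|.
\]
Substituting back gives
\[
\|x-\widehat{x}\|\le\|x-x^{*}\|+\frac{2\sing_{M}(\Phi)}{1-\epsilon}\|x-x^{*}\|+\frac{2}{1-\epsilon}\|n\|.
\]

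The last step is a simple numerical cleanup: use the elementary inequality $1/(1-\epsilon)\le 1+2\epsilon$, which is valid for every $\epsilon\le 1/2$ (and in particular for $\epsilon\le 1/3$), since $(1-\epsilon)(1+2\epsilon)=1+\epsilon-2\epsilon^{2}\ge 1$ on that range. Applying this to both the $\sing_{M}(\Phi)$ term and the noise term, and then absorbing the additive constant $1$ into the factor multiplying $\sing_{M}(\Phi)$, yields exactly the target form $(1+2\epsilon)(2\sing_{M}(\Phi)+1)\|x-x^{*}\|+(2+4\epsilon)\|n\|$ in \eqref{eq:kappabound}.

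This argument is almost entirely routine; there is no real obstacle. The one conceptual subtlety worth emphasizing is the asymmetric role of the stable embedding: \eqref{eq:mmain} can only be invoked for secants whose endpoints are both on $\manifold$, so $x-x^{*}$ must be handled through the worst-case operator norm $\sing_{M}(\Phi)$ rather than through the embedding constants $1\pm\epsilon$. This is also why the bound is of the deterministic instance-optimal type and, consistently with the sparsity-based discussion around \eqref{eq:devore}, cannot be made independent of $\sing_{M}(\Phi)$ without passing to a probabilistic statement.
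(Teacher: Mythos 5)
Your proposal is correct and follows essentially the same argument as the paper's proof: both split $\|x-\widehat{x}\|$ via the triangle inequality through the manifold point, convert $\|\widehat{x}-x^{*}\|$ to $\|\Phi\widehat{x}-\Phi x^{*}\|/(1-\epsilon)$ using the stable embedding, invoke the optimality of $\widehat{x}$ in \eqref{eq:csmrecover} to get the factor of $2$, and bound $\|\Phi(x-x^{*})\|$ by $\sing_M(\proj)\|x-x^{*}\|$ since $x-x^{*}$ need not be a secant of $\manifold$. The paper merely phrases this as a worst-case optimization over candidate pairs $w_a,w_b\in\manifold$ with distance-scaling factor $\alpha\in[1-\epsilon,1+\epsilon]$, but the chain of inequalities is identical to yours.
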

In particular, it is interesting to consider the case where $\proj$ is a random Gaussian matrix as described in Theorem~\ref{thm:new manifold embedding}.
It is well-known, e.g., \cite[Corollary 5.35]{vershynin2010introduction}, that the nonzero singular values of $\proj$ satisfy the following:
\begin{equation}
	\label{eq:1st bound on spec norm of gaussian matrix}
	\pr{\sing_M(\proj)>\sqrt{\frac{N}{M}}+1+t}\le e^{-t^2M/2},
\end{equation}
\begin{equation}
	\label{eq:2nd bound on spec norm of gaussian matrix}
	\pr{\sing_m(\proj)<\sqrt{\frac{N}{M}}-1-t}\le e^{-t^2M/2},
\end{equation}
for $t>0$. Here, $\sing_M(\proj)$ and $\sing_m(\proj)$ are the largest and smallest (nonzero) singular values of $\proj$, respectively. Suppose that $M$ satisfies \eqref{eq:mmeasmain} so that the promises of Theorem~\ref{thm:new manifold embedding} hold except for a probability of at most $\rho$. Set $t=1$ in \eqref{eq:1st bound on spec norm of gaussian matrix}. Now since $e^{-M/2}\le \rho$, we have that
$$
\sing_M(\proj)\le \sqrt[•]{\frac{N}{M}}+2,
$$
except with a probability of at most $\rho$. In combination with Theorem~\ref{thm:kappa}, it finally follows that, except with a probability of at most $2\rho$, $\proj$ satisfies \eqref{eq:mmain} for every pair of points on the manifold and that \notechecked{edited remainder of this paragraph to account for the revisions to Theorem~\ref{thm:kappa}.}
\begin{equation}
	\norm{x-\widehat{x}} \le
 (1+2\epsilon)\left(2\sqrt{\frac{N}{M}}+5\right)\norm{x-x^\ast}+(2+4\epsilon)\|n\|,\label{eq:kappabound}
\end{equation}
for every $x\in\real^N$.  Here, $\widehat{x}$ and $x^*$  are as defined in \eqref{eq:csmrecover} and  \eqref{eq:csmopt}. In the noise-free case ($\|n\|=0$) and  as $\frac{\pdim}{\dim} \rightarrow 0$, the bound on the right hand side of (\ref{eq:kappabound}) grows as $(2+4\epsilon)\sqrt{\frac{\dim}{\pdim}}$. Unfortunately, this is not desirable for signal recovery. Supposing, for example, that we wish to ensure $\norm{x-\widehat{x}} \le \KappaOne \norm{x-x^\ast}$ for all $x \in \real^\dim$ (assuming no measurement noise), then using the bound (\ref{eq:kappabound}) we would require that $\pdim \ge \KappaTwo \dim$ regardless of the dimension $\mdim$ of the manifold.

The weakness of this bound is a geometric necessity; indeed, the bound itself is quite tight in general, as the following simple example illustrates. The proof can be found in Appendix~\ref{sec:Proof of l2 l2 bound is sharp}.

\begin{prop} \label{lemma:l2 l2 bound is sharp}
Fix $0<\epsilon\le1/3$. Let $\manifold$ denote the line segment in $\real^\dim$ joining the origin and $e_1:=[1,0,0,\dots,0]^T$. Suppose that $\proj$ satisfies \eqref{eq:mmain} for all $x_1,x_2\in\manifold$ and that $\sing_m(\proj)\ge 8/3$. \notechecked{edited remainder of this proposition to clarify there is no measurement noise.} Then, there exists a point $x\in\real^N$ such that if $y=\Phi x$ (with no measurement noise), and if $\widehat{x}$ and $x^\ast$ are defined in \eqref{eq:csmrecover} and \eqref{eq:csmopt},
$$
\frac{\|x-\widehat{x}\|}{\|x-x^\ast\|}\ge \frac{1}{2(1+\epsilon)}\sing_m(\proj).
$$
\end{prop}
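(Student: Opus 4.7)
The plan is to exhibit a bad point $x \in \real^\dim$ that is orthogonal to $\manifold$ (so that the nearest neighbor on $\manifold$ is $x^{\ast} = 0$) but whose image $\proj x$ sits directly on top of $\proj e_1$ in the measurement space. Since the recovery program only sees the image of $\manifold$, it will return $\widehat{x} = e_1$, the opposite endpoint of $\manifold$. This forces $\|x - \widehat{x}\|$ to be at least the length of $\manifold$ (i.e.\ at least $1$), while $\|x - x^{\ast}\|$ can be kept of order $1/\sing_m(\proj)$.

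The natural candidate is $x := c\, a_\perp$, where $a_\perp$ is the component of $\proj^T \proj e_1$ orthogonal to $e_1$ and the scalar $c$ is chosen so that $\langle \proj x, \proj e_1\rangle = \|\proj e_1\|^2$. The point is that this forces $\proj x$ to project onto exactly $\proj e_1$ along the image segment $\proj\manifold$. The two key verifications are then (i) $x^{\ast} = 0$, which is immediate because $x \perp e_1$, and (ii) $\widehat{x} = e_1$, which follows from a one-line inner-product computation showing that the unconstrained minimizer of $t \mapsto \|\proj x - t \proj e_1\|$ equals $1$ and hence lies in $[0,1]$. Since $x \perp e_1$, Pythagoras then gives $\|x - \widehat{x}\|^2 = \|x\|^2 + 1 \ge 1$ and $\|x - x^{\ast}\| = \|x\| = \|\proj e_1\|^2/\|a_\perp\|$, so the ratio is at least $\|a_\perp\|/\|\proj e_1\|^2$.

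What remains --- and what I expect to be the only mildly technical step --- is a lower bound on $\|a_\perp\|$. Since $\proj e_1$ lies in the range of $\proj$, the SVD of $\proj$ gives $\|\proj^T(\proj e_1)\| \ge \sing_m(\proj)\|\proj e_1\|$; together with the orthogonal decomposition identity $\|\proj^T \proj e_1\|^2 = \|\proj e_1\|^4 + \|a_\perp\|^2$ this yields $\|a_\perp\|^2 \ge \|\proj e_1\|^2\bigl(\sing_m(\proj)^2 - \|\proj e_1\|^2\bigr)$. In particular $a_\perp \neq 0$, because the hypotheses $\sing_m(\proj) \ge 8/3$ and $\|\proj e_1\| \le 1 + \epsilon \le 4/3$ imply the bracketed factor is strictly positive. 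Combining everything and using that $s \mapsto \sqrt{\sing_m(\proj)^2 - s^2}/s$ is decreasing, together with $\|\proj e_1\| \le 1+\epsilon$ from the embedding hypothesis \eqref{eq:mmain}, produces
$$\frac{\|x - \widehat{x}\|}{\|x - x^{\ast}\|} \;\ge\; \frac{\sqrt{\sing_m(\proj)^2 - (1+\epsilon)^2}}{1+\epsilon}.$$

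The proof concludes with an elementary algebraic check that this last quantity exceeds $\sing_m(\proj)/(2(1+\epsilon))$ precisely when $\sing_m(\proj) \ge 2(1+\epsilon)/\sqrt{3}$, a condition comfortably implied by $\sing_m(\proj) \ge 8/3$ together with $\epsilon \le 1/3$.
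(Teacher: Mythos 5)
Your proof is correct, but it is a genuinely different construction from the one in the paper. The paper perturbs the far endpoint: it sets $x = e_1 + \dev u$ with $\dev u$ in the row span of $\proj$ chosen so that $\proj x = 0$, which is possible with $\|\dev u\| \le (1+\epsilon)/\sing_m(\proj)$ precisely because $\|\proj e_1\| \le 1+\epsilon$ while $\proj$ expands row-span vectors by at least $\sing_m(\proj)$; the decoder then returns $\widehat{x}=0$ even though $x$ is within $\dev$ of $e_1$, and the bound follows from one triangle inequality plus the check $\dev \le 1/2$. You instead place $x$ orthogonal to the segment (so $x^\ast = 0$ by Pythagoras) but aligned in measurement space with $\proj e_1$ (so the least-squares coefficient along $\proj e_1$ is exactly $1$ and $\widehat{x}=e_1$); the quantitative content is then a lower bound on the component of $\proj^T\proj e_1$ orthogonal to $e_1$, obtained from $\|\proj^T \proj e_1\| \ge \sing_m(\proj)\|\proj e_1\|$ (valid since $\proj e_1$ lies in the range of $\proj$, so only nonzero singular values enter). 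Both arguments exploit the same tension --- $\sing_m(\proj)$ large forces $\proj$ to tilt $e_1$ severely relative to the rest of space --- but your decoder errs by overshooting to the far endpoint while the paper's collapses to the near one. The paper's route is slightly shorter; yours yields the marginally sharper constant $\sqrt{\sing_m(\proj)^2-(1+\epsilon)^2}/(1+\epsilon)$ versus the paper's $\sing_m(\proj)/(1+\epsilon)-1$, and all of your auxiliary verifications (nonvanishing of $a_\perp$, the unconstrained minimizer landing in $[0,1]$, and the final algebraic comparison under $\sing_m(\proj)\ge 8/3$, $\epsilon\le 1/3$) check out.
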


In particular, consider the case where $\proj$ is a random Gaussian matrix as described in Theorem~\ref{thm:new manifold embedding}. According to \eqref{eq:COM_norm} and \eqref{eq:2nd bound on spec norm of gaussian matrix} (with $t=1$), the following two statements are valid except with a probability of at most $2e^{-M\epsilon^2/6}+2e^{-M/2}\le 4e^{-M\epsilon^2/6}$. First, \eqref{eq:mmain} holds for every $x_1,x_2\in\manifold$. Second,
\begin{equation}\label{eq:2nd statement}
	\sing_m(\proj)\ge \sqrt{\frac{N}{M}}-2.
\end{equation}
If we assume that $N/M\ge(14/3)^2$, we can conclude, using Proposition~\ref{lemma:l2 l2 bound is sharp}, that $\proj$ satisfies \eqref{eq:mmain} for every $x_1,x_2\in\manifold$ and yet there exists $x\in\real^N$ such that  $$
\frac{\|x-\widehat{x}\|}{\|x-x^\ast\|}\ge \frac{1}{4(1+\epsilon)}\sqrt{\frac{N}{M}},
$$
except with a probability of at most $4e^{-M\epsilon^2/6}$ on the choice of $\proj$.

It is worth recalling that, as we discussed in
Section~\ref{sec:csresults}, similar difficulties arise in
sparsity-based CS when attempting to establish a deterministic
$\ell_2$ instance-optimal bound. In particular, to ensure that
$\norm{x-\widehat{x}} \le \CddOne \norm{x-x_\sparsity}$ for all $x
\in \real^\dim$, it is known~\cite{CDD} that this requires $\pdim
\ge \CddTwo \dim$ regardless of the sparsity level $\sparsity$.

In sparsity-based CS, there have been at least two types of
alternative approaches. The first are the deterministic
``mixed-norm'' results of the type given in (\ref{eq:mixed2}). These involve the use of an alternative norm such
as the $\ell_1$ norm to measure the distance from the coefficient
vector $\alpha$ to its best $\sparsity$-term approximation
$\alpha_\sparsity$. While it may be possible to pursue similar
directions for manifold-modeled signals, we feel this is undesirable
as a general approach because when sparsity is no longer part of the
modeling framework, the $\ell_1$ norm has less of a natural meaning.
Instead, we prefer to seek bounds using $\ell_2$, as that is the
most conventional norm used in signal processing to measure energy
and error.

Thus, the second type of alternative bounds in sparsity-based CS
have involved $\ell_2$ bounds in probability, as we discussed in
Section~\ref{sec:csresults}. Indeed, the performance of both
sparsity-based and manifold-based CS is often much better in
practice than a deterministic $\ell_2$ instance-optimal bound might
indicate. The reason is that, for any $\proj$, such bounds consider
the {\em worst case} signal over all possible $x \in \real^\dim$.
Fortunately, this worst case is not typical. As a result, it is
possible to derive much stronger results that consider any given
signal $x \in \real^\dim$ and establish that for most random
$\proj$, the recovery error of that signal $x$ will be small.

\subsection{Probabilistic instance-optimal bounds in $\ell_2$}
\label{sec:prob}

For a given measurement operator $\proj$, our bound in
Theorem~\ref{thm:kappa} applies uniformly to any signal in
$\real^\dim$. However, a much sharper bound can be obtained by
relaxing the deterministic requirement.

\subsubsection{Signal recovery}

Our first bound applies to the signal recovery problem. The proof of this result is provided in Appendix~\ref{sec:proof of prob par est theorem} and, like that of Theorem~\ref{theo:manifoldjl}, involves a generic chaining argument.

\begin{thm}
	\notechecked{edited theorem to provide two bounding terms involving $x^{*}$, one of which goes to $0$ as $x^{*} \rightarrow x$.}
    Suppose $x \in \real^\dim$. Let $\manifold$ be a compact $\mdim$-dimensional Riemannian submanifold of $\real^\dim$ having condition number $1/\condition$ and volume $\volume$. Conveniently assume that\footnote{\notechecked{revised} Theorem~\ref{theo:bound3} still holds, with worse (larger) constants, after relaxing this assumption.}
	\begin{equation}
	      \frac{\volume}{\condition^{K}}\ge \left(\frac{21}{\sqrt{K}}\right)^{K}.\label{eq:conv assump for par est}
	\end{equation}
	Fix $0 < \eps \le 1/3$ and $0 < \rho < 1$. Let $\proj$ be an $\pdim \times \dim$ random matrix populated with i.i.d.\ zero-mean Gaussian random variables with variance $1/M$, chosen independently of $x$, with $M$ satisfying \eqref{eq:mmeasmain}.
%
%
	Let $\noise \in \real^\pdim$, let $y = \proj x + \noise$, and let
	the recovered estimate $\widehat{x}$ and \notechecked{replaced ``the'' with ``an'' so as not to imply minimizer is unique} an optimal estimate
	$x^\ast$ be as defined in (\ref{eq:csmrecover}) and
	(\ref{eq:csmopt}). Then with a probability of at
	least $1-4\rho$, the following statement holds:
	\begin{equation}
		\norm{x-\widehat{x}} \le
		\min\left(  \left(1+3\epsilon\right)\left\Vert x-x^{*}\right\Vert +\frac{\epsilon\condition}{40}\,,\,  (1+2\epsilon)\left(2\sqrt{\frac{N}{M}}+5\right)\|x-x^*\| \right) +\left(2+4\epsilon\right)\left\Vert \noise\right\Vert.
		\label{eq:bound3}
	\end{equation}
	\label{theo:bound3}
\end{thm}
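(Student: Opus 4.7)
\textbf{Plan for Theorem~\ref{theo:bound3}.} I would prove each of the two terms inside the minimum of \eqref{eq:bound3} separately. The second term, $(1+2\epsilon)(2\sqrt{N/M}+5)\|x-x^{*}\|+(2+4\epsilon)\|n\|$, is essentially a restatement of Theorem~\ref{thm:kappa}: it suffices to union-bound the manifold-embedding event from Theorem~\ref{thm:new manifold embedding} (failure probability $\rho$) with the Gaussian spectral-norm tail \eqref{eq:1st bound on spec norm of gaussian matrix} at $t=1$ (failure probability $e^{-M/2}\le\rho$), which delivers $\sing_{M}(\proj)\le\sqrt{N/M}+2$, and then to insert this into \eqref{eq:kappabound}. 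This contributes $2\rho$ to the total failure probability.

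Now for the first term inside the minimum, which is the real content. Writing $u:=x-x^{*}$ and $v:=x^{*}-\widehat{x}$ so that $x-\widehat{x}=u+v$, the optimality of $\widehat{x}$ in \eqref{eq:csmrecover} together with $y=\proj x+n$ expands to
$$
\|\proj v\|^{2}+2\langle \proj u+n,\proj v\rangle \le 0,
$$
and hence $\|\proj v\|^{2}\le 2|\langle \proj u,\proj v\rangle|+2\|n\|\,\|\proj v\|$. Since both $x^{*}$ and $\widehat{x}$ lie on $\manifold$, Theorem~\ref{thm:new manifold embedding} (on the same event used above) supplies $(1-\epsilon)\|v\|\le\|\proj v\|\le(1+\epsilon)\|v\|$, which turns the left-hand side into roughly $(1-\epsilon)^{2}\|v\|^{2}$ and dominates the noise contribution by $2(1+\epsilon)\|n\|\,\|v\|$.

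The crux is then to control the cross term $|\langle \proj u,\proj v\rangle|$, which I would split as $|\langle u,v\rangle|+|\langle \proj u,\proj v\rangle-\langle u,v\rangle|$. The first piece is purely geometric: whenever $\|u\|<\condition$, the vector $u=x-x^{*}$ lies in the normal space to $\manifold$ at $x^{*}$, while $v=x^{*}-\widehat{x}$ is a secant of $\manifold$; a standard consequence of the bounded-condition-number assumption (compiled in the Toolbox of Appendix~\ref{sec:useful diff geom reuslts} and originating in~\cite{niyogi}) gives $|\langle u,v\rangle|\le \|u\|\,\|v\|^{2}/(2\condition)$. The second piece is the empirical-process piece: I would redeploy the chaining machinery already built for Theorem~\ref{thm:new manifold embedding}, namely the covers of the normalized secant set of $\manifold$ from Appendix~\ref{sec:Constructing a Sequence of Covers for U(M)} and the telescoping bound from Appendix~\ref{sub:Applying-the-Chaining arg}, but applied to the bilinear functional $v\mapsto\langle \proj u,\proj v\rangle-\langle u,v\rangle$ with $u$ held fixed. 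A polarization identity $\langle \proj u,\proj v\rangle=\tfrac14(\|\proj(u+v)\|^{2}-\|\proj(u-v)\|^{2})$ allows each link of the chain to be controlled by the same Gaussian concentration inequality used in Theorem~\ref{thm:new manifold embedding}, and the outcome, on an event of probability at least $1-2\rho$, is $|\langle \proj u,\proj v\rangle-\langle u,v\rangle|\le\epsilon\|u\|\,\|v\|$ uniformly over all normalized secants.

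Plugging both pieces back in and cancelling one factor of $\|v\|$ yields $\bigl((1-\epsilon)^{2}-\|u\|/\condition\bigr)\|v\|\le 2\epsilon\|u\|+2(1+\epsilon)\|n\|$, which, after the case split $\|u\|\le c\condition$ versus $\|u\|>c\condition$ (the latter regime being absorbed by the additive constant $\epsilon\condition/40$), rearranges for $\epsilon\le 1/3$ to a bound of the form $\|v\|\le 2\epsilon\|u\|+\epsilon\condition/40+(2+4\epsilon-1)\|n\|$; combining with $\|x-\widehat{x}\|\le\|u\|+\|v\|$ then produces the claimed $(1+3\epsilon)\|x-x^{*}\|+\epsilon\condition/40+(2+4\epsilon)\|n\|$. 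The four failure events (manifold embedding, spectral norm, fixed-$u$ Johnson--Lindenstrauss concentration, and the chaining supremum over normalized secants) together account for the $4\rho$ in the statement. The main obstacle, and where essentially all of the technical novelty lives, is the uniform chaining control of $\sup_{v}|\langle \proj u,\proj v\rangle-\langle u,v\rangle|$ over normalized secants of $\manifold$; handling a bilinear functional rather than the quadratic form $\|\proj v\|^{2}$ treated in Theorem~\ref{thm:new manifold embedding} is what forces a careful re-derivation of the cover-size versus concentration trade-off sharp enough to land at the constant $1+3\epsilon$ on the leading term.
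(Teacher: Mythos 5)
Your handling of the second term inside the minimum is exactly the paper's: Theorem~\ref{thm:kappa} plus the spectral-norm tail \eqref{eq:1st bound on spec norm of gaussian matrix} at $t=1$, contributing $2\rho$. The gap is in your treatment of the first term. Your route decomposes $x-\widehat{x}=u+v$ with $u=x-x^{*}$, $v=x^{*}-\widehat{x}$ and expands the optimality of $\widehat{x}$ quadratically, but the two key steps fail outside a favorable regime that you do not establish. First, the geometric bound $|\langle u,v\rangle|\le\|u\|\,\|v\|^{2}/(2\condition)$ comes from applying Lemma~\ref{lemma:niyogi geom result} to the secant $v$ and therefore requires $\|x^{*}-\widehat{x}\|<2\condition$; nothing in your argument prevents $\widehat{x}$ from landing far from $x^{*}$ (the theorem itself only forces $\|v\|\le(2+3\epsilon)\|u\|+\cdots$, which can exceed $2\condition$), and when it does the manifold may curve back so that $v$ has a large component along the normal direction $u$, destroying the bound. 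Second, your final inequality $\bigl((1-\epsilon)^{2}-\|u\|/\condition\bigr)\|v\|\le 2\epsilon\|u\|+2(1+\epsilon)\|n\|$ degenerates when $\|x-x^{*}\|$ is comparable to $\condition$, yet the first term of the minimum in \eqref{eq:bound3} must hold for \emph{every} $x\in\real^{\dim}$, however far from $\manifold$. The claim that this regime is ``absorbed by the additive constant $\epsilon\condition/40$'' is unsupported: that constant is at most $\condition/120$ and cannot compensate for a vacuous bound on $\|v\|$. Even in the favorable regime the constants do not land where claimed: dividing by $(1-\epsilon)^{2}\ge 4/9$ inflates the leading coefficient to roughly $4.5\epsilon$ and the noise coefficient to about $6$ at $\epsilon=1/3$, versus the claimed $3\epsilon$ and $2+4\epsilon$.

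The paper avoids all of this by never routing the triangle inequality through $x^{*}$. It proves a separate chaining lemma (Lemma~\ref{lemma:stable embed of U(M,z)}) giving a uniform embedding of the chords from the fixed off-manifold point $x$ to $\manifold_{\epsilon/40}$, the part of $\manifold$ at distance more than $\epsilon\condition/40$ from $x$; because only long chords are involved, the cover needs no tangent-plane component and the argument of Appendix~\ref{sub:Applying-the-Chaining arg} applies essentially verbatim. It then splits on whether $\widehat{x}\in\manifold_{\epsilon/40}^{C}$ (in which case $\|x-\widehat{x}\|\le\epsilon\condition/40$ trivially) or $\widehat{x}\in\manifold_{\epsilon/40}$, in which case the purely linear chain
\begin{equation*}
\|x-\widehat{x}\|\le(1-\epsilon)^{-1}\|\proj x-\proj\widehat{x}\|\le(1-\epsilon)^{-1}\|y-\proj x^{*}\|+(1-\epsilon)^{-1}\|\noise\|\le\tfrac{1+\epsilon}{1-\epsilon}\|x-x^{*}\|+\tfrac{2}{1-\epsilon}\|\noise\|
\end{equation*}
needs only the optimality of $\widehat{x}$ in the compressed domain and a single fixed-vector concentration for $x-x^{*}$ (Lemma~\ref{lemma:basics for fixed}); this holds for arbitrary $\|x-x^{*}\|$ and yields $(1+3\epsilon)$ and $(2+4\epsilon)$ directly. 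If you want to salvage your approach, you would need at minimum the paper's case split on $\|x-\widehat{x}\|$ together with a separate argument for $\|x-x^{*}\|\gtrsim\condition$ and for $\|x^{*}-\widehat{x}\|\ge 2\condition$, at which point the direct chord-embedding route is both shorter and sharper.
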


\notechecked{next paragraph added to discuss revised theorem}
Roughly speaking, one can discern two different operating regimes in \eqref{eq:bound3}:
\begin{itemize}
\item When $x$ is sufficiently far from the manifold ($\|x-x^*\| \gg \epsilon \tau$), then \eqref{eq:bound3} roughly reads
$$
\norm{x-\widehat{x}} \le (1 + 3\eps)\norm{x-x^\ast}+
		(2+4\eps)\norm{\noise}.
$$
In particular, by setting $\|n\|=0$ in the bound above (which corresponds to the noise-free setup), we obtain a multiplicative error bound: The recovery error from compressive measurements $\norm{x-\widehat{x}}$ is no larger than twice the recovery error from a full set of measurements $\|x-x^*\|$.
\item On the other hand, when $x$ is close to the manifold ($\|x-x^*\| \ll \epsilon \tau \sqrt{M/N}$), then \eqref{eq:bound3} becomes
$$
\norm{x-\widehat{x}} \le (1+2\epsilon)\left(2\sqrt{\frac{N}{M}}+5\right)\norm{x-x^\ast} +
		(2+4\eps)\norm{\noise}. 
$$
When $\|n\|=0$, we still obtain a multiplicative error bound but with a larger factor in front of $\|x-x^*\|$.
\end{itemize}

Let us also compare and contrast our bound with the analogous results for sparsity-based CS. Like Theorem~\ref{theo:mixed2}, we consider the problem of signal recovery in the presence of additive measurement noise. Both bounds relate the recovery error $\norm{x-\widehat{x}}$ to the proximity of $x$ to its nearest
neighbor in the concise model class (either $x_\sparsity$ or $x^\ast$ depending on the model), and both bounds relate the recovery error $\norm{x-\widehat{x}}$ to the amount $\norm{\noise}$ of additive measurement noise.
However, Theorem~\ref{theo:mixed2} is a deterministic bound whereas Theorem~\ref{theo:bound3} is probabilistic, and our bound (\ref{eq:bound3}) measures proximity to the concise model in the $\ell_2$ norm, whereas (\ref{eq:mixed2}) uses the $\ell_1$ norm.

Our bound can also be compared with (\ref{eq:devore}), as both are instance-optimal bounds in probability, and both use the $\ell_2$ norm to measure proximity to the concise model. However, we note that unlike (\ref{eq:devore}), our bound (\ref{eq:bound3}) allows the consideration of measurement noise.

\subsubsection{Parameter estimation}

Above we have derived a bound for the signal recovery problem, with
an error metric that measures the discrepancy between the recovered
signal $\widehat{x}$ and the original signal $x$.

However, in some applications it may be the case that the original
signal $x \approx x_{\theta^\ast}$, where $\theta^\ast \in \Theta$
is a parameter of interest. In this case we may be interested in
using the compressive measurements $y = \proj x + \noise$ to solve
the problem (\ref{eq:parmrecover}) and recover an estimate
$\widehat{\theta}$ of the underlying parameter.

Of course, these two problems are closely related. However, we
should emphasize that guaranteeing $\norm{x-\widehat{x}} \approx
\norm{x-x^\ast}$ does not automatically guarantee that
$\gdist(x_{\widehat{\theta}},x_{\theta^\ast})$ is small (and
therefore does not ensure that
$d_\Theta(\widehat{\theta},\theta^\ast)$ is small). If the manifold
is shaped like a horseshoe, for example, then it could be the case
that $x_{\theta^\ast}$ sits at the end of one arm but
$x_{\widehat{\theta}}$ sits at the end of the opposing arm. These
two points would be much closer in a Euclidean metric than in a
geodesic one.

Consequently, in order to establish bounds relevant for parameter
estimation, our concern focuses on guaranteeing that the geodesic
distance $\gdist(x_{\widehat{\theta}},x_{\theta^\ast})$ is itself
small. Our next result is proved in Appendix~\ref{app:bound4}.

\begin{thm}\notechecked{edited the theorem} Suppose $x \in \real^\dim$, and fix $0 < \eps \le 1/3$ and $0 < \rho < 1$. Let $\manifold$ and $\proj$ be as described in Theorem~\ref{theo:bound3}, assuming that $M$ satisfies \eqref{eq:mmeasmain} and that the convenient assumption \notechecked{corrected equation reference} \eqref{eq:conv assump for par est} holds. Let $\noise \in \real^\pdim$, let $y = \proj x + \noise$, and let the recovered estimate $\widehat{x}$ and  \notechecked{replaced ``the'' with ``an'' so as not to imply minimizer is unique} an optimal estimate $x^\ast$ be as defined in (\ref{eq:csmrecover}) and
(\ref{eq:csmopt}). If $\norm{x-x^\ast}+\frac{10}{9}\norm{\noise}\le 0.163\condition$, then with probability at least $1-4\rho$ the following statement holds:
\begin{equation}
\gdist(\widehat{x}, x^\ast) \le
\min\left( (4+6\epsilon)\|x-x^*\|+\frac{\epsilon\tau}{20}\,,\,\left((4+8\epsilon)\sqrt{\frac{N}{M}}+12+20\epsilon\right)\|x-x^*\| \right) +(4+8\epsilon)\|n\|.
\label{eq:bound4}
\end{equation}
\label{theo:bound4}
\end{thm}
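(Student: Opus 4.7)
The plan is to reduce the geodesic bound \eqref{eq:bound4} to the Euclidean bound \eqref{eq:bound3} from Theorem~\ref{theo:bound3} via two short steps: the triangle inequality (to transfer from $\|x-\widehat{x}\|$ to $\|\widehat{x}-x^\ast\|$), followed by a Euclidean-to-geodesic conversion from the Toolbox (to transfer from $\|\widehat{x}-x^\ast\|$ to $\gdist(\widehat{x},x^\ast)$). Both $\widehat{x}$ and $x^\ast$ lie on $\manifold$, so the latter conversion is the natural tool.

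First, I would invoke Theorem~\ref{theo:bound3}, which, under the same hypotheses on $\manifold$, $\proj$, and $M$, guarantees with probability at least $1-4\rho$ that
\[
\norm{x-\widehat{x}} \le \min\!\left((1+3\epsilon)\norm{x-x^\ast}+\tfrac{\epsilon\tau}{40},\, (1+2\epsilon)\!\left(2\sqrt{N/M}+5\right)\norm{x-x^\ast}\right) + (2+4\epsilon)\norm{\noise}.
\]
By the triangle inequality, $\norm{\widehat{x}-x^\ast}\le \norm{x-\widehat{x}}+\norm{x-x^\ast}$. Absorbing the extra $\norm{x-x^\ast}$ into each of the two bounds inside the $\min$ yields
\[
\norm{\widehat{x}-x^\ast} \le \min\!\left((2+3\epsilon)\norm{x-x^\ast}+\tfrac{\epsilon\tau}{40},\, (2+4\epsilon)\sqrt{N/M}\,\norm{x-x^\ast}+(6+10\epsilon)\norm{x-x^\ast}\right) + (2+4\epsilon)\norm{\noise}.
\]

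Next I would check that the hypothesis $\norm{x-x^\ast}+\tfrac{10}{9}\norm{\noise}\le 0.163\,\condition$ forces $\norm{\widehat{x}-x^\ast}\le \condition/2$. Using $\epsilon\le 1/3$ in the first (smaller) branch of the $\min$ gives $\norm{\widehat{x}-x^\ast}\le 3\norm{x-x^\ast}+\tfrac{\condition}{120}+\tfrac{10}{3}\norm{\noise}=3\bigl(\norm{x-x^\ast}+\tfrac{10}{9}\norm{\noise}\bigr)+\tfrac{\condition}{120}\le 3(0.163\,\condition)+\tfrac{\condition}{120}<\condition/2$ (the threshold $0.163\approx 59/360$ is chosen precisely to make this tight).

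Since $\widehat{x},x^\ast\in\manifold$ and $\norm{\widehat{x}-x^\ast}\le \condition/2$, I would then apply the standard Toolbox lemma relating chord length to arc length on a manifold of condition number $1/\condition$: whenever $p,q\in\manifold$ with $\norm{p-q}\le \condition/2$, one has $\gdist(p,q)\le \condition-\condition\sqrt{1-2\norm{p-q}/\condition}\le 2\norm{p-q}$ (the last inequality is an easy algebraic check for $\norm{p-q}\le\condition/2$). Doubling each of the two branches of the $\min$ above then immediately produces \eqref{eq:bound4}. The only genuine bookkeeping is verifying the $0.163\,\condition$ threshold, which is why that particular constant appears in the hypothesis; everything else is essentially mechanical, provided Theorem~\ref{theo:bound3} and the chord-to-arc Toolbox lemma are invoked correctly.
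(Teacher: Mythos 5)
Your proposal is correct and follows essentially the same route as the paper: apply the triangle inequality to the Euclidean bound of Theorem~\ref{theo:bound3} to control $\norm{\widehat{x}-x^\ast}$, verify that the hypothesis $\norm{x-x^\ast}+\frac{10}{9}\norm{\noise}\le 0.163\condition$ forces $\norm{\widehat{x}-x^\ast}\le\condition/2$, and then convert to geodesic distance via Lemma~\ref{lemma:d & dM}, with the elementary estimate $\condition-\condition\sqrt{1-2d/\condition}\le 2d$ producing the doubled constants in \eqref{eq:bound4}. The only cosmetic difference is that you make this last doubling step explicit, whereas the paper leaves it implicit after noting the square-root argument is nonnegative.
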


In several ways, this bound is similar to (\ref{eq:bound3}). Both
bounds relate the recovery error to the proximity of $x$ to its
nearest neighbor $x^\ast$ on the manifold and to the amount
$\norm{\noise}$ of additive measurement noise.
Both bounds also have an additive term on the right hand side that
is small in relation to the condition number $\condition$.

In contrast, (\ref{eq:bound4}) guarantees that the recovered
estimate $\widehat{x}$ is near to the optimal estimate $x^\ast$ in
terms of geodesic distance along the manifold. Establishing this
condition required the additional assumption that \notechecked{updated inequality} $\norm{x-x^\ast}+\frac{10}{9}\norm{\noise}\le 0.163\condition$. Because $\condition$ relates to
the degree to which the manifold can curve back upon itself at long
geodesic distances, this assumption prevents exactly the type of
``horseshoe'' problem that was mentioned above, where it may happen
that $\gdist(\widehat{x}, x^\ast) \gg \norm{\widehat{x}-x^\ast}$.
Suppose, for example, it were to happen that $\norm{x-x^\ast}
\approx \condition$ and $x$ was approximately equidistant from both
ends of the horseshoe; a small distortion of distances under $\proj$
could then lead to an estimate $\widehat{x}$ for which
$\norm{x-\widehat{x}} \approx \norm{x-x^\ast}$ but
$\gdist(\widehat{x},x^\ast) \gg 0$. Similarly, additive noise could
cause a similar problem of ``crossing over'' in the measurement
space. Although our bound provides no guarantee in these situations,
we stress that under these circumstances, accurate parameter
estimation would be difficult (or perhaps even unimportant) in the
original signal space $\real^\dim$.

Finally, we revisit the situation where the original signal $x
\approx x_{\theta^\ast}$ for some $\theta^\ast \in \Theta$ (with
$\theta^\ast$ satisfying (\ref{eq:parmopt})), where the measurements
$y = \proj x + \noise$, and where the recovered estimate
$\widehat{\theta}$ satisfies (\ref{eq:parmrecover}). We consider the
question of whether (\ref{eq:bound4}) can be translated into a bound
on $d_\Theta(\widehat{\theta},\theta^\ast)$. As described in
Section~\ref{sec:mmodels}, in signal models where $\manifold$ is
isometric to $\Theta$, this is automatic: we have simply that
\begin{equation*}
\gdist(x_{\widehat{\theta}},x_{\theta^\ast}) =
d_\Theta(\widehat{\theta},\theta^\ast).
\end{equation*}
Such signal models are not nonexistent. Work by Donoho and
Grimes~\cite{DonohoGrimesISOMAP}, for example, has characterized a
variety of articulated image classes for which~(\ref{eq:isometry})
holds or for which $\gdist(x_{\theta_1},x_{\theta_2}) = \GrimesOne
d_\Theta(\theta_1, \theta_2)$ for some constant $\GrimesOne > 0$. In
other models it may hold that
$$
\GrimesTwo \gdist(x_{\theta_1},x_{\theta_2}) \le d_\Theta(\theta_1,
\theta_2) \le \GrimesThree \gdist(x_{\theta_1},x_{\theta_2})
$$
for constants $\GrimesTwo, \GrimesThree > 0$. Each of these
relationships may be incorporated to the bound (\ref{eq:bound4}).

\section{Conclusions}
\label{sec:concl}

In this paper, we have provided an improved and non-asymptotic lower bound on the number of requisite measurements to ensure a stable embedding of a manifold under a random linear measurement operator. We have also considered the tasks of signal recovery and parameter estimation using compressive measurements of a
manifold-modeled signal, and we have established theoretical bounds on the accuracy to which these questions may be answered. Although these problems differ substantially from the mechanics of sparsity-based CS, we have seen a number of similarities that arise due to the low-dimensional geometry of the
each of the concise models.
First, we have seen that a sufficient number of compressive measurements can guarantee a stable embedding of either type of signal family, and the requisite number of measurements scales essentially linearly with the information level of the signal.
Second, we have seen that deterministic instance-optimal bounds in $\ell_2$ are necessarily weak for both problems.
Third, we have seen that probabilistic instance-optimal bounds in $\ell_2$ can be derived that give the optimal scaling with respect to the signal proximity to the concise model and with respect to the amount of measurement noise.
Thus, our work supports the growing evidence that manifold-based models can be used with high accuracy in compressive signal processing.

Most of our analysis in this paper rests on a new analytical framework for studying manifold embeddings that uses tools from the theory of empirical processes (namely, the idea of generic chaining). While such tools are becoming more widely adopted in the analysis of sparsity-based CS problems, we believe they are also very promising for studying the interactions of nonlinear signal families (such as manifolds) with random, compressive measurement operators. We hope that the chaining argument in this paper will be useful for future investigations along these lines.

\section*{Acknowledgements}

M.B.W.\ is grateful to Rich Baraniuk and the Rice CS research team for many stimulating discussions. A.E. thanks Justin Romberg for introducing him to the generic chaining and other topics in the theory of empirical processes, Han Lun Yap for his valuable contributions to an early version of the proof of Theorem~\ref{thm:new manifold embedding} and many productive discussions about the topic, and Alejandro Weinstein for helpful discussions. Finally, both authors would like to acknowledge the tremendous and positive influence that the late Partha Niyogi has had on our work.

\appendix

\section{Toolbox}
\label{sec:useful diff geom reuslts}

We begin by introducing some notation that will be used throughout the rest of the appendices.

In this paper, $\ints$ stands for the set of nonnegative integers. The tangent space of $\manifold$ at $p\in\manifold$ is denoted $\tan_{p}$. The orthogonal projection operator onto this linear subspace is  denoted by $\downarrow_{p}$. We let $\angle\left[\cdot,\cdot\right]$ represent the angle between two
vectors after being shifted to the same starting point. Throughout this paper, $d_{\manifold}\left(\cdot,\cdot\right)$ measures the geodesic distance between two points on $\manifold$. By {\em $r$-ball} we refer to a Euclidean (open) ball of radius $r>0$. In addition, with $\ball_N$ we denote the unit ball in $\real^N$ with volume $V_{\ball_N}$ and we reserve $\ball_N(p,r)$ to represent an $N$-dimensional $r$-ball centered at $p$ in $\real^N$. For $r>0$, let $\intersection_\manifold(p,r):=\mathcal{M}\cap \ball_N(p,r)-p$ denote a (relatively) open neighborhood of $p$ on $\mathcal{M}$ after being shifted to the origin. Here the subtraction is in the Minkowski sense. \notechecked{clarified next sentence} The $K$-dimensional ball of radius $r$ in $\mathcal{T}_{p}$ will be denoted by $\mathcal{B}_{\mathcal{T}_{p}}$; this ball is centered at the origin, as $\tan_{p}$ is a linear subspace. Unless otherwise stated, all distances are measured in the Euclidean metric.

A collection of $N$-dimensional $r$-balls that covers $\manifold$ is called an $r$-\emph{cover} for $\manifold$, with their centers forming \notechecked{added ``so-called'' here} a so-called $r$-\emph{net} for $\manifold$. Notice that in general we do not require a net for $\manifold$ to be a subset of $\manifold$. 
However, we define the {\em covering number} of $\manifold$ at scale $r$, $\coverno_\manifold(r)$, to be the cardinality of a minimal $r$-net for $\manifold$ among all subsets of $\manifold$. (In other words, $\coverno_\manifold(r)$ is the smallest number of $r$-balls centered on $\manifold$ that it takes to cover $\manifold$.) A maximal $r$-separated subset of $\manifold$ is called an $r$-\emph{packing} for $\manifold$.
The \emph{packing number} of $\manifold$ at scale $r>0$, denoted by $\packing_\manifold(r)$, is the cardinality of such a set. It can be easily verified that an $r$-packing for $\manifold$ is also an $r$-cover for $\manifold$, so
\begin{equation}
	\coverno_\manifold(r)\le \packing_\manifold(r). \label{eq:coverno n packing}
\end{equation}

\notechecked{clarified concepts about subspaces/operators in this paragraph} The concept of (principal) angle between subspaces will later come in handy. The (principal) angle between two linear subspaces $\mathcal{T}_{p}$ and $\mathcal{T}_{q}$ is defined such that $\cos\left(\angle\left[\mathcal{T}_{p},\mathcal{T}_{q}\right]\right)=\min_{u}\max_{v}\left|\left\langle u,v\right\rangle \right|$, where the unit vectors $u$ and $v$ belong to $\mathcal{T}_{p}$   and $\mathcal{T}_{q}$, respectively. It is known that
\begin{equation}
\|\downarrow_{p}(\cdot)-\downarrow_{q}(\cdot)\|_{2,2}=\sin\left(\angle\left[\mathcal{T}_{p},\mathcal{T}_{q}\right]\right),
\label{eq:principal angle and projection norms}
\end{equation}
where, as defined above, $\downarrow_{p}$ and $\downarrow_{q}$ are linear orthogonal projectors onto the tangent subspaces $\mathcal{T}_{p}$
and $\mathcal{T}_{q}$, respectively~\cite[Theorem 2.5]{subspacePerturbation}.\footnote{In fact, \eqref{eq:principal angle and projection norms} holds for any two linear subspaces (not only tangent subspaces of a manifold).} The norm above is the spectral norm, namely the operator norm from $\mathbb{R}^{N}$ equipped with $\ell_{2}$ to itself.

We will also use the following conventions to clarify the exposition. For $x_1\ne x_2\in\real^N$, define
\begin{equation*}
	 U\left(x_{1},x_{2}\right):=\frac{x_{2}-x_{1}}{\left\Vert x_{2}-x_{1}\right\Vert }.
\end{equation*}
Additionally, we let $U(S_1,S_2)$ denote the set of directions of all the chords connecting two sets $S_{1},S_{2}\subseteq\real^N$, namely
\[
	U(S_{1},S_{2}):=\left\{ U\left(x_{1},x_{2}\right)\,:\, x_{1}\in S_{1},x_{2}\in S_{2},x_1\ne x_2\right\}.
\]
Clearly, $U(S_{1},S_{2})\subseteq\mathbb{S}^{N-1}$, where $\mathbb{S}^{N-1}$ is the unit sphere in $\mathbb{R}^{N}$.
Whenever possible, we also simplify our notation by using $U\left(S\right):=U(S,S)$.

Below we list a few useful results (mostly from differential geometry)  which are used throughout the rest of the paper. We begin with a well-known bound on the covering number of Euclidean balls, e.g., \cite[Lemma 5.2]{vershynin2010introduction}.\footnote{Lemma~5.2 in~\cite{vershynin2010introduction} concerns the unit sphere in $\real^K$, but the result still holds for the unit Euclidean ball using essentially the same argument.}
\begin{lemma}\label{lemma:cover no of unit ball}
	A $K$-dimensional unit ball can be covered by at most $(3/r)^K$ $r$-balls with $r\le1$.
\end{lemma}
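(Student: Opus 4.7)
The plan is to prove this classical bound via the standard packing-covering duality together with a volume comparison. First, I would let $\{x_1,\ldots,x_n\} \subseteq \mathcal{B}_K$ be a maximal $r$-separated subset of the unit ball (existence follows from Zorn's lemma, or more simply from compactness of $\mathcal{B}_K$, since any $r$-separated subset has cardinality bounded by a volume argument, so a maximal one exists). By maximality, for every $x \in \mathcal{B}_K$ there must be some $x_i$ with $\|x - x_i\| < r$; otherwise $\{x_1,\ldots,x_n,x\}$ would still be $r$-separated, contradicting maximality. Hence the $r$-balls $\mathcal{B}_N(x_i,r)$ cover $\mathcal{B}_K$, and it suffices to bound $n$.

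To bound $n$, I would use the disjointness trick: by $r$-separation, the open balls $\mathcal{B}_N(x_i, r/2)$ are pairwise disjoint. Moreover, since each $x_i \in \mathcal{B}_K$, each of these smaller balls is contained in the enlarged ball $\mathcal{B}_N(0, 1 + r/2)$. Comparing $K$-dimensional volumes therefore gives
\begin{equation*}
n \cdot (r/2)^K V_{\mathcal{B}_K} \;=\; \sum_{i=1}^n \mathrm{vol}_K(\mathcal{B}_N(x_i, r/2)) \;\le\; \mathrm{vol}_K(\mathcal{B}_N(0, 1 + r/2)) \;=\; (1 + r/2)^K V_{\mathcal{B}_K},
\end{equation*}
so $n \le (1 + 2/r)^K$.

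Finally, I would invoke the hypothesis $r \le 1$, which yields $1 \le 1/r$ and hence $1 + 2/r \le 3/r$, giving $n \le (3/r)^K$ as claimed. There is essentially no obstacle: the only step that needs care is the inclusion $\mathcal{B}_N(x_i, r/2) \subseteq \mathcal{B}_N(0, 1 + r/2)$, which is immediate from the triangle inequality and $x_i \in \mathcal{B}_K$, and the final numerical inequality $1 + 2/r \le 3/r$, which is where the assumption $r \le 1$ is used.
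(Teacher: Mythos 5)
Your proof is correct and is exactly the standard packing--covering duality plus volume comparison argument that the paper invokes by citing \cite[Lemma 5.2]{vershynin2010introduction} (adapted from the sphere to the ball), so it matches the intended proof. The only cosmetic point is that the $r/2$-balls should be taken $K$-dimensional (within the span of $\mathcal{B}_K$) so that the $K$-dimensional volume comparison is the natural one, which is clearly what you intend.
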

We now recall several results from Sections 5 and 6 in \cite{niyogi}. Unfortunately we were unable to confirm for ourselves some of the original proofs appearing in~\cite{niyogi}. Therefore, some of the statements and proofs below differ slightly from their original counterparts. The first result is closely related to Lemma 5.3 in \cite{niyogi}.
\begin{lemma}\label{lemma:niyogi geom result}
    Fix $p,q\in \manifold$, such that $\left\Vert q-p\right\Vert<2\condition$. Then $\angle\left[q-p,\downarrow_{p}(q-p)\right]\le\sin^{-1}\left(\Vert q-p\Vert/2\condition\right)$. \notechecked{changed $<$ to $\le$ in the angle bound}
\end{lemma}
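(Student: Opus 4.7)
The plan is to exploit the ``tangent-ball'' consequence of the condition number: at each $p\in\manifold$ and each unit normal direction $n$ at $p$, the open Euclidean ball of radius $r<\condition$ centered at $p+rn$ (i.e.\ the ball tangent to $\tan_p$ at $p$) is disjoint from $\manifold\setminus\{p\}$. Once this is in hand, the angle bound follows by simple planar geometry between the chord $q-p$ and $\tan_p$.

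First I would set $\theta:=\angle[q-p,\downarrow_p(q-p)]$ and dispense with the trivial case $\theta=0$. Otherwise, decompose
$$
q-p \;=\; \downarrow_p(q-p) + w,
$$
where $w$ is the component orthogonal to $\tan_p$, so that $w\neq 0$, and set $n:=w/\|w\|$, a unit vector normal to $\manifold$ at $p$. By this construction, $\langle q-p,n\rangle = \|w\| = \|q-p\|\sin\theta$.

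Next I would derive the tangent-ball exclusion directly from Definition~\ref{def:cn}. For any $0<r<\condition$, the open normal bundle of radius $r$ embeds in $\real^N$, so the point $p+rn$ has $p$ as its unique nearest neighbor on $\manifold$; consequently the open ball $\ball_N(p+rn,r)$ cannot contain any point of $\manifold$ other than $p$, for any such interior point would be strictly closer to $p+rn$ than $p$ is and contradict uniqueness. Applying this to $q\in\manifold\setminus\{p\}$ gives $\|q-p-rn\|^2\ge r^2$, which expands to
$$
\langle q-p,n\rangle \;\le\; \frac{\|q-p\|^2}{2r}.
$$
Letting $r\uparrow\condition$ (at the level of this scalar inequality, which avoids any subtlety about whether the ball remains ``good'' at the boundary $r=\condition$) yields $\langle q-p,n\rangle\le \|q-p\|^2/(2\condition)$. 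Combined with $\langle q-p,n\rangle = \|q-p\|\sin\theta$, this gives $\sin\theta \le \|q-p\|/(2\condition)$, and the hypothesis $\|q-p\|<2\condition$ keeps the right-hand side strictly below $1$ so that the arcsine is well-defined and monotone, producing the claimed bound.

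The main obstacle I foresee is Step~2: cleanly extracting the tangent-ball exclusion from the definition of $\condition$ as stated in Definition~\ref{def:cn}. Because the definition only guarantees the embedding of the \emph{open} normal bundle of radius $r<\condition$, I want to apply the exclusion only for such $r$ and pass to the limit at the level of the scalar bound rather than at the level of the balls; this avoids any worry about boundary behavior at $r=\condition$. Once Step~2 is settled, Steps~1 and~3 are essentially elementary algebra and require no further manifold machinery.
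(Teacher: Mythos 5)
Your proof is correct and rests on the same key idea as the paper's: excluding $q$ from the ball of radius $r<\condition$ centered at $p+rn$ along the unit normal $n$ determined by the component of $q-p$ orthogonal to $\tan_p$ (the paper works directly with $z=p+\condition n$ and justifies $\|z-q\|\ge\condition$ by the same limiting argument you use at the scalar level). The only difference is cosmetic: where the paper finishes with synthetic triangle geometry via the foot of the perpendicular $z'$, you expand $\left\| q-p-rn\right\|^2\ge r^2$ algebraically to get $\langle q-p,n\rangle\le\left\|q-p\right\|^2/(2r)$ and read off $\sin\theta$ directly, which is arguably cleaner.
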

\begin{proof}
    \notechecked{edited this proof (and added figure) for clarity} Consider the unit vector $v$ along $(q-p)-\downarrow_p(q-p)\perp \tan_p$ and the point $z:=p+\condition\cdot v$. Observe that $z-p$ is orthogonal to the manifold at $p$. By definition of the condition number, the distance from $z$ to the manifold is minimized at $p$ and we must therefore have $\Vert z-q\Vert \ge \condition$.\footnote{\notechecked{footnote is edited.} To see this, consider a sequence of points $z_n :=p+(\condition-1/n) \cdot v$ for integer values of $n$. For each $n$, $\|z_n - p\| = \condition-1/n < \condition$, and $z_n-p$ is orthogonal to the manifold at $p$. Therefore, by the definition of the condition number, no point $q' \in \manifold$ can satisfy $\| z_n - q' \| < \|z_n - p\|$. Thus, the distance from $z_n$ to the manifold equals $\condition-1/n$. Taking the limit as $n \rightarrow \infty$ and using the continuity of the distance function, we conclude that the distance from $z$ to $\manifold$ equals $\condition$. So, no point $q' \in \manifold$ can satisfy $\| z - q' \| < \|z - p\| = \condition$.} Now consider the triangle formed by the points $p,q,z$ and the line $l$ passing through $z$ and perpendicular to $q-p$. Let $z'$ denote the intersection of $l$ with the line passing through $p$ and $q$. (See Figure~\ref{fig:Proof picture}.)
\begin{figure}[H]
	\centering
	  \includegraphics[scale=.60]{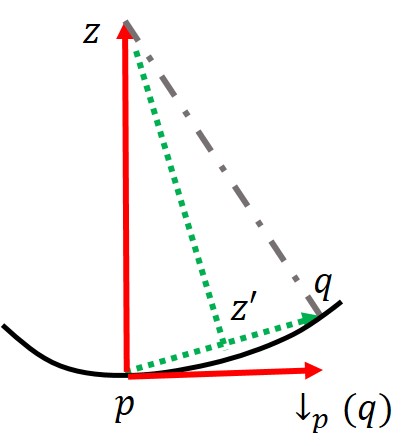}
	  \caption{\small\sl See proof of Lemma~\ref{lemma:niyogi geom result}.}
	  \label{fig:Proof picture}
\end{figure}
It is clear that $\angle [q-p,z-p]\le\pi/2$. Also since $\|z-q\| \ge \|z-p\|=\condition$, we have $\angle [z-q,p-q]\le \angle [z-p,q-p]\le\pi/2$. Therefore, $z'
$ is
indeed between $p$ and $q$. The angle between $l$ and the line passing through $p$ and $z$ equals the angle between $q-
    p$ and $\downarrow_p(q-p)$, that is
\begin{equation} \label{eq:1e1}
    \angle[p-z,z'-z]=\angle[q-p,\downarrow_p(q-p)]=:\theta_{p,q}.
\end{equation}
    To obtain an upper bound for $\theta_{p,q}$, we again note that $\|z-q\|\ge\|z-p\|$ and therefore  $\|z'-q\|\ge\|z'-p\|$, or $\|z'-p\|\le\frac{1}{2}\|q-p\|$. So, $\theta_{p,q}$ is bounded by $\sin^{-1}\left(\Vert q-p\Vert/2\condition\right)$. This completes the proof of Lemma \ref{lemma:niyogi geom result}.
\end{proof}

\begin{lemma}\label{Nonsing of projection}
    \emph{\cite[Lemma 5.4]{niyogi}}
    \notechecked{clarified statement of lemma} For $p\in\mathcal{M}$, the derivative of $\downarrow_{p}:\mathbb{R}^{N}\rightarrow\mathcal{T}_{p}$ is nonsingular on  $\mathcal{A}_{\mathcal{M}}(p,\tau/2)$.
\end{lemma}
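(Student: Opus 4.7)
The statement requires some interpretation, since $\downarrow_{p}$, being a linear map from $\real^{\dim}$ into $\tan_{p}$, has a $(\dim-\mdim)$-dimensional kernel and is itself singular whenever $\mdim<\dim$. The meaningful content is about the restriction $\downarrow_{p}|_{\manifold}:\manifold\to\tan_{p}$ viewed as a smooth map between two $\mdim$-dimensional manifolds. Its differential at $q\in\manifold$ is the linear map $\downarrow_{p}|_{\tan_{q}}:\tan_{q}\to\tan_{p}$, whose smallest singular value equals $\cos\angle[\tan_{p},\tan_{q}]$; thus nonsingularity at $q$ is equivalent to $\angle[\tan_{p},\tan_{q}]<\pi/2$, or equivalently $\tan_{q}\cap\tan_{p}^{\perp}=\{0\}$.

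My plan is to establish this angle condition by contradiction, directly from the definition of the condition number. Fix $q\in\manifold$ with $0<\dist{q}{p}<\condition/2$ and suppose some unit vector $v\in\tan_{q}$ also lies in $\tan_{p}^{\perp}$. Let $\gamma:[0,\infty)\to\manifold$ be the unit-speed geodesic with $\gamma(0)=q$ and $\gamma'(0)=v$. Because the condition number bounds the curvature of any unit-speed geodesic on $\manifold$ by $1/\condition$, we have $\norm{\gamma''(s)}\le 1/\condition$ for all $s$, and the integral form of Taylor's theorem yields the clean estimate
\begin{equation*}
\norm{\gamma(s)-q-sv}\,\le\,\frac{s^{2}}{2\condition}\qquad\text{for every }s\ge 0.
\end{equation*}

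Now consider any $s\in(0,\condition)$ and set $x:=p+sv\in\real^{\dim}$. Since $v\in\tan_{p}^{\perp}$ is a unit vector and $\norm{sv}=s<\condition$, the pair $(p,sv)$ lies in the open normal bundle of $\manifold$ of radius $\condition$, which by Definition~\ref{def:cn} is embedded in $\real^{\dim}$; it follows that $p$ is the unique nearest point of $\manifold$ to $x$, with $\dist{x}{p}=s$, so $\dist{x}{q'}>s$ for every $q'\in\manifold\setminus\{p\}$. Applying this to $q'=\gamma(s)$ (which, since $\gamma'(0)=v\ne 0$, differs from $p$ for all but isolated $s$) together with the triangle-inequality bound
\begin{equation*}
\dist{x}{\gamma(s)}\,\le\,\dist{q}{p}+\norm{\gamma(s)-q-sv}\,\le\,\dist{q}{p}+\frac{s^{2}}{2\condition}
\end{equation*}
forces $s<\dist{q}{p}+s^{2}/(2\condition)$, i.e., $s^{2}-2\condition\,s+2\condition\dist{q}{p}>0$, throughout an open subinterval of $(0,\condition)$. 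However, the discriminant of this quadratic in $s$ equals $4\condition(\condition-2\dist{q}{p})$, which is strictly positive precisely under the hypothesis $\dist{q}{p}<\condition/2$, so the quadratic is actually negative on an open subinterval of $(0,\condition)$; this contradicts the forced inequality and closes the argument.

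The key analytic input is the clean second-order estimate $\norm{\gamma(s)-q-sv}\le s^{2}/(2\condition)$, obtained via the integral remainder in Taylor's theorem together with the curvature bound $\norm{\gamma''}\le 1/\condition$ for geodesics on $\manifold$; it is essential that this estimate carries no $O(s^{3})$ error, which would otherwise spoil the comparison on the scale $s\sim\condition$. The remainder is a short quadratic calculation matched against the precise definition of the condition number. Lemma~\ref{lemma:niyogi geom result} is not strictly needed along this route, although an alternative pathway would first use it to derive a bound of the form $\sin\angle[\tan_{p},\tan_{q}]\le C\,\dist{p}{q}/\condition$ and then invoke \eqref{eq:principal angle and projection norms} to read off $\angle[\tan_{p},\tan_{q}]<\pi/2$ directly.
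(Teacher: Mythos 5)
Your argument is correct, but it cannot be compared against a proof in the paper because the paper offers none: this lemma is simply quoted as \cite[Lemma 5.4]{niyogi}, so you have supplied a self-contained derivation where the authors defer to the literature. Your interpretation of the statement (nonsingularity of the differential of $\downarrow_{p}$ restricted to $\manifold$ at $q$, i.e.\ $\tan_{q}\cap\tan_{p}^{\perp}=\{0\}$) is the intended one, and the two ingredients you combine are precisely the ones the paper uses elsewhere in the Toolbox: the second-order Taylor estimate $\|\curve(s)-q-s v\|\le s^{2}/(2\condition)$ for a unit-speed geodesic via Lemma~\ref{lemma:max curv} appears verbatim in the proof of Lemma~\ref{lemma:injective map from Niyogi}, and the ``unique nearest point within distance $\condition$ along a normal'' consequence of Definition~\ref{def:cn} is exactly the argument in the footnote to the proof of Lemma~\ref{lemma:niyogi geom result}. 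The quadratic bookkeeping is right (the forced inequality $s^{2}-2\condition s+2\condition\|q-p\|>0$ fails on the nonempty interval $(\condition-\sqrt{\condition^{2}-2\condition\|q-p\|},\,\condition)$ whenever $\|q-p\|<\condition/2$, while $\{s:\curve(s)=p\}$ is discrete by unit speed), so the contradiction is genuine and you recover the stated radius $\condition/2$ exactly. Two small remarks: given the Toolbox as written, there is an even shorter route—Lemma~\ref{lemma:d & dM} plus Lemma~\ref{lemma:angle btw tangents} give $\cos(\angle[\tan_{p},\tan_{q}])\ge\sqrt{1-2\|q-p\|/\condition}>0$ for $\|q-p\|<\condition/2$—though that leans on two more unproved imports from \cite{niyogi}, whereas your proof needs only the curvature bound; and your closing aside slightly misattributes the alternative, since Lemma~\ref{lemma:niyogi geom result} controls the angle between a chord and a single tangent plane, not the angle between two tangent planes (the latter is Lemma~\ref{lemma:angle btw tangents}).
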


\begin{lemma}\label{lemma:max curv}
    \emph{\cite[Proposition 6.1]{niyogi}}
    Let $\thresh(\cdot)$ denote a smooth unit-speed geodesic curve on $\manifold$ defined on an interval $I\subset\mathbb{R}$. For every $t\in I$, the following holds.
    \[
        \Vert\thresh''(t)\Vert\le 1/\condition.
    \]
\end{lemma}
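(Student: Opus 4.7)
The plan is to use a standard argument that combines the geodesic equation with the definition of the condition number. The key observation is that for a unit-speed geodesic on a Riemannian submanifold of $\mathbb{R}^N$, the Euclidean acceleration $\gamma''(t)$ must be normal to the manifold at $\gamma(t)$. This follows from the fact that the intrinsic Levi-Civita connection on $\manifold$ is the tangential projection of the flat Euclidean connection, so the geodesic equation $\nabla_{\gamma'}\gamma' = 0$ is equivalent to $(\gamma''(t))^{T} = 0$, where $^{T}$ denotes projection onto $\tan_{\gamma(t)}$. In addition, differentiating $\|\gamma'(t)\|^{2}=1$ gives $\langle \gamma'(t),\gamma''(t)\rangle = 0$, which is already implied by the above.

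Assuming $\|\gamma''(t)\|>0$ (otherwise the inequality is trivial), set the unit normal $n := \gamma''(t)/\|\gamma''(t)\|$. Fix any $r<\tau$ and consider the point $z := \gamma(t) + r\, n \in \real^{N}$. Since $n$ lies in the normal space of $\manifold$ at $\gamma(t)$ and $r<\tau$, the definition of the condition number ensures that $\gamma(t)$ is the unique nearest point on $\manifold$ to $z$; in particular,
\[
\|z-\gamma(s)\|^{2} \ge r^{2} \quad \text{for all } s \in I,
\]
with equality at $s=t$.

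Next, I would define $f(s) := \|z-\gamma(s)\|^{2}$. This function has a local (indeed global over $I$) minimum at $s=t$. A direct computation gives
\[
f'(s) = -2\langle z-\gamma(s), \gamma'(s)\rangle, \qquad f''(s) = 2\|\gamma'(s)\|^{2} - 2\langle z-\gamma(s), \gamma''(s)\rangle.
\]
At $s=t$, using $z-\gamma(t) = r\, n$ together with $n\perp \gamma'(t)$ (tangential) and $\|\gamma'(t)\|=1$, we obtain $f'(t)=0$ and $f''(t) = 2 - 2r\|\gamma''(t)\|$. The necessary condition $f''(t)\ge 0$ for a local minimum yields $\|\gamma''(t)\|\le 1/r$. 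Letting $r\nearrow\tau$ gives the desired bound $\|\gamma''(t)\|\le 1/\tau$.

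The main (minor) obstacle is justifying that $\gamma''(t)$ is normal to $\manifold$. This is a textbook fact from Riemannian geometry, but since the Toolbox aims to be self-contained, I would include a one-line reminder referring to the characterization of geodesics via vanishing tangential acceleration; the rest of the argument is an elementary second-derivative test exploiting the reach, exactly in the spirit of the proof of Lemma~\ref{lemma:niyogi geom result}.
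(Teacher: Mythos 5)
Your proof is correct. Note that the paper itself supplies no proof of this lemma---it is quoted directly from \cite[Proposition 6.1]{niyogi}---and your argument (the Euclidean acceleration of a unit-speed geodesic is normal to $\manifold$, then a second-derivative test on $f(s)=\|z-\gamma(s)\|^2$ for $z=\gamma(t)+r\,n$ with $r<\tau$, using that $z$ is at distance exactly $r$ from $\manifold$, attained at $\gamma(t)$) is essentially the standard proof from that reference, and it exploits the condition number in exactly the same way the paper does in its proof of Lemma~\ref{lemma:niyogi geom result}. The only point worth making explicit is that $f''(t)\ge 0$ holds even when $t$ is an endpoint of $I$: you obtain $f'(t)=0$ directly from $\gamma''(t)\perp\gamma'(t)$ rather than from minimality, so a one-sided Taylor expansion still forces the second-order inequality, and the limit $r\nearrow\tau$ then gives $\Vert\gamma''(t)\Vert\le 1/\tau$ as claimed.
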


\begin{lemma}\label{lemma:angle btw tangents}
    \emph{\cite[Proposition 6.2]{niyogi}}
    Fix $p,q\in \manifold$. The angle between $\tan_{p}$ and $\tan_{q}$, $\angle[\tan_p,\tan_q]$, satisfies $\cos(\angle[\tan_p,\tan_q])\ge 1-d_{\manifold}\left(p,q\right)/\condition$.
\end{lemma}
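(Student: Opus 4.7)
My plan is to use parallel transport of tangent vectors along a minimizing geodesic from $p$ to $q$ to construct an isometry between $\tan_p$ and $\tan_q$, and then to control how far this isometry moves each unit vector by integrating the curvature bound from Lemma~\ref{lemma:max curv}.

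First, I would argue that we may restrict to the nontrivial regime $L := \gdist(p,q) \le \condition$, since principal angles lie in $[0,\pi/2]$ and their cosines are therefore nonnegative, so the claimed bound $1 - L/\condition$ is vacuous once $L \ge \condition$. Fix a unit-speed minimizing geodesic $\curve \colon [0,L] \to \manifold$ with $\curve(0) = p$ and $\curve(L) = q$. For an arbitrary unit vector $u \in \tan_p$, let $u(t) \in \tan_{\curve(t)}$ denote the parallel transport of $u$ along $\curve$. Two well-known properties of parallel transport for a submanifold of $\real^\dim$ will be used: $\|u(t)\| = 1$ for every $t$, and the ambient derivative $u'(t)$ is orthogonal to $\tan_{\curve(t)}$, so $u'(t)$ coincides with the second fundamental form $\mathrm{II}_{\curve(t)}(\curve'(t),u(t))$.

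The central step is then to bound $\|u'(t)\| \le 1/\condition$. Lemma~\ref{lemma:max curv}, applied to the unit-speed geodesic through $\curve(t)$ with initial velocity any unit $X \in \tan_{\curve(t)}$, gives $\|\mathrm{II}(X,X)\| \le 1/\condition$. The polarization identity
$$\mathrm{II}(X,Y) \;=\; \tfrac{1}{4}\bigl[\mathrm{II}(X+Y,X+Y) - \mathrm{II}(X-Y,X-Y)\bigr],$$
combined with bilinearity, then yields $\|\mathrm{II}(X,Y)\| \le (\|X\|^{2}+\|Y\|^{2})/(2\condition) = 1/\condition$ for unit $X,Y$. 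Applied with $X = \curve'(t)$ and $Y = u(t)$, this gives the required pointwise bound on $\|u'(t)\|$.

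Integrating along $\curve$ produces $\|u(L)-u\| \le \int_0^L \|u'(t)\|\, dt \le L/\condition$, and since both $u$ and $u(L)$ are unit vectors, $\iprod{u}{u(L)} = 1 - \tfrac{1}{2}\|u(L)-u\|^{2} \ge 1 - L^{2}/(2\condition^{2}) \ge 1 - L/\condition$, the last inequality using $L \le \condition$. Because $u(L) \in \tan_q$ is a unit vector produced from each unit $u \in \tan_p$ by the parallel-transport isometry, the definition $\cos(\angle[\tan_p,\tan_q]) = \min_u \max_v |\iprod{u}{v}|$ would then yield the claim. The main obstacle I anticipate is the passage from the scalar geodesic-curvature bound in Lemma~\ref{lemma:max curv} to a bilinear bound on $\mathrm{II}$; this rests on the polarization identity together with the identification of parallel transport on a submanifold of $\real^\dim$ with the statement that the ambient derivative of the transported vector is purely normal, which is the one piece of classical Riemannian geometry being imported from outside the Toolbox.
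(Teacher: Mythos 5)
Your proof is correct. Note that the paper itself supplies no proof of this lemma---it is imported verbatim as Proposition~6.2 of the cited reference---so there is no internal argument to compare against; your write-up is essentially a self-contained reconstruction of the standard proof from that source. The chain of steps all checks out: the reduction to $\gdist(p,q)\le\condition$ by nonnegativity of the min--max cosine, the identification of $\curve''$ with $\mathrm{II}(\curve',\curve')$ for a geodesic so that Lemma~\ref{lemma:max curv} gives $\|\mathrm{II}(X,X)\|\le\|X\|^2/\condition$, the polarization/parallelogram step giving $\|\mathrm{II}(X,Y)\|\le 1/\condition$ for unit $X,Y$, the integration of $\|u'(t)\|$ along the minimizing geodesic (which exists by completeness of the compact manifold), and the final passage from $\|u(L)-u\|\le L/\condition$ to $\langle u,u(L)\rangle\ge 1-L^2/(2\condition^2)\ge 1-L/\condition$, which indeed only needs $L\le 2\condition$. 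The one ingredient genuinely imported from outside the Toolbox is, as you say, the Gauss formula identifying the ambient derivative of a parallel field with the second fundamental form; that is standard and acceptable here.
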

The next lemma guarantees that two points separated by a small
Euclidean distance are also separated by a small geodesic distance,
and so the manifold does not ``curve back'' upon itself.
\begin{lemma}\label{lemma:d & dM}
    \emph{\cite[Proposition 6.3]{niyogi}}
    For $p,q\in \manifold$ with $\|q-p\|\le\condition/2$,
    we have
	   \begin{equation}
       \label{eq:niyogiTrue}
            d_{\manifold}\left(p,q\right)\le\condition-\condition\sqrt{1-\frac{2}{\condition}\|q-p\|}.
	   \end{equation}
\end{lemma}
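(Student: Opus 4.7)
My plan is to let $\gamma:[0,L]\to\manifold$ be a unit-speed minimizing geodesic from $p$ to $q$ (with $\gamma(0)=p$, $\gamma(L)=q$, and $L:=d_\manifold(p,q)$; such a geodesic exists by Hopf--Rinow since $\manifold$ is compact) and to lower-bound the Euclidean chord $\|q-p\|$ by projecting it onto the initial tangent $\gamma'(0)$ while controlling how fast $\gamma'$ rotates along the geodesic. Concretely, Lemma~\ref{lemma:max curv} gives $\|\gamma''(s)\|\le 1/\condition$, and since $\gamma'(s)$ is a unit vector this means $\gamma'$ traces a curve on the unit sphere at speed at most $1/\condition$, so $\angle[\gamma'(s),\gamma'(0)]\le s/\condition$ for all $s\in[0,L]$. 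Writing $q-p=\int_0^L\gamma'(s)\,ds$ and taking the inner product with the unit vector $\gamma'(0)$ then yields
\[
    \|q-p\|\ \ge\ \langle q-p,\gamma'(0)\rangle\ =\ \int_0^L\cos\!\bigl(\angle[\gamma'(s),\gamma'(0)]\bigr)\,ds\ \ge\ \int_0^L\cos(s/\condition)\,ds\ =\ \condition\sin(L/\condition),
\]
valid provided $L\le\pi\condition$, by monotonicity of cosine on $[0,\pi]$.

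Next I would invoke the elementary bound $\sin(x)\ge x-x^2/2$ for $x\ge 0$ (which follows from $\sin''(x)=-\sin(x)\ge -1$ by two integrations against the initial conditions $\sin(0)=0$ and $\cos(0)=1$) to deduce $\|q-p\|\ge L-L^2/(2\condition)$, equivalently $L^2-2\condition L+2\condition\|q-p\|\ge 0$. Since $\|q-p\|\le\condition/2$, this quadratic in $L$ has real roots $L^\pm=\condition\bigl(1\pm\sqrt{1-2\|q-p\|/\condition}\bigr)$, and the inequality forces either $L\le L^-$---which is exactly \eqref{eq:niyogiTrue}---or $L\ge L^+\ge\condition$.

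The main obstacle is therefore ruling out the upper branch $L\ge L^+$. If $\condition\le L\le\pi\condition$, the chord bound already derived gives $\|q-p\|\ge\condition\sin(1)>\condition/2$, directly contradicting the hypothesis $\|q-p\|\le\condition/2$. For the remaining range $L>\pi\condition$, the cosine integral bound above becomes vacuous and one must instead invoke the ``long-distance'' consequence of the bounded condition number recalled informally in Section~\ref{sec:mmodels}: if $d_\manifold(p,q)>\condition$, then $\|q-p\|>\condition/2$; its contrapositive forces $L\le\condition$, closing the argument. This long-distance property can itself be established by bootstrapping the same chord estimate on sub-geodesics (using that sub-paths of minimizing geodesics are themselves minimizing, so the estimate $\|\gamma(s_2)-\gamma(s_1)\|\ge\condition\sin((s_2-s_1)/\condition)$ holds whenever $s_2-s_1\le\pi\condition$), and I expect the careful handling of this bootstrap---together with verifying that the projection-based chord estimate extends past the $L\le\pi\condition$ regime---to be the most technical part of the full proof.
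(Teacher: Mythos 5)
Your derivation of the quadratic inequality $\|q-p\|\ge L-L^2/(2\condition)$ (with $L:=d_\manifold(p,q)$) via the tangent-projection and curvature bound is fine and essentially reproduces the first part of Niyogi's argument, which the paper simply cites. The real content of the paper's proof, however, is the step you defer to the end: ruling out the upper branch $L\ge\condition\bigl(1+\sqrt{1-2\|q-p\|/\condition}\bigr)$. Here your proposal has two genuine problems. First, the intermediate claim that $\condition\le L\le\pi\condition$ forces $\|q-p\|\ge\condition\sin(1)$ is false: your chord estimate gives $\|q-p\|\ge\condition\sin(L/\condition)$, and $\sin(L/\condition)$ is \emph{not} bounded below by $\sin(1)$ on $[1,\pi]$ --- it tends to $0$ as $L\to\pi\condition$, so the bound is vacuous precisely where you need it. (Projecting onto the midpoint tangent $\gamma'(L/2)$ instead of $\gamma'(0)$ would improve the estimate to $2\condition\sin(L/2\condition)$ and repair this particular range, but only postpones the problem.)

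Second, and more fundamentally, the proposed ``bootstrap'' for large $L$ cannot work, because it uses only the curvature bound of Lemma~\ref{lemma:max curv}, and bounded geodesic curvature alone does not prevent a manifold from returning arbitrarily close to itself at long geodesic distances. A planar spiral with radius of curvature everywhere exceeding $\condition$ but with successive windings separated by an arbitrarily small $\eps$ satisfies every chord estimate you can extract from $\|\gamma''\|\le1/\condition$, yet has points with huge geodesic distance and tiny Euclidean distance; such a manifold simply has reach $\le\eps/2$, not $\condition$. So the long-distance self-avoidance property (``$d_\manifold(p,q)\ge\condition$ implies $\|q-p\|\ge\condition/2$'') must be derived from the normal-bundle definition of the condition number itself, not from curvature. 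This is exactly what the paper's proof does: it minimizes $\|q-p\|$ over $\{q\in\manifold:d_\manifold(p,q)\ge\condition\}$, observes that an interior minimizer $\widehat{q}$ makes $p-\widehat{q}$ normal to $\manifold$ at $\widehat{q}$, and invokes Definition~\ref{def:cn} (with $p\in\manifold$) to force $\|p-\widehat{q}\|\ge2\condition$, while a boundary minimizer with $d_\manifold(p,\widehat{q})=\condition$ is handled by the quadratic inequality. Without an argument of this type --- one that actually uses the embeddedness of the normal bundle --- your proof does not close.
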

\begin{proof}
The first part of the proof of Proposition 6.3 in~\cite{niyogi} establishes that for any $p,q\in \manifold$,
\begin{equation}
\label{eq:niyogiCombined}
\|q-p\| \ge d_{\manifold}\left(p,q\right) - \frac{(d_{\manifold}\left(p,q\right))^2}{2\condition},
\end{equation}
which is satisfied only if \eqref{eq:niyogiTrue} is satisfied or if
\begin{equation}
\label{eq:niyogiFalse}
d_{\manifold}\left(p,q\right)\ge\condition+\condition\sqrt{1-\frac{2}{\condition}\|q-p\|}
\end{equation}
is satisfied. We provide the following argument to complete the proof.

For fixed $p \in \manifold$, let us consider
$$
\widehat{q} := \arg\min_{q \in \manifold, d_{\manifold}\left(p,q\right) \ge \condition} \|q-p\|.
$$
We know the minimizer $\widehat{q}$ exists because we are minimizing a continuous function over a compact set.  We consider two cases. First, if $d_{\manifold}\left(p,\widehat{q}\right) = \condition$, then by \eqref{eq:niyogiCombined}, we will have $\|\widehat{q}-p\| \ge \condition/2$. Second, if $d_{\manifold}\left(p,\widehat{q}\right) > \condition$, then there must exist an open neighborhood of $\widehat{q}$ on $\manifold$ over which the distance to $p$ is minimized at $\widehat{q}$. This implies that $p-\widehat{q}$ will be normal to $\manifold$ at $\widehat{q}$, which by the definition of condition number (and the fact that $p \in \manifold$) means that we must have $\|\widehat{q}-p\| \ge 2\condition$.

Now, for any $p,q \in \manifold$ such that $\|q-p\| < \condition/2$, \eqref{eq:niyogiTrue} would imply that $d_{\manifold}\left(p,q\right) < \condition$ and \eqref{eq:niyogiFalse} would imply that $d_{\manifold}\left(p,q\right) > \condition$. From the paragraph above, we see that if $d_{\manifold}\left(p,q\right) \ge \condition$, then $\|q-p\| \ge \|\widehat{q}-p\| \ge \condition/2$, and so we can rule out the possibility that \eqref{eq:niyogiFalse} is true. Thus, \eqref{eq:niyogiTrue} must hold for any $p,q \in \manifold$ with $\|q-p\| < \condition/2$.

For any $p,q \in \manifold$ such that $\|q-p\| = \condition/2$, \eqref{eq:niyogiTrue} would imply that $d_{\manifold}\left(p,q\right) \le \condition$ and \eqref{eq:niyogiFalse} would imply that $d_{\manifold}\left(p,q\right) \ge \condition$. From the paragraph above involving $\widehat{q}$, we see that any point $q \in \manifold$ satisfying both $d_{\manifold}\left(p,q\right) \ge \condition$ and $\|q-p\| = \condition/2$ would have to be a local minimizer of $\|q-p\|$ on the convex set and in fact would have to fall into the first case, implying that $d_{\manifold}\left(p,q\right) = \condition$ exactly. It follows that \eqref{eq:niyogiTrue} must hold for any $p,q \in \manifold$ with $\|q-p\| = \condition/2$.
\end{proof}

The next lemma concerns the invertibility of $\downarrow_p$ within the neighborhood of $p$ and is closely related to Lemma 5.3 in~\cite{niyogi}.
\begin{lemma}
	\label{lemma:injective map from Niyogi}
    For $p\in \manifold$, $\downarrow_p$ is invertible on $\intersection_\manifold(p,\condition/4)$.
\end{lemma}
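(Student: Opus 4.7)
The plan is to argue by contradiction. Suppose distinct points $q_1,q_2\in\manifold\cap\ball_N(p,\condition/4)$ satisfy $\downarrow_p(q_1-p)=\downarrow_p(q_2-p)$. Then $\downarrow_p(q_1-q_2)=0$, so the chord $q_1-q_2$ is orthogonal to $\tan_p$, i.e.\ $\angle[q_1-q_2,\tan_p]=\pi/2$. The strategy is to derive an \emph{upper} bound on $\angle[q_1-q_2,\tan_p]$ that is strictly less than $\pi/2$, yielding the desired contradiction and hence the injectivity of $\downarrow_p$ on $\intersection_\manifold(p,\condition/4)$.

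First, by the triangle inequality $\|q_1-q_2\|<\condition/2<2\condition$, so Lemma~\ref{lemma:niyogi geom result} (with $q_2$ in place of $p$ and $q_1$ in place of $q$) bounds the angle between the chord and $\tan_{q_2}$:
\[
\angle\!\left[q_1-q_2,\downarrow_{q_2}(q_1-q_2)\right]\le\sin^{-1}\!\left(\frac{\|q_1-q_2\|}{2\condition}\right)\le \sin^{-1}(1/4).
\]
Next I relate $\tan_{q_2}$ to $\tan_p$. Since $\|q_2-p\|<\condition/4<\condition/2$, Lemma~\ref{lemma:d & dM} gives $d_\manifold(p,q_2)<\condition(1-1/\sqrt{2})$, and then Lemma~\ref{lemma:angle btw tangents} yields $\cos(\angle[\tan_p,\tan_{q_2}])>1/\sqrt{2}$, i.e.\ $\angle[\tan_p,\tan_{q_2}]<\pi/4$.

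Finally, combining the two estimates via the spherical triangle inequality (the angle between a unit vector and a subspace is a distance on $\mathbb{S}^{N-1}$ and is subadditive when composed with the principal angle between two subspaces; alternatively this can be expressed through the projector identity \eqref{eq:principal angle and projection norms}) gives
\[
\angle[q_1-q_2,\tan_p]\le \sin^{-1}(1/4)+\pi/4<\pi/2,
\]
which contradicts $q_1-q_2\perp\tan_p$. The delicate point, and the only real obstacle, is the numerology: the radius $\condition/4$ must be chosen small enough that both the chord-to-tangent bound from Lemma~\ref{lemma:niyogi geom result} and the tangent-to-tangent bound from Lemmas~\ref{lemma:d & dM}--\ref{lemma:angle btw tangents} add to strictly less than $\pi/2$, and $\condition/4$ just barely achieves this. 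A cleaner but essentially equivalent packaging would bound the normal components $\|(q_i-p)-\downarrow_p(q_i-p)\|\le \|q_i-p\|^2/(2\condition)$ via Lemma~\ref{lemma:niyogi geom result} and play them against $\|q_1-q_2\|$ directly, but the angle-based route above is more transparent.
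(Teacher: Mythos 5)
Your proof is correct, but it takes a genuinely different route from the paper's. The paper first invokes the inverse function theorem (via Lemma~\ref{Nonsing of projection}) to set up a critical scale $c$ at which injectivity could fail, and then runs a second-order Taylor expansion of the geodesic joining the two offending points: the curvature bound $\|\curve''\|\le 1/\condition$ (Lemma~\ref{lemma:max curv}) together with Lemma~\ref{lemma:d & dM} shows the chord direction lies within $\tfrac12-\tfrac12\sqrt{1-4c}$ of the tangent vector $\curve'(0)\in\tan_q$, and this is played against the tangent-tilt bound $\cos(\angle[\tan_q,\tan_p])\ge\sqrt{1-2c}$ from Lemma~\ref{lemma:angle btw tangents} to reach the numerical contradiction $\sqrt{1-2c}\le\tfrac12-\tfrac12\sqrt{1-4c}$. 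You instead replace the geodesic Taylor expansion by Lemma~\ref{lemma:niyogi geom result}, which bounds the chord-to-tangent angle at $q_2$ by $\sin^{-1}(1/4)$ directly from the definition of the condition number, and then chain this with the tangent-to-tangent bound $\angle[\tan_p,\tan_{q_2}]\le\pi/4$ (obtained from the same Lemmas~\ref{lemma:d & dM} and~\ref{lemma:angle btw tangents}) through the spherical triangle inequality. This avoids both the curvature integral and the inverse-function-theorem preamble and is shorter; the one step that deserves an explicit word is the ``subadditivity'' claim, which follows from \eqref{eq:principal angle and projection norms}: for any unit $w\in\tan_{q_2}$ one has $\|w-\downarrow_p w\|\le\sin(\angle[\tan_p,\tan_{q_2}])$, hence $\|\downarrow_p w\|\ge\cos(\angle[\tan_p,\tan_{q_2}])$, so $w$ is within angle $\angle[\tan_p,\tan_{q_2}]$ of $\tan_p$ and the angles add. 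One small quibble: your remark that $\condition/4$ ``just barely'' works is overstated, since $\sin^{-1}(1/4)+\pi/4\approx 59.5^\circ$ leaves ample room below $90^\circ$ --- just as the paper's inequality fails with room to spare at $c=1/4$ --- so neither argument is tight at this radius.
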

\begin{proof}
	Lemma \ref{Nonsing of projection} states that the derivative of $\downarrow_{p}$ is nonsingular on $\intersection_\manifold(p,\condition/2)$. The inverse function theorem then
	implies that there exists an $r>0$ such that $\downarrow_{p}$ is invertible on $\intersection_{\manifold}(p,r\condition)$; without loss of generality assume that $r<1/4$ (otherwise we are done). Now, suppose that there exists $c>0$ and distinct points  $q,z\in\manifold$ such that $\|q-p\|=c\condition$, $\|z-p\|\le c\condition$, and $\downarrow_p(q-p)=\downarrow_p(z-p)$.
	In particular, this implies that
	\begin{equation}
		z-q\perp \tan_{p}.\label{eq:implication of noninvertibility}
	\end{equation}
	That is, for any unit vector $v\in \tan_{p}$, we have
	\begin{equation}
		\left\langle z-q,v\right\rangle =0.\label{eq:implication of noninvertibility 2}
	\end{equation}
	Our goal is to show that $c>1/4$. Suppose, in contradiction that indeed $c\le1/4$.
	Let $\curve(\cdot)$ be the unit-speed geodesic curve connecting $q$
	to $z$, such that $\curve(0)=q$ and $\curve(d_{\manifold}(q,z))=z$.
	By applying the fundamental theorem of calculus twice, we realize
	that
	\begin{align*}
		z-q & =\curve\left(d_{\manifold}(q,z)\right)-\curve(0)\\
		& =\int_{0}^{d_{\manifold}(q,z)}\curve'(\alpha)\, d\alpha\\
		& =\int_{0}^{d_{\manifold}(q,z)}\left(\curve'(0)+\int_{0}^{\alpha}\curve''(\beta)\, d\beta\right)\, d\alpha\\
		& =\curve'(0)\cdot d_{\manifold}(q,z)+\int_{0}^{d_{\manifold}(q,z)}\int_{0}^{\alpha}\curve''(\beta)\, d\beta d\alpha.
	\end{align*}
	Invoking Lemma \ref{lemma:max curv}, we can write that
	\begin{align}
		\Vert(z-q)-\curve'(0)\cdot d_{\manifold}(q,z)\Vert
		& \le\int_{0}^{d_{\manifold}(q,z)}\int_{0}^{\alpha}\Vert \curve''(\beta)\Vert\, d\beta d\alpha\nonumber \\
		& \le\frac{1}{\condition}\int_{0}^{d_{\manifold}(q,z)}\int_{0}^{\alpha}d\beta d\alpha\nonumber \\
		& =\frac{d_{\manifold}^{2}(q,z)}{2\condition}.\label{eq:diff btw tangent and connecting segment}
	\end{align}
	Meanwhile, having $\Vert z-q\Vert\le2c\condition$ implies, via Lemma \ref{lemma:d & dM}, that
	\begin{equation}
		 d_{\manifold}(q,z)\le\condition-\condition\sqrt{1-4c},\label{eq:geodesic dist of q & z}
	\end{equation}
	which, after plugging back into \eqref{eq:diff btw tangent and connecting segment},
	yields
	\begin{equation}
	\left\Vert \frac{z-q}{d_{\manifold}(q,z)}-\curve'(0)\right\Vert \le\frac{1}{2}-\frac{1}{2}\sqrt{1-4c}.\label{eq:diff btw tan and connecting line 2}
	\end{equation}
	So, for any unit vector $v\in \tan_{p}$, we have
	\begin{align}
		\left|\left\langle \curve'(0),v\right\rangle \right| & \le\left|\left\langle \curve'(0)-\frac{z-q}{d_{M}(q,z)},v\right\rangle \right|+\left|\left\langle \frac{z-q}{d_{M}(q,z)},v\right\rangle \right|\nonumber \\
		& =\left|\left\langle \curve'(0)-\frac{z-q}{d_{M}(q,z)},v\right\rangle \right|\nonumber \\
		& \le\left\Vert \curve'(0)-\frac{z-q}{d_{M}(q,z)}\right\Vert \nonumber \\
		& \le\frac{1}{2}-\frac{1}{2}\sqrt{1-4c},\label{eq:contradiction part 1}
	\end{align}
	where the first line follows from the triangle inequality, and the
	second line uses \eqref{eq:implication of noninvertibility 2}. The
	last line uses \eqref{eq:diff btw tan and connecting line 2}. To
	reiterate, \eqref{eq:contradiction part 1} is valid for any unit vector
	$v\in \tan_{p}$.

	On the other hand, Lemma \ref{lemma:angle btw tangents} implies that
	\begin{align}
		\cos\left(\angle [\tan_{q},\tan_{p}]\right) & \ge1-\frac{1}{\condition}d_{\manifold}(p,q)\nonumber \\
		& \ge \sqrt{1-\frac{2}{\condition}\Vert q-p\Vert}\nonumber\\
		& =\sqrt{1-2c},\label{eq:bnd on the ang btw subspaces}
	\end{align}
	where the second line follows from Lemma \ref{lemma:d & dM}, and the last line uses $\Vert q-p\Vert=c\condition$. By the definition of the angle between
	subspaces, \eqref{eq:bnd on the ang btw subspaces} implies that there
	exists a unit vector $v_{0}\in \tan_{p}$ such that
	\begin{equation}
		\left|\left\langle v_{0},\curve'(0)\right\rangle \right|\ge\sqrt{1-2c}\label{eq:contradiction part 2}
	\end{equation}
	because $\curve'(0)\in\tan_q$. Combining this bound with \eqref{eq:contradiction part 1} for $v=v_{0}$,
	we realize that
	\[
		  \sqrt{1-2c}\le\frac{1}{2}-\frac{1}{2}\sqrt{1-4c}.
	\]
	This inequality is not met for any $c\le1/4$. Thus, indeed $c>1/4$.
	In particular, this means that $\downarrow_{p}$ is invertible on $\intersection_\manifold(p,\condition/4)$.
\end{proof}

The next three lemmas are of importance when approximating the long and short chords on $\manifold$ with, respectively, nearby long chords and vectors on the nearby tangent planes.

\begin{lemma}\emph{\cite[Implicit in Lemma 4.1]{clarkson2008tighter}}\label{lemma:implicit in Clarkson}
	Consider two pair of points $a_{1},a_{2}$ and $b_{1},b_{2}$, all in $\real^N$, such that $\left\Vert a_{1}-b_{1}\right\Vert ,\left\Vert a_{2}-b_{2}\right\Vert \le r$, and that $\left\Vert a_{1}-a_{2}\right\Vert \ge \kappa\sqrt{r}$, for some $r,\kappa>0$. Then $\left\Vert U\left(a_{1},a_{2}\right)-U\left(b_{1},b_{2}\right)\right\Vert \le4\kappa^{-1}\sqrt{r}$.
\end{lemma}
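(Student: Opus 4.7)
The plan is to reduce everything to a one-line inequality about the difference of two unit vectors. Set $\alpha := a_2-a_1$ and $\beta := b_2-b_1$, so that $U(a_1,a_2)=\alpha/\|\alpha\|$ and $U(b_1,b_2)=\beta/\|\beta\|$. The two hypotheses translate immediately into (i) $\|\alpha\| \ge \kappa\sqrt{r}$ and (ii) via the triangle inequality
$$
\|\alpha-\beta\| = \|(a_2-a_1)-(b_2-b_1)\| \le \|a_2-b_2\|+\|a_1-b_1\| \le 2r.
$$
So the problem is now purely one about two vectors in $\mathbb{R}^N$ whose relative perturbation is controlled.

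Next I would apply the standard identity for the difference of two nonzero vectors normalized to unit length:
$$
\frac{\alpha}{\|\alpha\|}-\frac{\beta}{\|\beta\|} = \frac{\|\beta\|\,\alpha-\|\alpha\|\,\beta}{\|\alpha\|\,\|\beta\|} = \frac{\alpha-\beta}{\|\alpha\|}+\frac{\|\beta\|-\|\alpha\|}{\|\alpha\|\,\|\beta\|}\,\beta.
$$
Taking norms and using the reverse triangle inequality $|\|\beta\|-\|\alpha\||\le\|\alpha-\beta\|$ (and $\|\beta\|/\|\beta\|=1$ in the second term) yields
$$
\left\|\frac{\alpha}{\|\alpha\|}-\frac{\beta}{\|\beta\|}\right\| \le \frac{\|\alpha-\beta\|}{\|\alpha\|}+\frac{|\|\beta\|-\|\alpha\||}{\|\alpha\|}\le \frac{2\|\alpha-\beta\|}{\|\alpha\|}.
$$

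Finally I would substitute the two bounds from the first paragraph: the numerator is at most $2r$ and the denominator is at least $\kappa\sqrt{r}$, giving
$$
\left\|U(a_1,a_2)-U(b_1,b_2)\right\| \le \frac{2\cdot 2r}{\kappa\sqrt{r}} = \frac{4\sqrt{r}}{\kappa},
$$
which is the claimed inequality. There is no real obstacle here: the only mild point of care is to make sure $\|\beta\|>0$ so that $U(b_1,b_2)$ is well-defined, which is automatic whenever the bound is nontrivial (if $4\sqrt{r}/\kappa\ge 2$ the conclusion holds trivially, and otherwise $\|\beta\|\ge\|\alpha\|-\|\alpha-\beta\|\ge\kappa\sqrt{r}-2r>0$). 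Thus the entire argument is just the triangle inequality followed by the standard unit-normalization identity.
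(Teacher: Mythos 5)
Your proof is correct. Note that the paper itself does not supply a proof of this lemma --- it is cited as implicit in Clarkson's Lemma 4.1 --- so there is no in-paper argument to compare against; your derivation (triangle inequality to get $\|\alpha-\beta\|\le 2r$, followed by the standard perturbation bound $\bigl\|\alpha/\|\alpha\|-\beta/\|\beta\|\bigr\|\le 2\|\alpha-\beta\|/\|\alpha\|$) is exactly the intended elementary route, and your handling of the degenerate case $\|\beta\|=0$ is a welcome extra precaution that the paper's statement glosses over.
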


\begin{lemma}\label{lemma:difference of projectors on nearby tangent planes}
	For $a,b\in\manifold$ with $\|a-b\|\le l_1<\condition/2$, it holds true that
	\begin{equation*}
		\|\downarrow_a v-\downarrow_b v\|\le \sqrt{\frac{2l_1}{\condition}},
	\end{equation*}
	for every unit vector $v\in\real^N$.
\end{lemma}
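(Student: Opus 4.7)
The plan is to reduce the bound on $\|\downarrow_a v - \downarrow_b v\|$ to a bound on the principal angle between the tangent subspaces $\tan_a$ and $\tan_b$ and then control that angle using the already-stated Toolbox lemmas. Since $v$ is a unit vector, the identity \eqref{eq:principal angle and projection norms} immediately gives
$$
\|\downarrow_a v - \downarrow_b v\| \le \|\downarrow_a(\cdot)-\downarrow_b(\cdot)\|_{2,2} = \sin\bigl(\angle[\tan_a,\tan_b]\bigr),
$$
so it suffices to prove $\sin(\angle[\tan_a,\tan_b]) \le \sqrt{2l_1/\condition}$.

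To bound this sine, I would first apply Lemma~\ref{lemma:angle btw tangents} to obtain $\cos(\angle[\tan_a,\tan_b]) \ge 1 - d_\manifold(a,b)/\condition$. The hypothesis $\|a-b\|\le l_1 < \condition/2$ allows me to invoke Lemma~\ref{lemma:d & dM}, which converts the Euclidean distance bound into the geodesic distance bound
$$
d_\manifold(a,b) \le \condition - \condition\sqrt{1 - \tfrac{2}{\condition}\|a-b\|} \le \condition - \condition\sqrt{1 - \tfrac{2l_1}{\condition}}.
$$
Substituting this into the cosine bound yields $\cos(\angle[\tan_a,\tan_b]) \ge \sqrt{1 - 2l_1/\condition}$, which is the same chain of inequalities already carried out in \eqref{eq:bnd on the ang btw subspaces} inside the proof of Lemma~\ref{lemma:injective map from Niyogi}.

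The last step is just trigonometry: squaring and using $\sin^2 + \cos^2 = 1$ gives
$$
\sin^2\bigl(\angle[\tan_a,\tan_b]\bigr) \le 1 - \left(1 - \tfrac{2l_1}{\condition}\right) = \tfrac{2l_1}{\condition},
$$
and combining this with the opening inequality yields the desired estimate. I do not expect any real obstacle here since all the essential geometric work has already been done in the preceding lemmas of the Toolbox; the role of this lemma is to package those estimates into the form most useful for later chaining arguments, where tangent-plane approximations of short chords play a key role.
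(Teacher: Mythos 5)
Your proposal is correct and follows essentially the same route as the paper's proof: reduce to the principal angle via \eqref{eq:principal angle and projection norms}, bound the geodesic distance with Lemma~\ref{lemma:d \& dM}, feed that into Lemma~\ref{lemma:angle btw tangents} to get $\cos(\angle[\tan_a,\tan_b])\ge\sqrt{1-2l_1/\condition}$, and convert to the sine bound. The only difference is that you spell out the final $\sin^2+\cos^2=1$ step, which the paper leaves implicit.
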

\begin{proof}
	It follows from \eqref{eq:principal angle and projection norms} that
	\begin{equation}\label{eq:op norm and angle btw subspaces}
		\|\downarrow_a v-\downarrow_b v\|\le\|(\downarrow_a-\downarrow_b)(\cdot)\|_{2,2}=\sin(\angle[\tan_a,\tan_b]).
	\end{equation}
On the other hand, since $\|a-b\|\le l_{1}<\condition/2$, Lemma \ref{lemma:d & dM} implies that
	\[
		d_{\manifold}\left(a,b\right)\le\condition-\condition\sqrt{1-\frac{2l_{1}}{\condition}},
	\]
	 and thus, using Lemma \ref{lemma:angle btw tangents}, we arrive at
	\[
		\cos\left(\angle\left[\tan_{a},\tan_{b}\right]\right)\ge\sqrt{1-\frac{2l_{1}}{\condition}}.
	\]
	Plugging back the estimate above into \eqref{eq:op norm and angle btw subspaces}, we conclude that $\|\downarrow_a v-\downarrow_b v\|\le \sqrt{2l_1/\condition}$, as claimed.
\end{proof}

\begin{lemma}\label{lemma:distance of a shorrt chord to a nearby tangent plane}
	Fix $p\in\manifold$, and take two points $x_1,x_2\in\manifold$ such that $\|x_1-p\|\le l_1$ and $\|x_2-x_1\|\le l_2$, $l_1,l_2<\condition/2$. Then, we have that
	\[
		\|U(x_1,x_2)-\downarrow_p U(x_1,x_2)\|\le \sqrt{\frac{2l_1}{\condition}}+\frac{l_2}{2\condition}.
	\]
\end{lemma}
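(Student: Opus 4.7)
The plan is to split the quantity via the triangle inequality using an intermediate projection onto the tangent plane at $x_1$ rather than at $p$:
\begin{equation*}
\|U(x_1,x_2)-\downarrow_p U(x_1,x_2)\| \le \|U(x_1,x_2)-\downarrow_{x_1} U(x_1,x_2)\| + \|\downarrow_{x_1} U(x_1,x_2)-\downarrow_p U(x_1,x_2)\|.
\end{equation*}
The motivation is that each of these two pieces is separately controlled by one of the two hypotheses: the first term quantifies how much the chord $x_2-x_1$ (viewed as a direction) tilts out of the tangent plane at its own footpoint $x_1$, which should depend on $l_2$; the second term quantifies how much the projections onto $\tan_{x_1}$ and $\tan_p$ can differ, which should depend on $l_1$.

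For the first term, I would apply Lemma \ref{lemma:niyogi geom result} with the roles of $p$ and $q$ taken by $x_1$ and $x_2$. Since $\|x_2-x_1\|\le l_2<\condition/2<2\condition$, the lemma yields $\angle[x_2-x_1,\downarrow_{x_1}(x_2-x_1)]\le \sin^{-1}(\|x_2-x_1\|/(2\condition))$. Because $U(x_1,x_2)$ is a unit vector, the quantity $\|U(x_1,x_2)-\downarrow_{x_1}U(x_1,x_2)\|$ is exactly the sine of this angle, hence bounded by $\|x_2-x_1\|/(2\condition)\le l_2/(2\condition)$.

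For the second term, I would invoke Lemma \ref{lemma:difference of projectors on nearby tangent planes} with $a=x_1$, $b=p$, and $v=U(x_1,x_2)$. Since $\|x_1-p\|\le l_1<\condition/2$, the hypothesis is satisfied and the lemma immediately delivers $\|\downarrow_{x_1}U(x_1,x_2)-\downarrow_p U(x_1,x_2)\|\le \sqrt{2l_1/\condition}$.

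Adding the two bounds gives the claim. There is no real obstacle here, only a bookkeeping question about which tangent plane to route through; the natural choice is $\tan_{x_1}$, because the tilt of the chord direction out of $\tan_{x_1}$ follows directly from the condition number bound on short chords (Lemma \ref{lemma:niyogi geom result}), while the discrepancy of tangent projections at nearby points is exactly what Lemma \ref{lemma:difference of projectors on nearby tangent planes} is designed to measure.
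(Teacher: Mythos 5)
Your proof is correct and follows essentially the same route as the paper's: the same triangle-inequality split through $\downarrow_{x_1}$, with Lemma~\ref{lemma:niyogi geom result} bounding the tilt of the chord out of $\tan_{x_1}$ by $l_2/(2\condition)$ and Lemma~\ref{lemma:difference of projectors on nearby tangent planes} bounding the projector discrepancy by $\sqrt{2l_1/\condition}$. Nothing is missing.
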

\begin{proof}
	The triangle inequality implies that
	\begin{equation}\label{eq:triangle ineq for the distance btw short chord and tangent plane}
		\|U(x_1,x_2)-\downarrow_p U(x_1,x_2)\|\le \|U(x_1,x_2)-\downarrow_{x_1} U(x_1,x_2)\|+\|\downarrow_{x_1}U(x_1,x_2)-\downarrow_p U(x_1,x_2)\|.
	\end{equation}
	Since $\|x_{2}-x_{1}\|\le l_{2}<2\condition$, Lemma \ref{lemma:niyogi geom result}
	is the right tool to bound the first term on the right hand side of \eqref{eq:triangle ineq for the distance btw short chord and tangent plane}:
	\begin{align}
		\left\Vert U\left(x_{1},x_{2}\right)-\downarrow_{x_{1}}U\left(x_{1},x_{2}\right)\right\Vert  & =\sin\left(\angle\left[U\left(x_{1},x_{2}\right),\downarrow_{x_{1}}U\left(x_{1},x_{2}\right)\right]\right)\nonumber \\
		& =\sin\left(\angle\left[x_{2}-x_{1},\downarrow_{x_{1}}(x_{2}-x_{1})\right]\right)\nonumber \\
		& \le\frac{l_{2}}{2\condition}.\label{eq:triangle p1}
	\end{align}
	Since $\|x_1-p\|\le l_1<\condition/2$, a bound on the second term follows from an application of Lemma \ref{lemma:difference of projectors on nearby tangent planes}:
	\begin{equation}
		\|\downarrow_{x_1}U(x_1,x_2)-\downarrow_p U(x_1,x_2)\|\le \sqrt{\frac{2l_1}{\condition}}.\label{eq:triangle p2}
	\end{equation}
	Combining \eqref{eq:triangle p1} and \eqref{eq:triangle p2} immediately proves our claim.
\end{proof}

\notechecked{moved the next lemma (and its proof) here from later appendices} We will also need the following result regarding the local properties of $\manifold$, which is closely related to Lemma 5.3 in~\cite{niyogi}.
	
\begin{lemma}\label{lemma:revised Niyogi 5.3}
Let $p\in\manifold$ and $r\le\condition/4$. Then the following holds:
	\begin{equation*}
		\emph{\mbox{vol}}_K(\intersection_\manifold(p,r))\ge \left(1-\frac{r^2}{4\condition^2}\right)^{\frac{K}{2}}\,r^K V_{\ball_K},
	\end{equation*}
	where {\em $\Vol_K(\cdot)$} measures the $K$-dimensional volume.
\end{lemma}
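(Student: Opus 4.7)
The plan is to prove this by showing that the orthogonal projection $\downarrow_p$ sends $\intersection_\manifold(p,r)$ onto a set in $\tan_p$ which contains a $K$-dimensional ball of the right radius. Since projection is a contraction, the $K$-volume can only decrease under $\downarrow_p$, giving us the desired lower bound.

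First I would set $\Omega := \manifold \cap \ball_N(p,r)$ (undoing the translation for convenience) and restrict $\downarrow_p$ to $\Omega$, regarding it as a smooth map into $\tan_p$. By hypothesis $r \le \condition/4$, so by Lemma~\ref{lemma:injective map from Niyogi} this restricted map is injective. Since $\Omega$ and $\tan_p$ are smooth manifolds of the same dimension $K$, invariance of domain (equivalently, the inverse function theorem combined with Lemma~\ref{Nonsing of projection}) then implies that $\downarrow_p$ is an open map from $\Omega$ to $\tan_p$, so $\downarrow_p(\Omega)$ is open, contains $\downarrow_p(p)=0$, and its topological boundary in $\tan_p$ must consist of limit points coming from $\partial\Omega \subseteq \{q\in\manifold : \|q-p\|=r\}$.

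Next I would use Lemma~\ref{lemma:niyogi geom result} to control how far these boundary images sit from the origin. For any $q\in\manifold$ with $\|q-p\|=r$, the lemma gives $\angle[q-p,\downarrow_p(q-p)] \le \sin^{-1}(r/2\condition)$, and hence
\begin{equation*}
\|\downarrow_p(q-p)\| = \|q-p\|\cos\angle[q-p,\downarrow_p(q-p)] \ge r\sqrt{1-\tfrac{r^2}{4\condition^2}}.
\end{equation*}
Thus every point of $f(\partial\Omega)$ lies outside the open ball $\ball_{\tan_p}(0,r\sqrt{1-r^2/(4\condition^2)})$. Combined with the openness/connectedness argument above, this forces the image $\downarrow_p(\Omega)$ to contain the entire ball $\ball_{\tan_p}(0,r\sqrt{1-r^2/(4\condition^2)})$: otherwise, a straight line from $0$ to any missing interior point would exit the image at some intermediate boundary point, contradicting the norm lower bound.

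Finally, orthogonal projection has operator norm $1$, so the Jacobian of $\downarrow_p|_\Omega$ has all singular values in $[0,1]$, and its Jacobian determinant is bounded above by $1$. By the change-of-variables formula (valid since $\downarrow_p$ is injective on $\Omega$),
\begin{equation*}
\Vol_K(\Omega) \ge \Vol_K(\downarrow_p(\Omega)) \ge \Vol_K\!\left(\ball_{\tan_p}\!\bigl(0, r\sqrt{1-r^2/(4\condition^2)}\bigr)\right) = \left(1-\frac{r^2}{4\condition^2}\right)^{K/2} r^K V_{\ball_K},
\end{equation*}
which is the claimed bound. The main obstacle I anticipate is step three: justifying rigorously that the image contains a full ball (and not merely that its boundary is far from the origin) requires the invariance-of-domain/openness argument together with a connectedness observation; everything else is a direct application of lemmas already in the Toolbox.
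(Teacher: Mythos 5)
Your proposal is correct and follows essentially the same route as the paper's proof: both establish that $\downarrow_p(\intersection_\manifold(p,r))$ is open and contains the ball of radius $r\sqrt{1-r^2/(4\condition^2)}$ in $\tan_p$ by combining the injectivity result (Lemma~\ref{lemma:injective map from Niyogi}), the angle bound of Lemma~\ref{lemma:niyogi geom result} applied to boundary points at distance $r$ from $p$, and a topological argument identifying where the image can first fail to cover, before concluding via the non-expansiveness of orthogonal projection. The only difference is cosmetic: the paper grows a ball $B_{\tan_p}(s)$ until it first meets a point outside the image and uses a compactness/subsequence argument to locate its preimage on $\{\|q-p\|=r\}$, whereas you run a segment from the origin to a hypothetical missing point; both resolve the same "main obstacle" you correctly flagged.
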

\begin{proof}
	As in the proof of Lemma 5.3 in \cite{niyogi}, we will show that for some $r'>0$ to be defined below,
	\[
		B_{\tan_p}(r')\subset\downarrow_p(A_\manifold(p,r)),
	\]
	as our claim follows directly from the inclusion above. To show the above inclusion, we use the following argument. Let us denote the inverse of $\downarrow_p$ on $\intersection_\manifold(p,\condition/4)$ with $g(\cdot)$.

	From Lemma~\ref{lemma:injective map from Niyogi}, $\downarrow_p$ is invertible on $A_\manifold(p,r)$ and therefore  $\downarrow_p(\intersection_\manifold(p,r))$ is an open set. Thus there exists $s>0$ such that $B_{\tan_p}(s)\subset\downarrow_p(\intersection_\manifold(p,r))$. We can keep increasing $s$ until at $s=s^*$
	we reach a point $y$  on the boundary of the closure of $B_{\tan_p}(s^\ast)$ such that $y\notin\downarrow_p(\intersection_\manifold(p,r))$. Consider a sequence $\{y_i\}\subset B_{\tan_p}(s^*)\subset\downarrow_p(\intersection_\manifold(p,r))$ such that $y_i\rightarrow y$ when $i\rightarrow\infty$. Note that $\{g(y_i)\}\subset \intersection_\manifold(p,r)$ and, because every sequence in a compact space contains a convergent subsequence, there exist a convergent sebsequence $\{g(y_{i_k})\}$ and $x$ in the closure of $\intersection_\manifold(p,r)$ such that $g(y_{i_k})\rightarrow x$.
	Since $\downarrow_p$ is continuous, $\downarrow_p x=y$. Therefore  $y=\downarrow_p x\notin\downarrow_p(\intersection_\manifold(p,r))$, and $x\notin \intersection_\manifold(p,r)$ and thus $x$ is on the boundary of the closure of $\intersection_\manifold(p,r)$ and $\|x\|=r$. Now we invoke Lemma \ref{lemma:niyogi geom result} with $q=x+p$ to obtain that
	\[
		\cos(\angle[x,y])\ge \sqrt{1-\frac{r^2}{4\condition^2}}.
	\]
	It follows that
	\begin{align*}
		s^*&=\Vert y\Vert\\
		& =\cos(\angle[x,y])\cdot r\\
		&\ge \sqrt{1-\frac{r^2}{4\condition^2}}\cdot r
		=:r',
	\end{align*}
	and thus $B_{\tan_p}(r')\subset \downarrow_{p}(\intersection_{\manifold}(p,r))$.
This completes the proof of Lemma \ref{lemma:revised Niyogi 5.3} since
$$
\Vol_K(\intersection_{\manifold}(p,r))\ge \Vol_K(\downarrow_p(\intersection_{\manifold}(p,r)))\ge \Vol_K(\ball_{\tan_p}(r'))=(r')^K\Vol(\ball_K),
$$
where the first inequality holds because projection onto a subspace is non-expansive.
\end{proof}

\notechecked{the lemma below and its proof are new} We close this section with a list of  properties of the Dirichlet kernel which are later used in the proof of Lemma \ref{lemma:example} (about the condition number of the complex exponential curve).
\begin{lemma}\label{lem:(Dirichlet-kernel)}
(Dirichlet kernel)\label{lem:(Properties of Dirichlet-kernel)} For $z\in[-1/2,1/2]$, the Dirichlet kernel takes $z$ to
\[
D_{N}(z):=\frac{\sin(\pi Nz)}{\sin(\pi z)}.
\]
If $|z|> 2/N$, then it holds that
\[
\left|D_{N}(z)\right|\le\alpha_{1}N,
\]
with $\alpha_1\approx 0.23$.
Moreover, there exists some $\alpha_2>0$ and $N_{2}:=N_2(\alpha_2)$, such that the following holds for every $N>N_2$:
$$
|D_N(z)|\le N\left(1-\frac{(N\pi z)^2}{40}\right) + \alpha_2 N z^2
$$
for all $|z|\le 2/N$.
\end{lemma}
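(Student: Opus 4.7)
The plan is to establish the two bounds separately, working directly from the explicit form $D_N(z) = \sin(\pi N z)/\sin(\pi z)$ and employing Taylor-type bounds on $\sin$.

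For the first bound (when $|z| > 2/N$, with $|z| \le 1/2$), I would use $|\sin(\pi N z)| \le 1$ in the numerator. For the denominator, monotonicity of $|\sin(\pi z)|$ on $[0, 1/2]$ gives $|\sin(\pi z)| \ge \sin(2\pi/N)$, and the elementary lower bound $\sin(x) \ge x - x^3/6$ yields $\sin(2\pi/N) \ge (2\pi/N)(1 - O(1/N^2))$. Hence $|D_N(z)| \le 1/\sin(2\pi/N) \le (N/(2\pi))(1 + O(1/N^2))$, and since $1/(2\pi) \approx 0.159$, a constant $\alpha_1 \approx 0.23$ suffices, with direct numerical inspection handling the small cases $N = 5, 6, \dots$ for which the relevant interval $(2/N, 1/2]$ is non-empty but narrow.

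For the second bound (when $|z| \le 2/N$), the key is the auxiliary inequality $|\sin(x)| \le x - x^3/40$ for $x \in [0, 2\pi]$, which I would establish in two stages. On $[0, \pi]$, the alternating Taylor series $\sin(x) = x - x^3/6 + x^5/120 - \cdots$ with decreasing-magnitude terms gives $\sin(x) \le x - x^3/6 + x^5/120$, and the bound reduces to $x^5/120 \le (1/6 - 1/40)x^3 = 17x^3/120$, i.e., $x^2 \le 17$, which holds. On $[\pi, 2\pi]$, where $\sin$ is non-positive, set $\tilde h(x) := x - x^3/40 + \sin(x)$. A direct calculation shows $\tilde h'(x) = 1 - 3x^2/40 + \cos(x) < 0$ throughout this interval (combining the crude bound $\cos(x) \le 1$ with $3x^2/40 > 1$ for $x > \sqrt{40/3} \approx 3.65$, and doing a short direct check on the subinterval $[\pi, 3.65]$). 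Hence $\tilde h$ is monotonically decreasing on $[\pi, 2\pi]$, so $\tilde h(x) \ge \tilde h(2\pi) = 2\pi(1 - \pi^2/10) > 0$, which yields the desired inequality.

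Substituting $x = \pi N z$ into the auxiliary inequality yields $|\sin(\pi N z)| \le \pi N|z|(1 - (\pi N z)^2/40)$ throughout $|z| \le 2/N$. For the denominator, since $|\pi z| \le 2\pi/N$ is small for $N$ large, the Taylor lower bound $\sin(x) \ge x - x^3/6$ gives $|\sin(\pi z)| \ge \pi|z|(1 - (\pi z)^2/6)$. Combining, and using $1/(1 - t) \le 1 + t + 2t^2$ for $|t| \le 1/2$ with $t = (\pi z)^2/6 = O(1/N^2)$:
\begin{equation*}
|D_N(z)| \;\le\; N \cdot \frac{1 - (\pi N z)^2/40}{1 - (\pi z)^2/6} \;\le\; N\bigl(1 - (\pi N z)^2/40\bigr)\bigl(1 + (\pi z)^2/6 + O(1/N^4)\bigr).
\end{equation*}
Distributing and retaining only the dominant correction term $N(\pi z)^2/6 = (\pi^2/6)Nz^2$, one obtains the claimed bound with $\alpha_2 = \pi^2/6 + o(1)$, valid for all $N$ beyond some threshold $N_2 = N_2(\alpha_2)$. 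The main obstacle is establishing the auxiliary inequality uniformly on $[\pi, 2\pi]$, where no pure Taylor argument is available and the inequality becomes tight near $x = 2\pi$ (both sides approaching zero), forcing the monotonicity argument sketched above.
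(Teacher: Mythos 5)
Your proposal is correct, and it is in places more self-contained than the paper's own argument. For the regime $|z|>2/N$ the paper simply cites the known $-13$ dB peak side-lobe amplitude of the Dirichlet kernel from a signal-processing reference (together with the observation that the peak side-lobe lies within $2/N$ of the origin), whereas you derive the bound directly from $|\sin(\pi N z)|\le 1$ and the monotone lower bound $|\sin(\pi z)|\ge \sin(2\pi/N)\ge (2\pi/N)(1-O(N^{-2}))$; this is elementary, avoids the external citation, and in fact yields the sharper asymptotic constant $1/(2\pi)\approx 0.159$ rather than $0.23$. For the regime $|z|\le 2/N$ your route coincides with the paper's: both hinge on the inequality $|\sin x|\le x-x^3/40$ for $x\in[0,2\pi]$ followed by a Taylor lower bound on the denominator and absorption of the $O(z^2)$ correction into $\alpha_2$. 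The paper merely asserts this key inequality (and is silent about the absolute value on $[\pi,2\pi]$, where $\sin$ is negative); you actually prove it, via the alternating-series bound on $[0,\pi]$ and a monotonicity argument for $\tilde h(x)=x-x^3/40+\sin(x)$ on $[\pi,2\pi]$, which is a genuine improvement in rigor. One small correction in that step: combining $\cos(x)\le 1$ with $3x^2/40>1$ only gives $\tilde h'(x)<1$, not $\tilde h'(x)<0$; the crude bound suffices only for $x>\sqrt{80/3}\approx 5.16$, so the ``direct check'' must cover $[\pi,5.16]$ rather than $[\pi,3.65]$ (the claim $\tilde h'<0$ does hold on all of $[\pi,2\pi]$, e.g.\ by verifying $1+\cos x<3x^2/40$ there, so this is a bookkeeping fix rather than a gap). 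With that adjustment the proof goes through as you describe.
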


\begin{proof}
According to~\cite[Table 7.2]{oppenheim2011discrete}, the relative peak side-lobe amplitude of the Dirichlet kernel is (approximately) $-13$ decibels. That is,  the peak side-lobe of the Dirichlet kernel is no larger than $\alpha_1N$ with $\alpha_1\approx 0.23$.  It is also easily verified that this peak does not occur further than $2/N$ away from the origin. To summarize,
$$
|D_N(z)|\le \alpha_1 N,
$$
as long as $|z|>2/N$. This completes the proof of the first inequality in Lemma \ref{lem:(Dirichlet-kernel)}. To prove the second inequality, assume that $|z|\le 2/N$. As $N\rightarrow\infty$, any $z\in [-2/N,2/N]$ approaches zero and we may replace the sine in the denominator of the Dirichlet kernel with its argument. That is, as $N\rightarrow\infty$, $z\rightarrow 0$ and
\begin{align*}
\frac{\left|D_{N}(z)\right|}{N} & =\frac{|\sin( N\pi z)|}{N|\sin(\pi z)|}\\
 & =\frac{|\sin(N\pi z)|}{N\pi |z|(1+O(z^2))}\\
  & \le \frac{|\sin(N\pi z)|}{N\pi |z|}(1+O(z^2))\\
  & \le \frac{ |(N\pi z)-\frac{1}{40}(N\pi z)^3|}{N\pi |z|}(1+O(z^2))\\
    & = \left(1-\frac{(N\pi z)^2}{40}\right)(1+O(z^2)),
\end{align*}
where the third line uses the fact that $1/(1-a)\le 1+2a$ for all $0\le a\le 1/2$. The second to last line holds because $\sin a\le a-a^3/40$ for all $ 0\le a\le 2\pi$. As a result, for  some $\alpha_2>0$ and $N_2=N_2(\alpha_2)$, the following holds for every $N>N_2$:
\begin{equation}
\frac{\left|D_{N}(z)\right|}{N} \le  \left(1-\frac{(N\pi z)^2}{40}\right)(1+O(z^2))
\le \left(1-\frac{(N\pi z)^2}{40}\right) +\alpha_2z^2,
\end{equation}
which, to reiterate, holds as long as $|z|\le 2/N$. This completes the proof of Lemma \ref{lem:(Dirichlet-kernel)}.
\end{proof}

\section{Proof of Lemma~\ref{lemma:example}}\label{sec:example}

\notechecked{this lemma and its proof are new} Here, $\tau_{\beta}$ stands for  the reach of the complex exponential curve, i.e., the inverse of its condition number. Note that the reach of the complex exponential curve is defined as the largest $d\ge0$ such that every point within an $\ell_{2}$ distance less than $d$ from $\beta$ has a unique nearest point (in the $\ell_{2}$ sense) on $\beta$. In the rest of the proof, ($i$) we first find a unit-speed parametrization of $\beta$, ($ii$) we then derive some basic properties of the reparametrized curve, and ($iii$) finally, we estimate $\tau_{\beta}$ by studying the long and short chords on the reparametrized curve separately.

\subsection{Unit-speed geodesic on $\beta$}

Let $\gamma:\mathbb{R}\rightarrow\mathbb{C}^{N}$ be a unit-speed geodesic obtained by appropriately normalizing $\beta$. For every $s\in\mathbb{R}$, there must exist $t=t(s)\in\mathbb{R}$ such that $\gamma_{s}=\beta_{t}$. In particular, we note that $\beta$ is a constant-speed curve with
\[
\left\Vert \frac{d\beta_{t}}{dt}\right\Vert = \left(\sum_{n=-f_{C}}^{f_{C}}\left(2\pi n\right)^{2}\right)^{1/2}=\frac{2\pi}{\sqrt{3}}\left(f_{C}\left(f_{C}+1\right)\left(2f_{C}+1\right)\right)^{1/2}=:\frac{1}{v_{N}}=O(f_C^{3/2}),
\]
and therefore we can simply take $t(s)=v_{N}s$. This gives
\begin{equation}
\gamma_{s}=\beta_{t(s)}=\beta_{v_{N}s}=\left[\begin{array}{c}
e^{-\uniti2\pi f_{C}v_{N}s}\\
e^{-\uniti2\pi(f_{C}-1)v_{N}s}\\
\vdots\\
e^{\uniti2\pi(f_{C}-1)v_{N}s}\\
e^{\uniti2\pi f_{C}v_{N}s}
\end{array}\right],\label{eq:gamma}
\end{equation}
\begin{equation}
\frac{d\gamma_{s}}{ds}=v_{N}\left[\begin{array}{c}
-\uniti2\pi f_{C}\cdot e^{-\uniti2\pi f_{C}v_{N}s}\\
-\uniti2\pi(f_{C}-1)\cdot e^{-\uniti2\pi(f_{C}-1)v_{N}s}\\
\vdots\\
\uniti2\pi(f_{C}-1)\cdot e^{\uniti2\pi(f_{C}-1)v_{N}s}\\
\uniti2\pi f_{C}\cdot e^{\uniti2\pi f_{C}v_{N}s}
\end{array}\right], ~\text{and}\label{eq:gammap}
\end{equation}
\begin{equation}
\frac{d^{2}\gamma_{s}}{ds^{2}}=-v_{N}^{2}\left[\begin{array}{c}
(2\pi f_{C})^{2}\cdot e^{-\uniti2\pi f_{C}v_{N}s}\\
(2\pi(f_{C}-1))^{2}\cdot e^{-\uniti2\pi(f_{C}-1)v_{N}s}\\
\vdots\\
(2\pi(f_{C}-1))^{2}\cdot e^{\uniti2\pi(f_{C}-1)v_{N}s}\\
(2\pi f_{C})^{2}\cdot e^{\uniti2\pi f_{C}v_{N}s}
\end{array}\right].\label{eq:gammapp}
\end{equation}
To reiterate, \eqref{eq:gamma} and \eqref{eq:gammap} represent $\gamma$ (a unit-speed parametrization of $\beta$) and its tangent vector. In addition, the curvature at any point can be computed as the magnitude of the second derivative in \eqref{eq:gammapp}. That is,
\begin{equation}
\left\Vert \frac{d^{2}\gamma_s}{ds^{2}}\right\Vert  =v_{N}^{2}\left(\sum_{-f_{C}}^{f_{C}}\left(2\pi n\right)^{4}\right)^{1/2} =\left(\sum_{-f_{C}}^{f_{C}}\left(2\pi n\right)^{2}\right)^{-1}\left(\sum_{-f_{C}}^{f_{C}}\left(2\pi n\right)^{4}\right)^{1/2}=:w_{N}=O(f_C^{-1/2}),\label{eq:curv}
\end{equation}
where we used (\ref{eq:gammapp}). Observe that the curvature is constant and scales like $1/\sqrt{N}$ for large $N$.

Since $\gamma$ is periodic, we will use $t_{1}\ominus t_{2}$ to denote subtraction modulo $1$ for any $t_1,t_2\in \mathbb{R}$ so that
$$
t_1- t_2 = \lfloor t_1-t_2\rfloor + (t_1\ominus t_2).
$$
(Equivalently, $\ominus$ represents the natural subtraction on the unit circle.) We continue by recording a few simple facts about the reparametrized complex exponential curve $\gamma$.

\subsection{Some observations about $\gamma$}

Note that $\gamma_{s}$ as a zero-padded sequence in $\ell_{2}(\mathbb{Z})$ can be interpreted as the (reversed) sequence of Fourier series coefficients of the signal in time that, at $t\in\mathbb{R}$, takes the value
\begin{align*}
\check{\gamma}_{s}(t) & =\frac{\sin(\pi N\left(t\ominus v_{N}s\right))}{\sin(\pi(t\ominus v_{N}s))}=:D_{N}(t\ominus v_{N}s),
\end{align*}
where $D_{N}(\cdot)$ is the Dirichlet kernel of width $\sim2/N$. The Dirichlet kernel is known to decay rapidly outside of an interval of width $\sim2/N$ centered at the origin as studied in Lemma~\ref{lem:(Dirichlet-kernel)} in the Toolbox.
One immediate consequence of Lemma \ref{lem:(Dirichlet-kernel)} is that
\begin{equation}
\left|\left\langle D_{N}(\cdot\ominus t_{1}),D_{N}(\cdot\ominus t_{2})\right\rangle \right|=\left|D_{N}(t_{1}\ominus t_{2})\right|\le\alpha_{1}N \qquad\mbox{if } t_{1} \ominus t_{2} \in [2/N,1-2/N].\label{eq:useful}
\end{equation}
The first identity above holds because circular convolution of the Dirichlet kernel with itself produces the Dirichlet kernel again. Now, for any pair $s_{1},s_{2}\in\mathbb{R}$, consider the following correlation:
\begin{align}
\left|\left\langle \gamma_{s_{1}},\gamma_{s_{2}}\right\rangle \right|=\left|\left\langle \check{\gamma}_{s_{1}},\check{\gamma}_{s_{2}}\right\rangle \right| = \left|\left\langle D_{N}(\cdot\ominus v_{N}s_{1}),D_{N}(\cdot\ominus v_{N}s_{2})\right\rangle \right|
=\left|D_{N}(v_{N}s_{1}\ominus v_{N}s_{2})\right|,\label{eq:-1}
\end{align}
where we used the Plancherel identity above. Then it follows from  (\ref{eq:useful}) that
\begin{equation}\label{eq:0}
\left|\left\langle \gamma_{s_{1}},\gamma_{s_{2}}\right\rangle \right|\le\alpha_{1}N \qquad\mbox{if } v_{N}s_{1}\ominus v_{N}s_{2}\in [2/N,1-2/N].
\end{equation}
In words, \eqref{eq:0} captures the long-distance correlations on $\gamma$. We now turn our attention to short-distance correlations. According to Lemma \ref{lem:(Dirichlet-kernel)}, for some $\alpha_2>0$ and  $N_2=N_2(\alpha_2)$, the following holds for every $N>N_2$:
\begin{equation}
\left|\left\langle \gamma_{s_{1}},\gamma_{s_{2}}\right\rangle \right|
\le N\left(1-\frac{N^2\pi^2}{40}\left(v_{N}s_{1}\ominus v_{N}s_{2}\right)^{2}\right)+\alpha_2N\left(v_{N}s_{1}\ominus v_{N}s_{2}\right)^{2} \label{eq:3}
\end{equation}
if $v_{N}s_{1}\ominus v_{N}s_{2}\in[0,2/N]$, where we used \eqref{eq:-1} again. If $v_{N}s_{1}\ominus v_{N}s_{2} \in [1-2/N,1]$, then \eqref{eq:3} holds with $v_{N}s_{1}\ominus v_{N}s_{2}$ replaced by $1 - (v_{N}s_{1}\ominus v_{N}s_{2}) = v_{N}s_{2}\ominus v_{N}s_{1}$. The conclusion in \eqref{eq:3} is a direct consequence of the vanishing derivative of the Dirichlet kernel at the origin. We are now in a position to estimate the reach of the complex exponential curve.

\subsection{Estimating $\tau_{\beta}$}

Consider a point $\gamma_{s}\in\mathbb{\mathbb{C}}^{N}$ on the
complex exponential curve for an arbitrary $s\in\mathbb{R}$. We deviate from $\gamma_{s}$
by $\chi$ to obtain $x=\gamma_{s}+\chi$ where $\chi$ is assumed to be normal
to the complex exponential curve at $\gamma_{s}$, that is
\begin{equation}
\left\langle \chi,\frac{d\gamma_{s}}{ds}\right\rangle =0,\label{eq:norm cond}
\end{equation}
where $d\gamma_{s}/ds$ is the tangent vector at $\gamma_{s}$ (which was computed in (\ref{eq:gammap})).

We seek the largest $d> 0$ such that for all $\chi$ with $\|\chi\|<d$ and
satisfying (\ref{eq:norm cond}), $\gamma_{s}$ is the unique nearest
point to $x=\gamma_{s}+\chi$ on the complex exponential curve. For $\gamma_{s}$ to
be the unique nearest point to $x$, it must hold that
\begin{equation}
\|\chi\|=\|x-\gamma_{s}\|<\|x-\gamma_{s'}\|=\|\chi+\gamma_{s}-\gamma_{s'}\|,\qquad\forall v_{N}s'\ominus v_{N}s\ne0.\label{eq:e1 1}
\end{equation}
Now (\ref{eq:e1 1}) is equivalent to
\begin{equation}
\mbox{Re} [\left\langle \chi,\gamma_{s'}-\gamma_{s}\right\rangle] <N-\mbox{Re}[\left\langle \gamma_{s},\gamma_{s'}\right\rangle]
= N- \left\langle \gamma_{s},\gamma_{s'}\right\rangle,\qquad\forall v_{N}s'\ominus v_{N}s\ne0,\label{eq:1}
\end{equation}
where we used the fact that the complex exponential curve lives on a sphere of radius
$\sqrt{N}$ in $\mathbb{C}^{N}$. We consider two separate cases:

\paragraph{Long distances $v_{N}s'\ominus v_{N}s \in [2/N,1-2/N]$: }

In this case, it follows from (\ref{eq:0}) that
\[
\left|\left\langle \gamma_{s},\gamma_{s'}\right\rangle \right|\le\alpha_{1}N,
\]
with $\alpha\approx 0.23$ as in Lemma~\ref{lem:(Dirichlet-kernel)}.
As a result, for the inequality in (\ref{eq:1}) to hold for long distances, it suffices that the following
holds:
\[
\|\chi\|\cdot\left(\|\gamma_{s}\|+\|\gamma_{s'}\|\right)=\|\chi\|\cdot2\sqrt{N}<\left(1-\alpha_{1}\right)N,\qquad\forall v_{N}s'\ominus v_{N}s\in[2/N,1-2/N],
\]
which is guaranteed as long as
\begin{equation}
\|\chi\|<\frac{\left(1-\alpha_{1}\right)N}{2\sqrt{N}}=\frac{1-\alpha_{1}}{2}\sqrt{N}.\label{eq:short guarantee}
\end{equation}

\paragraph{Short distances $v_{N}s'\ominus v_{N}s\in [0,2/N]\cup[1-2/N,1]$: }

Without loss of generality assume that $v_{N}s'- v_{N}s\in[0,2/N]$.
In this case, we first note that
\begin{align*}
\left|\left\langle \chi,\gamma_{s'}-\gamma_{s}\right\rangle \right|
& =\left|\left\langle \chi,\int_{s}^{s'}\frac{d\gamma_{\eta}}{ds}\, d\eta\right\rangle \right|\\
& =\left|\left\langle \chi,
\int_{s}^{s'} \frac{d\gamma_{s}}{ds} \, d\eta+ \int_{s}^{s'}\int_{s}^{\eta}\frac{d^{2}\gamma_{\xi}}{ds^{2}}\, d\xi d\eta
\right\rangle \right|\\
& =\left|\left\langle \chi,
(s'-s)\frac{d\gamma_{s}}{ds} + \int_{s}^{s'}\int_{s}^{\eta}\frac{d^{2}\gamma_{\xi}}{ds^{2}}\, d\xi d\eta
\right\rangle \right|\\
& =\left|\left\langle \chi,
\int_{s}^{s'}\int_{s}^{\eta}\frac{d^{2}\gamma_{\xi}}{ds^{2}}\, d\xi d\eta
\right\rangle \right|\\
 & \le\|\chi\|\cdot\int_{s}^{s'}\int_{s}^{\eta}\left\Vert \frac{d^{2}\gamma_{\xi}}{ds^{2}}\right\Vert \, d\xi d\eta\\
 & =w_{N}\|\chi\|\int_{s}^{s'}\int_{s}^{\eta}\, d\xi d\eta\\
 & = w_{N}\|\chi\|\cdot\frac{\left|s'-s\right|^{2}}{2}\\
  & = w_{N}\|\chi\|\cdot \frac{\left(v_N s'\ominus v_N s\right)^{2}}{2 v_N^2},
\end{align*}
where we used the fundamental theorem of calculus twice. The fourth line above uses
 the fact that $\chi$ is normal to the tangent of $\gamma$ at $s$,
namely $\left\langle \chi,d\gamma_s/ds\right\rangle =0$. The sixth line
uses the fact that curvature of $\gamma$ is constant and was calculated
in (\ref{eq:curv}).

Recall that  $v_{N}s'\ominus v_{N}s\le2/N$. Therefore, for a fixed $\alpha_2>0$ and $N>N_2=N_2(\alpha_2)$,  (\ref{eq:3}) dictates that
\[
\left\langle \gamma_{s'},\gamma_{s}\right\rangle \le N\left(1-\frac{N^2\pi^2}{40}\left(v_{N}s'\ominus v_{N}s\right)^{2}\right)+\alpha_2N\left(v_{N}s'\ominus v_{N}s\right)^{2}.
\]
As a result, for the inequality (\ref{eq:1}) to hold for short distances, it suffices that the following
statement holds:
\[
w_{N}\|\chi\|\cdot\frac{\left(v_N s'\ominus v_{N}s\right)^{2}}{2v_N^2}<\frac{N^3\pi^{2}}{40}\left(v_{N}s'\ominus v_{N}s\right)^{2}-\alpha_2N\left(v_{N}s'\ominus v_{N}s\right)^{2},\qquad\forall v_N s'- v_N s\in[0,2/N],
\]
which in turn holds if
\begin{equation}
\|\chi\|<\frac{\pi^2}{20}\cdot\frac{N^3v_N^2}{w_{N}}-
 2\alpha_2\cdot\frac{Nv_N^2}{w_N}.\label{eq:long guarantee}
\end{equation}

\paragraph{A lower bound on the reach:}

From \eqref{eq:long guarantee} and \eqref{eq:short guarantee}, we overall observe that if
\[
\|\chi\|<\mathrm{min}\left(\frac{1-\alpha_{1}}{2}, ~ \left(\frac{\pi^2}{20}\cdot\frac{N^{5/2}v_N^2}{w_{N}}-
 2\alpha_2\cdot\frac{\sqrt{N}v_N^2}{w_N}\right)\right)\sqrt{N}=O\left(\sqrt{N}\right),
\]
then \eqref{eq:1} holds uniformly regardless of the value of $s$. Therefore, we find the following
lower bound on the reach of the complex exponential curve:
\[
\tau_{\beta}\ge\mathrm{min}\left(\frac{1-\alpha_{1}}{2}, ~ \left(\frac{\pi^2}{20}\cdot\frac{N^{5/2}v_N^2}{w_{N}}-
 2\alpha_2\cdot\frac{\sqrt{N}v_N^2}{w_N}\right)\right)\sqrt{N},
\]
which, to reiterate, holds for some $\alpha_2$ and every $N>N_2$. Because the factor multiplying $\alpha_2$ scales with $N^{-2}$ whereas the factor multiplying $\pi^2/20$ scales with $1$, the following holds for every $N>N_{\text{sine}}$ for some $N_{\text{sine}}$:
\begin{equation}
\tau_{\beta}\ge\mathrm{min}\left(\frac{1-\alpha_{1}}{2}, ~ \frac{\pi^2}{40}\cdot\frac{N^{5/2}v_N^2}{w_{N}}\right)\sqrt{N}=O\left(\sqrt{N}\right).\label{eq:lower bound on reach}
\end{equation}

\paragraph{An upper bound on the reach:}

Since the complex exponential curve lives on the unit sphere in $\mathbb{C}^{N}$, $\gamma_{s}$
is normal to $\gamma$ at any arbitrary $s$. This immediately implies
the following upper bound on the reach:
\begin{equation}
\tau_{\beta}\le\sqrt{N}.\label{eq:upper bound on reach}
\end{equation}
Together, \eqref{eq:lower bound on  reach} and \eqref{eq:upper bound on reach} complete the proof of Lemma \ref{lemma:example}.

\section{Proof of Theorem \ref{thm:new manifold embedding} }
\label{sec:Proof of new manifold embedding}

It is easily verified that our objective is to find an upper bound for
\begin{equation*}
	\pr{\sup_{y\in U(\manifold)}|\|\proj y\|-1|>\epsilon},
\end{equation*}
when $\epsilon\le1/3$.

The remainder of this section is divided to two parts. In the first part, we construct a sequence of increasingly finer nets for $\manifold$. This is in turn used to construct a sequence of covers for the set of all (normalized) secants in \notechecked{changed ``directions'' to ``(normalized) secants''} $\manifold$. In the second part, we apply a chaining argument that utilizes this later sequence of covers to prove Theorem~\ref{thm:new manifold embedding}.

\subsection{Sequence of covers for $U(\manifold)$}
\label{sec:Constructing a Sequence of Covers for U(M)}

For $\Mresolution>0$, let $C_0(\Mresolution)\subset\manifold$ denote a minimal $\Mresolution$-net for $\manifold$ over all $\Mresolution$-nets that are a subset of $\manifold$. Upper and lower bounds for $\#C_0(\Mresolution)=\coverno_\manifold(\Mresolution)$ are known for sufficiently small $\Mresolution$~\cite{niyogi}, where $\#C_0(\Mresolution)$ denotes the cardinality of $C_0(\Mresolution)$. Since the claim below slightly differs from the one in~\cite{niyogi}, the proof is included here.

\begin{lemma}
	\label{lemma:niyogi covering no lemma}
	When  $\Mresolution\le\condition/2$, it holds that
	\begin{equation}
		\#C_{0}\left(\Mresolution\right)\le
		 \left(\frac{2}{\theta\left(\Mresolution/4\condition\right)\Mresolution}\right)^{K}\frac{\volume}{V_{\ball_K}}=:\coverbnd_0(\Mresolution),
		\label{eq:niyogi covering number 1}
	\end{equation}
	where $\theta\left(\alpha\right):=\sqrt{1-\alpha^{2}}$ for $|\alpha|\le1$.
\end{lemma}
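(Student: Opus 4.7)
The plan is to prove the bound via a volume-comparison (packing) argument, exploiting the fact that minimal covers are dominated by maximal packings and that each packing point carries a definite amount of manifold volume in a small neighborhood around it.

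First, I would invoke the inequality $\coverno_\manifold(\Mresolution) \le \packing_\manifold(\Mresolution)$ already recorded in \eqref{eq:coverno n packing}, so it suffices to bound the size of a maximal $\Mresolution$-separated subset $P \subset \manifold$. Around each $p \in P$, I would consider the open Euclidean ball $\ball_N(p,\Mresolution/2)$. Because distinct packing points are at distance at least $\Mresolution$ apart, these balls are pairwise disjoint, and hence so are the manifold-neighborhoods $\intersection_\manifold(p,\Mresolution/2)$ (after undoing the shift to the origin used in the definition).

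The key quantitative input is Lemma~\ref{lemma:revised Niyogi 5.3}: since the hypothesis $\Mresolution \le \condition/2$ gives $\Mresolution/2 \le \condition/4$, that lemma applies with $r = \Mresolution/2$ and yields
\[
\Vol_K\!\left(\intersection_\manifold(p,\Mresolution/2)\right) \ge \left(1 - \frac{\Mresolution^2}{16\condition^2}\right)^{K/2} \!\left(\frac{\Mresolution}{2}\right)^{K} V_{\ball_K} = \left(\frac{\theta(\Mresolution/4\condition)\,\Mresolution}{2}\right)^{K} V_{\ball_K},
\]
where I used $\theta(\alpha)^K = (1-\alpha^2)^{K/2}$. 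Summing this lower bound over the $\#P$ disjoint pieces, which together fit inside $\manifold$ (of total $K$-volume $\volume$), gives
\[
\#P \cdot \left(\frac{\theta(\Mresolution/4\condition)\,\Mresolution}{2}\right)^{K} V_{\ball_K} \le \volume,
\]
and rearranging produces exactly the claimed right-hand side of \eqref{eq:niyogi covering number 1}. Combined with $\#C_0(\Mresolution) = \coverno_\manifold(\Mresolution) \le \packing_\manifold(\Mresolution) = \#P$, this completes the proof.

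The only real subtlety is checking that the volumes $\Vol_K(\intersection_\manifold(p,\Mresolution/2))$ are genuinely additive, i.e.\ that the corresponding subsets of $\manifold$ are pairwise disjoint; this is immediate from the disjointness of the ambient balls $\ball_N(p,\Mresolution/2)$ together with the fact that $\intersection_\manifold(p,\Mresolution/2)$ is the intersection of $\manifold$ with that ball (up to a translation by $p$ that does not affect $K$-volume). The hypothesis $\Mresolution \le \condition/2$ is used exactly once, to ensure the volume-from-below estimate of Lemma~\ref{lemma:revised Niyogi 5.3} is applicable; no deeper obstacle arises.
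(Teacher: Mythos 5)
Your proposal is correct and follows essentially the same route as the paper: bound the cover by a maximal packing via \eqref{eq:coverno n packing}, then lower-bound the $K$-volume of each disjoint neighborhood $\intersection_\manifold(p,\Mresolution/2)$ using Lemma~\ref{lemma:revised Niyogi 5.3} with $r=\Mresolution/2$ (valid since $\Mresolution/2\le\condition/4$), and compare with $\volume$. The only difference is that you spell out the disjointness/additivity step that the paper compresses into the phrase ``a simple volume comparison argument.''
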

\notechecked{proof moved to here}
\begin{proof}
Using \eqref{eq:coverno n packing} and a simple volume comparison argument, we observe that
\[
	\#C_0(\Mresolution)=\coverno_\manifold(\Mresolution)\le\packing_\manifold(\Mresolution)\le\frac{\volume}{\inf_{p\in\manifold}\vol{K}{A_\manifold(p,\Mresolution/2)}}.
\]
Since $\Mresolution/2\le\condition/4$, we can apply Lemma \ref{lemma:revised Niyogi 5.3} from the Toolbox (with $r=\Mresolution/2$) and obtain that
\begin{align*}
	\#C_0(\Mresolution)& \le\frac{\volume}{\left(1-\frac{\Mresolution^2}{16\condition^2}\right)^{\frac{K}{2}}\left(\frac{\Mresolution}{2}\right)^KV_{\ball_K}}\\		& =\left(\frac{2}{\theta(\Mresolution/4\condition)\Mresolution}\right)^K\frac{\volume}{V_{\ball_K}},
\end{align*}
This completes the proof of Lemma \ref{lemma:niyogi covering no lemma}.
\end{proof}

By replacing $\Mresolution$ with $4^{-j}\Mresolution$, we can construct a sequence of increasingly finer nets for $\manifold$, $\{C_j(\Mresolution)\}$, such that $C_j(\Mresolution)\subset\manifold$ is a ($4^{-j}\Mresolution$)-net for $\manifold$, for every $j\in\ints$.  In light of Lemma \ref{lemma:niyogi covering no lemma}, we have that
\begin{equation}\label{eq:from Cj to C0}
	\#C_{j}(\Mresolution)\le 4^{jK}\cdot\coverbnd_0(\Mresolution).
\end{equation}

Construction of a sequence of covers for $U(\manifold)$ demands the following setup. For $\Tresolution>0$ and $j\in\ints$, let $C^{j}(\Tresolution)$ denote a minimal ($2^{-j}\Tresolution$)-net for $\ball_K$.
For $p\in C_j(\Mresolution)$, we can naturally map $C^j(\Tresolution)$ to live in the $K$-dimensional unit ball along $\tan_p$ (and anchored at the origin). We represent this set of vectors by $C^{j,p}(\Tresolution)$ and define
\[
	C'_j(\Mresolution,\Tresolution):=\bigcup_{p\in C_j(\Mresolution)} C^{j,p}(\Tresolution),
\]	
which forms a ($2^{-j}\Tresolution$)-net for the unit balls along the tangent spaces at every point in  $C_j(\Mresolution)\subset\manifold$. For $\Uresolution>0$, let us specify $\Mresolution$ and $\Tresolution$ as functions of $\Uresolution$.
For $\Cl{eta},\Cl{eta'}>0$ to be set later, take $\Mresolution=\Mresolution(\Uresolution)=\Cr{eta}^2\condition\Uresolution^2$ and $\Tresolution=\Tresolution(\Uresolution)=\Cr{eta'}\sqrt{\Mresolution/\condition}=\Cr{eta}\Cr{eta'}\Uresolution$.
Now, for every $j\in\ints$, simply set
\begin{equation*}
	\Ucover_j(\Uresolution):=U(C_j(\Mresolution))\cup C'_j(\Mresolution,\Tresolution).
\end{equation*}
It turns out that $U(C_j(\Mresolution))$, the set of all directions in $C_j(\Mresolution)$, provides a net for the directions of long chords on $\manifold$. In contrast, $C'_j(\Mresolution,\Tresolution)$ forms a net for the directions in $U(\manifold)$ that correspond to the short chords on $\manifold$. It is therefore not surprising that $\{\Ucover_j(\Uresolution)\}$ proves to be a sequence of increasingly finer covers for $U(\manifold)$. This discussion is formalized in the next lemma. We remark that Lemma~\ref{lemma:Tj is a refinement of U(M)} holds more generally for all constants $\Cr{eta},\Cr{eta'}$ that satisfy the conditions listed in the proof.

\begin{lemma}\label{lemma:Tj is a refinement of U(M)}
	Set $\Cr{eta}=0.4$ and $\Cr{eta'}=1.7-\sqrt{2}$. For every $j\in\ints$, $\Ucover_j(\Uresolution)$, as constructed above, is a $(2^{-j}\Uresolution)$-net for $U(\manifold)$, when $\Uresolution\le1/2$.
	Under the mild assumption that
	\begin{equation}\label{eq:mild assump on volume 1}
		\frac{\volume}{\condition^K}\ge \left(\frac{21}{2\sqrt{K}}\right)^K,
	\end{equation}
	it also holds that
	\begin{equation}
		\#\Ucover_j(\Uresolution) \le 2\cdot4^{2jK}\left(\frac{6.12\sqrt{K}}{\Uresolution^2}\right)^{2K}\left(\frac{\volume}{\condition^K}\right)^2=:{\Ucoverno}_j(\Uresolution).
	\end{equation}
\end{lemma}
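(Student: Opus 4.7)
The proof splits naturally into two parts: establishing the net property and then bounding the cardinality. For the first part, I would fix an arbitrary $y = U(x_1,x_2) \in U(\manifold)$ and perform a case split on $\|x_1-x_2\|$ using a threshold $L$ of order $\Cr{eta}^2\condition \cdot 2^{-j}\Uresolution$, distinguishing ``long'' and ``short'' chords. The long chords will be approximated by secants between nearby anchors in $C_j(\Mresolution)$, while short chords will be approximated by discretized tangent vectors in $C'_j(\Mresolution,\Tresolution)$.

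In the long-chord regime $\|x_1-x_2\|\ge L$, I would pick $a_i \in C_j(\Mresolution)$ within $4^{-j}\Mresolution =: r$ of $x_i$, so that $U(a_1,a_2)\in U(C_j(\Mresolution))\subset \Ucover_j(\Uresolution)$, and invoke Lemma~\ref{lemma:implicit in Clarkson} with $\kappa = 4\Cr{eta}\sqrt{\condition}$ (the threshold $L$ is chosen so that $\|a_1-a_2\|\ge \|x_1-x_2\|-2r\ge \kappa\sqrt{r}$). Substituting $\Mresolution = \Cr{eta}^2\condition\Uresolution^2$ makes $4\kappa^{-1}\sqrt{r}$ equal to $2^{-j}\Uresolution$. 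In the short-chord regime $\|x_1-x_2\|<L$, I would take $p\in C_j(\Mresolution)$ with $\|x_1-p\|\le 4^{-j}\Mresolution$, approximate $U(x_1,x_2)$ by its projection $\downarrow_p U(x_1,x_2)$ onto $\tan_p$ via Lemma~\ref{lemma:distance of a shorrt chord to a nearby tangent plane}, and then discretize $\downarrow_p U(x_1,x_2)$ (which lies in the unit ball of $\tan_p$) by a point of $C^{j,p}(\Tresolution)\subset \Ucover_j(\Uresolution)$ within $2^{-j}\Tresolution$. The three resulting contributions combine via the triangle inequality to
\[
\|U(x_1,x_2)-c\| \le 2^{-j}\Uresolution\bigl(\sqrt{2}\,\Cr{eta} + 2\Cr{eta}^2 + \Cr{eta}\Cr{eta'}\bigr),
\]
and a direct calculation shows this parenthetical sum equals exactly $1$ when $\Cr{eta} = 0.4$ and $\Cr{eta'} = 1.7-\sqrt{2}$ (the $\sqrt{2}\,\Cr{eta}$ term cancels against a matching piece of $\Cr{eta}\Cr{eta'}$).

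For the cardinality bound, I would write
\[
\#\Ucover_j(\Uresolution) \le (\#C_j(\Mresolution))^2 + \#C_j(\Mresolution)\cdot (3\cdot 2^j/\Tresolution)^K,
\]
using Lemma~\ref{lemma:cover no of unit ball} for the tangent-space nets. The assumption $\Uresolution\le 1/2$ forces $\Mresolution\le \condition/2$ so Lemma~\ref{lemma:niyogi covering no lemma} and~\eqref{eq:from Cj to C0} apply, and also gives $\theta(\Mresolution/4\condition)\approx 1$. Combining these with the standard Stirling-type bound $V_{\ball_K}^{1/K}\gtrsim 1/\sqrt{K}$ rewrites $(\#C_j)^2$ in the target form $4^{2jK}(6.12\sqrt{K}/\Uresolution^2)^{2K}(\volume/\condition^K)^2$. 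The mild hypothesis~\eqref{eq:mild assump on volume 1} is precisely what ensures the tangent-net contribution is dominated by $(\#C_j)^2$, so the sum is at most twice this, matching $\Ucoverno_j(\Uresolution)$.

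The main obstacle is the fine calibration in the short-chord case: three heterogeneous error sources (the tangent-plane approximation from Lemma~\ref{lemma:distance of a shorrt chord to a nearby tangent plane}, the chord-length contribution $L/(2\condition)$, and the discretization inside the unit ball of $\tan_p$) must sum to at most $2^{-j}\Uresolution$, while $L$ must simultaneously be large enough that the long-chord bound from Lemma~\ref{lemma:implicit in Clarkson} goes through. The cardinality step is essentially bookkeeping, but its dependence on the volume lower bound $V_{\ball_K}^{1/K}\gtrsim 1/\sqrt{K}$ and on~\eqref{eq:mild assump on volume 1} for controlling the tangent-net term is what pins down the explicit constant $6.12$ and the factor of $2$ in ${\Ucoverno}_j(\Uresolution)$.
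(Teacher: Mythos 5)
Your proposal is correct and follows essentially the same route as the paper's proof: the same long/short chord dichotomy at a threshold of order $\condition\Uresolution 2^{-j}$, Lemma~\ref{lemma:implicit in Clarkson} for long secants, Lemma~\ref{lemma:distance of a shorrt chord to a nearby tangent plane} plus tangent-ball discretization for short ones, and identical cardinality bookkeeping, with your calibration $0.4\sqrt{2}+2(0.4)^2+0.4(1.7-\sqrt{2})=1$ being exactly the paper's condition on the two constants. The one immaterial difference is that the paper applies Lemma~\ref{lemma:implicit in Clarkson} with $x_1,x_2$ (rather than the net points) in the roles of $a_1,a_2$, so the separation hypothesis is the long-chord condition itself and no triangle-inequality correction to the threshold (and hence no lower-order perturbation of the short-chord bound) is needed.
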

\notechecked{used distinguishing font for $\Ucoverno_j$} \notechecked{proof moved to here}
\begin{proof}
Consider two arbitrary but distinct points $x_1,x_2\in\manifold$. For $\Cl{thresh}>0$ to be set later in the proof, we separate the treatment of long and short chords, i.e., $\|x_2-x_1\|/\condition> \Cr{thresh}\sqrt{\Mresolution/\condition}=:\thresh$ and $\|x_2-x_1\|/\condition\le \thresh$, and in this strategy we follow \cite{baraniuk2009random,clarkson2008tighter}. Short chords are distinct in that, as we will see later, they have to be approximated with nearby tangent vectors. For convenience, let us also define
$$
	U_\thresh^l(\manifold):=\{U(z_1,z_2):\|z_2-z_1\|>\thresh\condition,\, z_1,z_2\in\manifold\},
$$
$$
	U_\thresh^s(\manifold):=\{U(z_1,z_2):0<\|z_2-z_1\|\le\thresh\condition,\,z_1,z_2\in\manifold\}.
$$
Of course, $U_\thresh^l(\manifold)\cup U_\thresh^s(\manifold)=U(\manifold)$, although their intersection might not be empty.

Suppose that $\|x_2-x_1\|/\condition> \thresh=\Cr{thresh}\sqrt{\Mresolution/\condition}$ so that $U(x_1,x_2)\in U_\thresh^l(\manifold)$. Since $C_0(\Mresolution)$ is an $\Mresolution$-net for $\manifold$, there exist $p_1$ and $p_2$ in $C_0(\Mresolution)$ such that $\|x_1-p_1\|,\|x_2-p_2\|\le\Mresolution$. It then follows from Lemma \ref{lemma:implicit in Clarkson} (with $a_1=x_1$, $a_2=x_2$, $b_1=p_1$, and $b_2=p_2$) that 
\begin{equation}\label{eq:long chords proof p1}
	\|U(x_1,x_2)-U(p_1,p_2)\|\le \frac{4}{\Cr{thresh}}\sqrt{\frac{\Mresolution}{\condition}}=\frac{4\Cr{eta}\Uresolution}{\Cr{thresh}}.
\end{equation}
Now, assuming that
\begin{equation}\label{eq:covering cond 1}
	4\Cr{eta}=\Cr{thresh},
\end{equation}
and leveraging the fact that the choice of $x_1,x_2\in\manifold$ was arbitrary, we conclude that $U(C_0(\Mresolution))$ is a $\Uresolution$-net for $U_\thresh^l(\manifold)$.

On the other hand, suppose that $0<\|x_2-x_1\|/\condition\le \thresh=\Cr{thresh}\sqrt{\Mresolution/\condition}=4\Cr{eta}\sqrt{\Mresolution/\condition}$ so that $U(x_1,x_2)\in U_\thresh^s(\manifold)$. We assume that
\begin{equation}\label{eq:assump 1 for short chords}
	\frac{\Mresolution}{\condition}=\Cr{eta}^2\Uresolution^2 < \min\left(\frac{1}{64\Cr{eta}^2},\frac{1}{2}\right),
\end{equation}
so that, in particular, $\|x_2-x_1\|<\condition/2$. Since $C_0(\Mresolution)$ is an $\Mresolution$-net for $\manifold$, there exists a point $p\in C_0(\Mresolution)$ such that $\|x_1-p\|\le\Mresolution<\condition/2$. Lemma \ref{lemma:distance of a shorrt chord to a nearby tangent plane} (with $l_1=\Mresolution$ and $l_2=4\Cr{eta}\sqrt{\condition\Mresolution}$) then implies that the direction of the chord connecting $x_1$ to $x_2$ can be approximated with a tangent vector in $\tan_p$, that is
\begin{equation}\label{eq:short chords proof p1}
	\|U(x_1,x_2)-\downarrow_p U(x_1,x_2)\|\le \sqrt{\frac{2\Mresolution}{\condition}}+2\Cr{eta}\sqrt{\frac{\Mresolution}{\condition}}=\left(\sqrt{2}+2\Cr{eta}\right)\Cr{eta}\Uresolution.
\end{equation}
Recall that $C^{0,p}(\Tresolution)$ is an $\Tresolution$-net for the unit ball centered at $p$ and along $\tan_p$. So, there also exists a vector $v\in C^{0,p}(\Tresolution)$ such that $\|\downarrow_p U(x_1,x_2)-v\|\le \Tresolution=\Cr{eta}\Cr{eta'}\Uresolution$. Using the triangle inequality, we therefore arrive at
\begin{equation}\label{eq:short chords proof p2}
	\|U(x_1,x_2)-v\|\le \left(\sqrt{2}+2\Cr{eta}+\Cr{eta'}\right)\Cr{eta}\Uresolution.
\end{equation}
Assuming that
\begin{equation}\label{eq:covering cond 2}
	\sqrt{2}+2\Cr{eta}+\Cr{eta'}=\Cr{eta}^{-1},
\end{equation}
and leveraging the fact that the choice of $x_1,x_2\in\manifold$ was arbitrary, we conclude that $C'_0(\Mresolution,\Tresolution)$ is a $\Uresolution$-net for $U^s_\thresh(\manifold)$.
Overall, under \eqref{eq:covering cond 1}, \eqref{eq:assump 1 for short chords}, and \eqref{eq:covering cond 2}, $\Ucover_0(\Uresolution)=U(C_0(\Mresolution))\cup C_0'(\Mresolution,\Tresolution)$ is a $\Uresolution$-net for $U(\manifold)$. By repeating the argument above (with $\Mresolution,\Uresolution,\Tresolution,\thresh$ replaced with $\Mresolution/4^j,\Uresolution/2^j,\Tresolution/2^j,\thresh/2^j$) we observe that $\Ucover_j(\Uresolution)$ is a ($2^{-j}\Uresolution$)-net for $U(\manifold)$, for every $j\in\ints$. In particular, the choice of $\Cr{eta}=0.4,\Cr{eta'}=1.7-\sqrt{2},\Cr{thresh}=1.6$ satisfies the conditions above for every $\Uresolution\le1/2$ and completes the proof of the first statement in Lemma~\ref{lemma:Tj is a refinement of U(M)}.

In order to bound the cardinality of $\Ucover_j(\Uresolution)$, we begin with estimating $\#C'_j(\Mresolution,\Tresolution)$.
According to Lemma \ref{lemma:cover no of unit ball}, we can write  that
\begin{equation}
	\label{eq:cover no for short chords pre}
	\#C'_{j}\left(\Mresolution,\Tresolution\right)  \le\left(\frac{3\cdot2^j}{\Tresolution} \right)^{K}\cdot\#C_{j}\left(\Mresolution\right),
\end{equation}
which holds assuming that $\Tresolution\le 1$, i.e., $\Cr{eta}\Cr{eta'}\Uresolution\le 1$. (Our choice of $\Cr{eta},\Cr{eta'}$ above satisfies this condition.) It is possible now to write that
\begin{align}\label{eq:bound on Tj 1st}
	\#\Ucover_j(\Uresolution) & \le (\#C_j(\Mresolution))^2+\#C'_j(\Mresolution,\Tresolution)\nonumber\\
	& \le (\#C_j(\Mresolution))^2+\left(\frac{3\cdot2^j}{\Tresolution}\right)^K\cdot \#C_{j}\left(\Mresolution\right)\nonumber\\
	& \le 2\max\left(4^{jK}\coverbnd_0(\Mresolution),\left(\frac{3\cdot2^j}{\Tresolution}\right)^K\right)\cdot 4^{jK}\coverbnd_0(\Mresolution),
\end{align}
where we used \eqref{eq:cover no for short chords pre} in the second line and  \eqref{eq:from Cj to C0} in the last line.
To guarantee that the first term dominates the maximum in \eqref{eq:bound on Tj 1st}, it suffices (according to the definition of $\coverbnd_0(\Mresolution)$ in \eqref{eq:niyogi covering number 1}) to enforce that
\begin{equation*}
	\left(\frac{2}{\theta(\Mresolution/4\condition)\Mresolution}\right)^K\frac{\volume}{V_{\ball_K}}\ge\left(\frac{3}{\Tresolution}\right)^K,
\end{equation*}
which, after plugging in for $\Mresolution$ and $\Tresolution$ in terms of $\Uresolution$ and using the hypothesis that $\Uresolution\le 1$, is satisfied under the mild assumption that
\begin{equation}\label{eq:mild assump on volume 1 raw}
	\frac{\volume}{\condition^K}\ge2.5^KV_{\ball_K}\ge\left(\frac{3\Cr{eta}}{2\Cr{eta'}}\right)^K V_{\ball_K}.
\end{equation}
The assumption in \eqref{eq:mild assump on volume 1 raw} allows us to simplify \eqref{eq:bound on Tj 1st} and obtain that
\begin{equation}\label{eq:bound on T0 2nd}
	\#\Ucover_j(\Uresolution)\le 2\cdot4^{2jK} \coverbnd_0^2(\Mresolution).
\end{equation}
It follows from \eqref{eq:bound on T0 2nd} and the definition of $\coverbnd_0(\Mresolution)$ in \eqref{eq:niyogi covering number 1} that
\begin{align*}
	\#\Ucover_j(\Uresolution)\le 2\cdot4^{2jK}\left(\frac{2}{\theta(\Cr{eta}^2/4)\Cr{eta}^2\condition\Uresolution^2}\right)^{2K}\left(\frac{\volume}{V_{\ball_K}}\right)^2
	\le 2\cdot4^{2jK}\left(\frac{12.52}{\condition\Uresolution^2}\right)^{2K}\left(\frac{\volume}{V_{\ball_K}}\right)^2,
\end{align*}
where we used the fact that $\Uresolution\le1$. We remind the reader that
\begin{equation}
	\left(\frac{4\pi}{K+2}\right)^{K/2}\le V_{\ball_K}=\frac{\pi^{K/2}}{\Gamma\left(\frac{K}{2}+1\right)}\le\left(\frac{2e\pi}{K+2}\right)^{K/2},\label{eq:bnds on VkBk}
\end{equation}
where the inequalities follow from the fact that $\left(K/e\right)^{K-1}\le\Gamma\left(K\right)\le\left(K/2\right)^{K-1}$ for $K\in\mathbb{N}$ \cite{olver2010nist}. Here $\Gamma(\cdot)$ denotes the Gamma function. The above inequality leads us to
\begin{align*}
	\#\Ucover_j(\Uresolution) \le 2\cdot4^{2jK}\left(\frac{6.12\sqrt{K}}{\Uresolution^2}\right)^{2K}\left(\frac{\volume}{\condition^K}\right)^2,
\end{align*}
which holds under the mild assumption that $\volume/\condition^K\ge (21/\sqrt{K})^K$. Indeed, this assumption is obtained by plugging  our choice of $\Cr{eta},\Cr{eta'}$ into \eqref{eq:mild assump on volume 1 raw}. This completes the proof of Lemma~\ref{lemma:Tj is a refinement of U(M)}.
\end{proof}

\subsection{Applying the chaining argument}
\label{sub:Applying-the-Chaining arg}

Every $y\in U(\manifold)$ can be represented with a chain of points in $\{\Ucover_j(\Uresolution)\}$. Let $\pi_j(y)$ be the nearest point to $y$ in $\Ucover_j(\Uresolution)$. This way we obtain a sequence $\{\pi_j(y)\} $ that represents $y$ via an almost surely convergent telescoping sum, that is
\begin{eqnarray}
	y=\pi_{0}\left(y\right)+\sum_{j\in\ints} \left(\pi_{j+1}(y)-\pi_{j}(y)\right).
	\label{eq:telescope}
\end{eqnarray}
Note that, for every $j\in\ints$ and every $y\in\manifold$, the length of the chord connecting $\pi_j(y)$ to $\pi_{j+1}(y)$ is no longer than $2^{-j+1}\Uresolution$. We are now ready to state a generic chaining argument that allows us to bound the failure probability of obtaining a stable embedding of $\manifold$ in terms of its geometrical properties. The interested reader is referred to \cite{talagrand2005generic} for more information about the generic chaining.

\begin{lemma}
	\label{lemma:chaining}
	Fix $0<\Uresolution<\epsilon_1<\epsilon\le1/3$, and  $\epsilon_{2}>0$ such that $\epsilon_{1}+\epsilon_{2}=\epsilon$. Choose $\Cl{eps1 n delta},\Cl{eps2 n j}>0$ so that $\epsilon_{1}/\Uresolution\ge\frac{1+\Cr{eps1 n delta}}{1-\Cr{eps1 n delta}}$ and  $\epsilon_{2}/\Uresolution\ge \Cr{eps2 n j}$.
	Then, under \eqref{eq:mild assump on volume 1}, we have that
	\begin{align}
		\pr{ \sup_{y\in U\left(\manifold\right)}\left|\left\Vert \proj y\right\Vert -1\right|>\epsilon} &
		\le 2\Ucoverno_0(\Uresolution)\cdot\max_{t_{0}\in \Ucover_{0}\left(\Uresolution\right)}\pr{ \left|\left\Vert \proj t_{0}\right\Vert -\|t_0\|\right|>\Cr{eps1 n delta}\epsilon_1\|t_0\|}\nonumber\\
		& +2\sum_{j\in\ints} \Ucoverno_{j+1}^2(\Uresolution)\cdot\max_{(t_j,s_j)\in \links_j(\Uresolution)}\pr{ \left\Vert \proj s_{j}-\proj t_{j}\right\Vert >8^{-1}\Cr{eps2 n j}(j+1)\|s_j-t_j\|},
		\label{eq:chaining_lemma}
	\end{align}
	where $\{\Ucoverno_j(\Uresolution)\} $ were previously defined in Lemma \ref{lemma:Tj is a refinement of U(M)}. For $j\in\ints$, $\links_j(\Uresolution)$ is defined as
	\begin{equation*}
		\links_j(\Uresolution):=\left\{(t_j,s_j)\,:\,\pi_j(y)=t_j\mbox{ and }\pi_{j+1}(y)=s_j \mbox{ for some } y\in U(\manifold)\right\}.
	\end{equation*}
\end{lemma}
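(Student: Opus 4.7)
The plan is to exploit the telescoping representation \eqref{eq:telescope} to reduce the supremum over $U(\manifold)$ to (i) a one-shot concentration bound at the coarsest scale $\Ucover_0(\Uresolution)$ and (ii) per-link concentration bounds along the chain $\pi_j(y)\to\pi_{j+1}(y)$, before finishing with a union bound.

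I would fix an arbitrary $y\in U(\manifold)$ and, using the triangle inequality together with \eqref{eq:telescope}, estimate
\begin{equation*}
|\|\proj y\| - 1| \;\le\; |\|\proj\pi_0(y)\|-\|\pi_0(y)\|| \;+\; |\|\pi_0(y)\|-1| \;+\; \sum_{j\in\ints}\|\proj(\pi_{j+1}(y)-\pi_j(y))\|.
\end{equation*}
Because $\Ucover_j(\Uresolution)$ is a $(2^{-j}\Uresolution)$-net for $U(\manifold)$ by Lemma~\ref{lemma:Tj is a refinement of U(M)}, the reverse triangle inequality gives $|\|\pi_0(y)\|-1|\le\Uresolution$, and for every $j\in\ints$ each link obeys $\|\pi_{j+1}(y)-\pi_j(y)\|\le \|y-\pi_j(y)\|+\|y-\pi_{j+1}(y)\|\le 2^{-j+1}\Uresolution$.

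Next I would introduce two ``good'' events: let $E_0$ be the event that $|\|\proj t_0\|-\|t_0\||\le \Cr{eps1 n delta}\epsilon_1\|t_0\|$ for every $t_0\in\Ucover_0(\Uresolution)$, and let $E_{\mathrm{chain}}$ be the event that $\|\proj(s_j-t_j)\|\le 8^{-1}\Cr{eps2 n j}(j+1)\|s_j-t_j\|$ for every $j\in\ints$ and every $(t_j,s_j)\in\links_j(\Uresolution)$. On $E_0\cap E_{\mathrm{chain}}$, the per-link bounds combine with the previous length estimate to give
\begin{equation*}
\sum_{j\in\ints}\|\proj(\pi_{j+1}(y)-\pi_j(y))\|\;\le\;\sum_{j\ge 0}\frac{\Cr{eps2 n j}(j+1)}{8}\cdot 2^{-j+1}\Uresolution\;=\;\frac{\Cr{eps2 n j}\Uresolution}{4}\sum_{j\ge 0}(j+1)2^{-j}\;=\;\Cr{eps2 n j}\Uresolution\;\le\;\epsilon_2,
\end{equation*}
where I used the closed-form identity $\sum_{j\ge 0}(j+1)2^{-j}=4$ together with the hypothesis $\epsilon_2/\Uresolution\ge\Cr{eps2 n j}$. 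Combining this chain bound with $1-\Uresolution\le\|\pi_0(y)\|\le 1$ and the one-shot bound on $E_0$, a short algebraic manipulation that uses the hypothesis $\epsilon_1/\Uresolution\ge(1+\Cr{eps1 n delta})/(1-\Cr{eps1 n delta})$ to absorb cross-terms of the form $(1\pm\Cr{eps1 n delta}\epsilon_1)(1\pm\Uresolution)$ into $1\pm\epsilon_1$ yields $|\|\proj y\|-1|\le\epsilon_1+\epsilon_2=\epsilon$ uniformly in $y$.

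Finally, I would union-bound the complements. The event $E_0^c$ is contained in a union of at most $\Ucoverno_0(\Uresolution)$ two-sided deviation events, and $E_{\mathrm{chain}}^c$ is contained in a union over $j\in\ints$ of at most $\#\links_j(\Uresolution)\le\Ucoverno_j(\Uresolution)\cdot\Ucoverno_{j+1}(\Uresolution)\le\Ucoverno_{j+1}^2(\Uresolution)$ events; the factor of $2$ appearing in \eqref{eq:chaining_lemma} is a safe overestimate that accommodates both tails and any looseness. A minor preliminary point to verify is that the telescoping series in \eqref{eq:telescope} converges in norm, which it does because $\pi_j(y)\to y$ forces $\proj\pi_j(y)\to\proj y$ by continuity of $\proj$. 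The main technical obstacle is the precise balancing encoded in the per-link weight $(j+1)$: it must be large enough that, when Gaussian tails for $\|\proj(s_j-t_j)\|$ (exponential in $((j+1)\Cr{eps2 n j})^2 M$) are later invoked, they overpower the geometric cover growth $\Ucoverno_{j+1}^2(\Uresolution)\sim 4^{4(j+1)K}$ from Lemma~\ref{lemma:Tj is a refinement of U(M)}, yet small enough that the weighted chain sum $\sum(j+1)2^{-j}$ remains a dimension-free constant. The pairing of this $(j+1)$ factor with the geometric $2^{-j}$ scale of $\Ucover_j$ is exactly what makes the chaining argument close.
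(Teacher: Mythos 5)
Your proposal is correct and follows essentially the same route as the paper's proof: the telescoping decomposition over the nested covers, the per-link length bound $\|\pi_{j+1}(y)-\pi_j(y)\|\le 2^{-j+1}\Uresolution$, the weighted split of $\epsilon_2$ across scales via $\sum_{j}(j+1)2^{-j-2}=1$, and union bounds over $\Ucover_0$ and the link sets. The only cosmetic difference is that you condition on good events and prove a single deterministic pointwise inequality, whereas the paper splits the two-sided deviation into upper and lower tails first (which is where its factor of $2$ originates); as you note, your version actually yields the stated bound with room to spare.
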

\notechecked{proof moved to here}
\begin{proof}
For notational convenience, let us denote the infinite sum in \eqref{eq:telescope} by $\Sigma(y)$. Then, using the triangle inequality, we observe that
\begin{align*}
	\pr{ \sup_{y\in U(\manifold)}\left\Vert \proj y\right\Vert>1+\epsilon} &
	=\pr{ \sup_{y}\left\Vert  \proj\pi_{0}\left(y\right)+\proj\Sigma(y)\right\Vert >1+\epsilon_{1}+\epsilon_{2}} \\
	& \le \pr{ \sup_{y}\left\Vert  \proj\pi_{0}\left(y\right)\right\Vert+\sup_y\left\Vert\proj\Sigma(y)\right\Vert >1+\epsilon_{1}+\epsilon_{2}} \\
	& \le\pr{ \sup_{y}\left\Vert \proj\pi_{0}\left(y\right)\right\Vert -1>\epsilon_{1}} +\pr{ \sup_{y}\left\Vert \proj\Sigma(y)\right\Vert >\epsilon_{2}},
\end{align*}
and similarly,
\begin{align*}
	\pr{ \inf_{y\in U(\manifold)}\left\Vert \proj y\right\Vert <1-\epsilon}
	& =\pr{ \inf_{y}\left\Vert \proj\pi_{0}\left(y\right)+\proj\Sigma(y)\right\Vert <1-\epsilon_{1}-\epsilon_{2}} \\
	& \le \pr{ \inf_{y}\left\Vert \proj\pi_{0}\left(y\right)\right\Vert-\sup_y\left\Vert\proj\Sigma(y)\right\Vert <1-\epsilon_{1}-\epsilon_{2}} \\
	& \le\pr{ \sup_{y}\,1-\left\Vert \proj\pi_{0}\left(y\right)\right\Vert >\epsilon_{1}} +\pr{ \sup_{y}\left\Vert \proj\Sigma(y)\right\Vert >\epsilon_{2}}.
\end{align*}
We can therefore argue that
\begin{align}
	\pr{ \sup_{y\in U(\manifold)}\left|\left\Vert \proj y\right\Vert-1\right|>\epsilon}
	&  \le \pr{ \sup_{y}\left\Vert \proj y\right\Vert>1+\epsilon}+\pr{ \inf_{y}\left\Vert \proj y\right\Vert <1-\epsilon}  \nonumber\\
	& \le2\pr{ \sup_{y}\left|\left\Vert \proj\pi_{0}\left(y\right)\right\Vert -1\right|>\epsilon_{1}} +2\pr{ \sup_{y}\left\Vert \proj\Sigma(y)\right\Vert >\epsilon_{2}}.
	\label{eq:build back from up & down}
\end{align}
 Consider the first probability on the last line of \eqref{eq:build back from up & down}:
\begin{align*}
	\pr{ \sup_{y\in U(\manifold)}\left|\left\Vert \proj\pi_{0}\left(y\right)\right\Vert -1\right|>\epsilon_{1}}
	& \le \pr{ \sup_{y}\left|\left\Vert \proj \pi_0(y)\right\Vert -\|\pi_0(y)\|\right|+\sup_{y}\left|\|\pi_0(y)\|-1\right|>\epsilon_{1}} \\
	& \le \pr{ \sup_{y}\left|\left\Vert \proj \pi_0(y)\right\Vert -\|\pi_0(y)\|\right|>\epsilon_{1}-\Uresolution} \\
	& \le \pr{ \sup_{y}\frac{\left|\left\Vert \proj \pi_0(y)\right\Vert -\|\pi_0(y)\|\right|}{\|\pi_0(y)\|}>\frac{\epsilon_{1}-\Uresolution}{1+\Uresolution}} \\
	& \le \pr{ \sup_{y}\frac{\left|\left\Vert \proj \pi_0(y)\right\Vert -\|\pi_0(y)\|\right|}{\|\pi_0(y)\|}>\Cr{eps1 n delta}\epsilon_1} \\
	& \le \pr{ \max_{t_{0}\in \Ucover_{0}(\Uresolution)}\frac{\left|\left\Vert \proj t_0\right\Vert -\|t_0\|\right|}{\|t_0\|}>\Cr{eps1 n delta}\epsilon_1} \\
	& \le\#\Ucover_{0}\left(\Uresolution\right)\cdot\max_{t_{0}\in \Ucover_{0}(\Uresolution)}\pr{ \left|\left\Vert \proj t_{0}\right\Vert -\|t_0\|\right|>\Cr{eps1 n delta}\epsilon_1\|t_0\|}
\end{align*}
where the first line uses the triangle inequality. The second and third  lines hold on account of $\Ucover_0(\Uresolution)$ being a net for a subset of $\mathbb{S}^{N-1}$, namely $U(\manifold)$. An application of the union bound gives the last line above.

Now consider the second probability on the last line of \eqref{eq:build back from up & down}. By the definition of $\Sigma(y)$, we observe that
\begin{align*}
	& \pr{ \sup_{y\in U(\manifold)}\left\Vert \proj\Sigma(y)\right\Vert >\epsilon_{2}} \\
	& =\pr{ \sup_{y}\left\Vert \sum \proj\pi_{j+1}(y)-\proj\pi_{j}(y)\right\Vert >\epsilon_{2}} \\
	& \le\pr{ \sum_j \max_{(t_j,s_j)\in \links_j(\Uresolution)}\left\Vert \proj s_{j}-\proj t_{j}\right\Vert >\Cr{eps2 n j}\Uresolution} \\
	& = \pr{ \sum_j \max_{(t_j,s_j)\in \links_j(\Uresolution)}\left\Vert \proj s_{j}-\proj t_{j}\right\Vert >\Cr{eps2 n j}\sum_j (j+1)2^{-j-2}\Uresolution} \\
	& \le\sum_j \pr{ \max_{(t_j,s_j)\in \links_j(\Uresolution)}\left\Vert \proj s_{j}-\proj t_{j}\right\Vert >\Cr{eps2 n j}(j+1)2^{-j-2}\Uresolution} \\
	& \le\sum_j \pr{ \max_{(t_j,s_j)\in \links_j(\Uresolution)}\left\Vert \proj s_{j}-\proj t_{j}\right\Vert >8^{-1}\Cr{eps2 n j}(j+1)\|s_j-t_j\|}\\
	& \le \sum_j \#T^2_{j+1}(\Uresolution)\max_{(t_j,s_j)\in \links_j(\Uresolution)}\pr{ \left\Vert \proj s_{j}-\proj t_{j}\right\Vert >8^{-1}\Cr{eps2 n j}(j+1)\|s_j-t_j\|}.
\end{align*}
The third line above uses the triangle inequality and the assumption on $\epsilon_2$, while the fifth and last lines use the union bound. It can be easily verified that the infinite sum on the right hand side of the inequality in the fourth line equals one. \notechecked{added the next sentence.} In the sixth line, we used the observation that $(t_{j},s_{j})\in Q_{j}(\delta)$ implies that $\|s_{j}-t_{j}\|\le2^{-j}\delta+2^{-j-1}\delta\le2^{-j+1}\delta$.
Having upper bounds for both terms on the last line of \eqref{eq:build back from up & down}, we overall arrive at
\begin{align*}
	\pr{ \sup_{y\in U(\manifold)}\left|\left\Vert \proj y\right\Vert-1\right|>\epsilon} & \le2\#\Ucover_{0}\left(\Uresolution\right)\cdot\max_{t_{0}\in \Ucover_{0}(\Uresolution)}\pr{ \left|\left\Vert \proj t_{0}\right\Vert -\|t_0\|\right|>\Cr{eps1 n delta}\epsilon_1\|t_0\|}\\
	& +2\sum_j \#T^2_{j+1}\left(\Uresolution\right)\max_{{(t_j,s_j)\in \links_j(\Uresolution)}}\pr{ \left\Vert \proj s_{j}-\proj t_{j}\right\Vert >8^{-1}\Cr{eps2 n j}(j+1)\|s_j-t_j\|}.
\end{align*}
From Lemma \ref{lemma:Tj is a refinement of U(M)}, $\#\Ucover_j(\Uresolution)\le \Ucoverno_j(\Uresolution)$. This establishes Lemma \ref{lemma:chaining}.
\end{proof}

There are two type of probabilities involved in the upper bound above. One controls the large deviations of $\|\proj t_0\|$ from its expectation, and the other corresponds to very large (one sided) deviations of $\|\proj s_j -\proj t_j\|$ from its expectation.
As claimed in the next lemma and proved in Appendix \ref{sec:Proof-of-Lemma basics for fixed y}, both of these probabilities are exponentially small when $M$ is large enough.
\begin{lemma}
	\label{lemma:basics for fixed}
	Fix $0\le\lambda\le1/3$ and $\lambda'\ge1/5$. Then, for fixed $y\in\real^N$, we have \notechecked{replaced $y_1$ with $y$ in first inequality below; replaced $y_2-y_1$ with $y$ in second inequality below}
	\begin{eqnarray}
		\pr{ \left|\left\Vert \proj y\right\Vert -\|y\|\right|>\lambda\|y\|} \le2e^{-\frac{M\lambda^{2}}{6}}
		\label{eq:COM_norm}\\
		\pr{ \left\Vert \proj y\right\Vert >\left(1+\lambda'\right)\left\Vert y\right\Vert } \le e^{-\frac{M\lambda'}{7}}.\label{eq:COM_diff}
	\end{eqnarray}
 \end{lemma}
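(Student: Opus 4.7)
By homogeneity of both sides I may rescale and assume $\|y\|=1$. The key observation is that, for a fixed unit vector $y$, the random vector $\proj y\in\mathbb{R}^{M}$ is Gaussian with zero mean and covariance $M^{-1}I_{M}$ (since each entry of $\proj$ is an independent $\gaussian(0,1/M)$ and $\|y\|=1$). Consequently
\[
W:=M\|\proj y\|^{2}=\sum_{i=1}^{M}Z_{i}^{2}\sim\chi^{2}_{M},
\]
where $Z_{1},\dots,Z_{M}$ are i.i.d.\ standard Gaussians. Both inequalities of the lemma will therefore follow from sharp Chernoff bounds for $W$ and straightforward algebra to translate between $W$ and $\|\proj y\|$.

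The plan for \eqref{eq:COM_norm} is to observe that the event $|\|\proj y\|-1|>\lambda$ is the disjoint union of $\{W>M(1+\lambda)^{2}\}$ and $\{W<M(1-\lambda)^{2}\}$. Using the standard moment generating function $\mathbb{E}[e^{sW}]=(1-2s)^{-M/2}$ and optimizing over $s$, one obtains the classical two-sided chi-squared tail
\[
\pr{W>M(1+u)}\le e^{-\frac{M}{2}(u-\log(1+u))},\qquad \pr{W<M(1-u)}\le e^{-\frac{M}{2}(-u-\log(1-u))},
\]
valid for $0<u<1$. Setting $u=2\lambda+\lambda^{2}$ for the upper tail and $u=2\lambda-\lambda^{2}$ for the lower tail, I then verify by elementary calculus that for $0\le\lambda\le 1/3$ both exponents dominate $M\lambda^{2}/6$; combining gives the factor of $2$ and \eqref{eq:COM_norm}.

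For \eqref{eq:COM_diff} I only need the upper tail: the event $\|\proj y\|>1+\lambda'$ equals $\{W>M(1+\lambda')^{2}\}=\{W>M(1+u')\}$ with $u'=2\lambda'+\lambda'^{2}$. Applying the same Chernoff bound gives an exponent $\frac{M}{2}(u'-\log(1+u'))$. Now since $\lambda'\ge 1/5$, the value $u'$ is bounded away from $0$, so I expect $u'-\log(1+u')$ to grow at least linearly in $\lambda'$; specifically, I will check that $u'-\log(1+u')\ge 2\lambda'/7$ for all $\lambda'\ge 1/5$, which reduces to a monotonicity argument (verify equality essentially at $\lambda'=1/5$, then note the left-hand side grows like $\lambda'^{2}$ while the right-hand side grows linearly).

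The main obstacle is not conceptual but numerical: confirming that the constants $6$ and $7$ in the two exponents are actually achievable over the entire ranges $\lambda\in[0,1/3]$ and $\lambda'\in[1/5,\infty)$. This requires a careful (but elementary) analysis of the functions $u\mapsto u-\log(1+u)$ and $u\mapsto -u-\log(1-u)$ near the boundary values $\lambda=1/3$ and $\lambda'=1/5$, where the Chernoff bound is tightest. Everything else is a direct computation.
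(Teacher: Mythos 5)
Your proposal is correct, and the claims you defer to ``elementary calculus'' do all check out: with $u=2\lambda+\lambda^{2}$ one has $u-\log(1+u)\ge u^{2}/2-u^{3}/3\ge\lambda^{2}/3$ for $\lambda\le 1/3$, with $u=2\lambda-\lambda^{2}$ one has $-u-\log(1-u)\ge u^{2}/2\ge\lambda^{2}/3$, and the function $\lambda'\mapsto 2\lambda'+\lambda'^{2}-2\log(1+\lambda')-2\lambda'/7$ is positive and increasing on $[1/5,\infty)$, so both stated exponents are achievable. The underlying idea is the same as the paper's---everything reduces to tail bounds for $M\|\proj y\|^{2}\sim\chi^{2}_{M}$---but the execution differs. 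The paper first relaxes the squared thresholds, replacing $(1\pm\lambda)^{2}$ by $1\pm\lambda$ and $(1+\lambda')^{2}$ by $1+2\lambda'$, and then invokes two off-the-shelf concentration results: the bound of Achlioptas for \eqref{eq:COM_norm} and the Laurent--Massart inequality $\pr{\sum_{i}n_{i}^{2}-M>2\sqrt{M\alpha}+2\alpha}\le e^{-\alpha}$ (with the choice $\alpha=\tfrac{M}{4}(\sqrt{1+4\lambda'}-1)^{2}$ and the estimate $\sqrt{1+4\lambda'}-1\ge(3-\sqrt{5})\sqrt{\lambda'}$ for $\lambda'\ge 1/5$) for \eqref{eq:COM_diff}. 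Your route optimizes the Chernoff exponent of the $\chi^{2}_{M}$ moment generating function directly and keeps the exact thresholds $M(1\pm\lambda)^{2}$ and $M(1+\lambda')^{2}$. What this buys you is a self-contained proof with no external citations and slightly more room in the constants (since you do not give up the quadratic terms in the thresholds); what the paper's route buys is brevity, at the cost of importing two lemmas whose own proofs are the same MGF computation you carry out. Either way the stated constants $6$ and $7$ are met, so there is no gap.
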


Now fix $\epsilon\le1/3$ and set $\epsilon_1=9\epsilon/10$. Taking  $\Cr{eps1 n delta}=\sqrt{6/7}$, $\Cr{eps2 n j}=16$, $\Uresolution=\epsilon/160$ and finally assuming \eqref{eq:mild assump on volume 1} guarantees that Lemma \ref{lemma:Tj is a refinement of U(M)} is in force. Under this setup, note that an upper bound for the first term on the right hand side of \eqref{eq:chaining_lemma} can be found by applying \eqref{eq:COM_norm} (after plugging in for  $\Cr{eps1 n delta}$):
\begin{align*}
	2\Ucoverno_0(\Uresolution)\cdot\max_{t_{0}\in \Ucover_{0}(\Uresolution)}\pr{ \left|\left\Vert \proj t_{0}\right\Vert -\|t_0\|\right|>\sqrt{\frac{6}{7}}\epsilon_1\|t_0\|}  & \le 2\Ucoverno_{0}(\Uresolution)\cdot2e^{-\frac{ M\epsilon_{1}^{2}}{7}},
\end{align*}
and, assuming that \notechecked{corrected notation below by changing $N_0(\Uresolution)$ to $\Ucoverno_0(\Uresolution)$}
\begin{equation}
	M\ge14\epsilon_{1}^{-2}\log \Ucoverno_0(\Uresolution),
	\label{eq:chain cond on a 1st}
\end{equation}
we arrive at
\begin{align}
	2\Ucoverno_0(\Uresolution)\cdot\max_{t_{0}\in \Ucover_{0}(\Uresolution)}\pr{ \left|\left\Vert \proj t_{0}\right\Vert -\|t_0\|\right|>\sqrt{\frac{6}{7}}\epsilon_{1}\|t_0\|}  & \le4e^{-\frac{M\epsilon_{1}^{2}}{14}}.
	\label{eq:bnd on 1nd term of chaining lemma}
\end{align}
In order to bound the second term on the right hand side of \eqref{eq:chaining_lemma}, we
proceed as follows. \notechecked{next sentence edited} Consider the maximum inside the summation. After plugging in for $\Cr{eps2 n j}$ and applying \eqref{eq:COM_diff},  we can bound this maximum as
\begin{equation*}
	\max_{(t_j,s_j)\in \links_j(\Uresolution)}\pr{ \left\Vert \proj s_{j}-\proj t_{j}\right\Vert >2(j+1)\|s_j-t_j\|}\le e^{-\frac{(2j+1)M}{7}}.
\end{equation*}
Using the  estimate above and Lemma \ref{lemma:Tj is a refinement of U(M)}, we get an upper bound for the second term on the right hand side of \eqref{eq:chaining_lemma}: \notechecked{removed duplicate mention of constant}
\begin{align}
	& 2\sum_{j\in\ints} \Ucoverno^2_{j+1}(\Uresolution)\cdot\max_{(t_j,s_j)\in \links_j(\Uresolution)}\pr{ \left\Vert \proj s_{j}-\proj t_{j}\right\Vert >2(j+1)\|s_j-t_j\|}\nonumber\\
	& \le 2\Ucoverno_0^2(\Uresolution)e^{-\frac{M}{7}} 4^{4K}\sum_{j\in\ints}  4^{4jK} e^{-\frac{2}{7}jM}.\label{eq:bnd on 2nd term of chaining pre}
\end{align}
Assuming that
\begin{equation}\label{eq:to be absorbed cond on M}
	M\ge \max(32\log \Ucoverno_0(\Uresolution),310K)
\end{equation}
allows us to continue simplifying \eqref{eq:bnd on 2nd term of chaining pre}, therefore arriving at
\begin{align}
	& 2\sum \Ucoverno_{j+1}^2(\Uresolution)\cdot\max_{(t_j,s_j)\in \links_j(\Uresolution)}\pr{ \left\Vert \proj s_{j}-\proj t_{j}\right\Vert >2(j+1)\|s_j-t_j\|}\nonumber\\
	& \le 4e^{-\frac{M}{17}}.\label{eq:bnd on 2nd term of chaining lemma}
\end{align}
We can now combine \eqref{eq:bnd on 1nd term of chaining lemma} and \eqref{eq:bnd on 2nd term of chaining lemma} to obtain
 \[
	\pr{ \sup_{y}\left|\left\Vert \proj y\right\Vert-1\right|>\epsilon} \le4e^{-\frac{M\epsilon_{1}^{2}}{14}}+4e^{-\frac{M}{17}}\le8e^{-\frac{M\epsilon_{1}^{2}}{14}},
\]
where the second inequality follows since $\epsilon_1\le1/3$ and thus $\epsilon_{1}^{2}/14\le1/17$. In particular, to achieve a failure probability of at most $\failprob\le1$,
we need
\begin{equation}
      M\ge14\epsilon_{1}^{-2}\log\left(8/\failprob\right).\label{eq:chain cond on m 2nd}
\end{equation}
\notechecked{this paragraph is edited} Assuming that \eqref{eq:mild assump on volume 1} holds and that $\epsilon\le1/3$,  we verify that \eqref{eq:to be absorbed cond on M} may be absorbed into \eqref{eq:chain cond on a 1st} (i.e., \eqref{eq:chain cond on a 1st} implies \eqref{eq:to be absorbed cond on M}). We are now left with \eqref{eq:chain cond on a 1st} and \eqref{eq:chain cond on m 2nd}, which are in turn lumped into a single lower bound on $M$ (after plugging in for $\Uresolution$), that is \notechecked{corrected notation below by changing $N_0(\Uresolution)$ to $\Ucoverno_0(\Uresolution)$, and changed $6$ to $6.12$}
\begin{align}
	M & \ge18\epsilon^{-2}\max\left(\log(2\volume^{2})+24K+2K\log\left(\frac{\sqrt{K}}{\condition\epsilon^{2}}\right),\log\left(\frac{8}{\failprob}\right)\right)\nonumber\\
	& \ge 18\epsilon^{-2}\max\left(\log\left(2\volume^{2}\left(\frac{6.12\sqrt{K}}{\condition\Uresolution^{2}}\right)^{2K}\right),\log\left(\frac{8}{\failprob}\right)\right)\nonumber\\
	& = 18\epsilon^{-2}\max\left(\log \Ucoverno_0(\Uresolution),\log\left(8/\rho\right)\right)\nonumber\\
	& \ge 14\epsilon_{1}^{-2}\max\left(\log \Ucoverno_0(\Uresolution),\log\left(8/\rho\right)\right).\label{eq:final bnd on M pre}
\end{align}
Therefore, we proved that
\begin{equation*}
	\pr{\sup_{y\in U(\manifold)}|\|\proj y\|-1|>\epsilon}\le\failprob,
\end{equation*}
provided that $M$ satisfies \eqref{eq:final bnd on M pre}. This completes the proof of Theorem~\ref{thm:new manifold embedding}.

\section{Proof of Lemma \ref{lemma:basics for fixed}}
\label{sec:Proof-of-Lemma basics for fixed y}

\notechecked{updated proof, replacing $y_1$ (in the first inequality) and $y_2-y_1$ (in the second inequality) with $y$} The proof is elementary. It is easily verified that $\mathbb{E}\left\Vert \proj y\right\Vert ^{2}=\|y\|^2$, and we then note that
\begin{align*}
	\pr{\left|\left\Vert \proj y\right\Vert -\|y\|\right|>\lambda\|y\|}
	& =\pr{ \left\Vert \proj y\right\Vert >(1+\lambda)\|y\|} +\pr{ \left\Vert \proj y\right\Vert <(1-\lambda)\|y\|} \\
	& \le\pr{ \left\Vert \proj y\right\Vert ^{2}>(1+\lambda)\|y\|^2} +\pr{ \left\Vert \proj y\right\Vert ^{2}<(1-\lambda)\|y\|^2} \\
	& \le2e^{-\frac{M}{2}\left(\frac{\lambda^{2}}{2}-\frac{\lambda^{3}}{3}\right)}\\
	& \le2e^{-\frac{M\lambda^{2}}{2}\left(\frac{1}{2}-\frac{1}{9}\right)}\\
	& \le2e^{-\frac{M\lambda^{2}}{6}},
\end{align*}
where the third line uses a well-known concentration bound \cite{achlioptas2001database}. The fourth line holds because $\lambda\le1/3$. This establishes the first inequality in Lemma \ref{lemma:basics for fixed}. For the second inequality, assume, without loss of generality, that $\|y\|=1$. We begin by observing that
\begin{align}
	\pr{ \left\Vert \proj y\right\Vert >\left(1+\lambda'\right)\left\Vert y\right\Vert }  & =\pr{ \left\Vert \proj y\right\Vert >1+\lambda'} \nonumber\\
	& \le\pr{ \left\Vert \proj y\right\Vert ^{2}>1+2\lambda'}\nonumber \\
	& =\pr{ M^{-1}\sum_{i=1}^{M}n_{i}^{2}-1>2\lambda'}\nonumber \\
	& =\pr{ \sum_{i=1}^{M}n_{i}^{2}-M>2\lambda' M},\label{eq:2nd ineq in basics 1st attempt}
\end{align}
where $n_{1},n_{2},\cdots,n_{M}$ are zero-mean and unit-variance Gaussian random variables. The third line above follows since the entries of the vector $\proj y$ are distributed as i.i.d.\ zero-mean Gaussians with variance of $1/M$. We now recall Lemma 1 in \cite{laurent2000adaptive}, which states that
\begin{equation}
	\pr{ \sum_{i=1}^{M}n_{i}^{2}-M>2\sqrt{M\alpha}+2\alpha} \le e^{-\alpha},
	\label{eq:large deviation COM for gaussians}
\end{equation}
for $\alpha>0$.  Comparing the last line in \eqref{eq:2nd ineq in basics 1st attempt} to the inequality above, we observe that taking
\[
	\alpha=\frac{M}{4}\left(\sqrt{1+4\lambda'}-1\right)^2
\]
allows us to continue simplifying \eqref{eq:2nd ineq in basics 1st attempt} to obtain that
\begin{align}
	\pr{ \left\Vert \proj y\right\Vert >\left(1+\lambda'\right) }   \le 	\pr{ \sum_{i=1}^{M}n_{i}^{2}-M>2\sqrt{M\alpha}+2\alpha} \le e^{-\alpha}.
	\label{eq:2nd ineq in basics 2nd attempt}
\end{align}
It is easily verified that $\sqrt{1+4\lambda'}-1\ge (3-\sqrt{5})\sqrt{\lambda'}$ when $\lambda'\ge1/5$. It follows that
\begin{align}
	\alpha &\ge\frac{M}{4}\cdot (3-\sqrt{5})^2\lambda' \ge M\lambda'/7,
\end{align}
and consequently,
\[
	\pr{ \left\Vert \proj y\right\Vert>\left(1+\lambda'\right)\left\Vert y\right\Vert }\le e^{-\frac{M\lambda'}{7}},
\]
as claimed. This establishes the second inequality in Lemma \ref{lemma:basics for fixed} and completes the proof.

\section{Proof of Theorem~\ref{thm:kappa}} \label{app:kappa}

\notechecked{adjusted proof to account for noise} Fix $\alpha \in [1-\epsilon, 1+\epsilon]$. We consider any two
points $w_a,w_b \in \manifold$ such that
$$
\frac{\dist{\proj w_a}{\proj w_b}}{\dist{w_a}{w_b}} = \alpha,
$$
and supposing that $x$ is closer to $w_a$, i.e.,
$$
\dist{x}{w_a} \le \dist{x}{w_b},
$$
but $y=\proj x+n$ is closer to $\proj w_b$, i.e.,
$$
\dist{y}{\proj w_b} \le \dist{y}{\proj w_a},
$$
we seek the maximum value that $
\dist{x}{w_b}
$
may take. In other words, we wish to bound the worst possible
``mistake'' (according to our error criterion) between two candidate
points on the manifold whose distance is scaled by the factor
$\alpha$.

This can be posed in the form of an optimization problem
\begin{eqnarray*}
\max_{x \in \real^\dim,  w_a,w_b \in \manifold}
\dist{x}{w_b} &\mathrm{s.t.}&\dist{x}{w_a} \le
\dist{x}{w_b},\\
&&
\dist{y}{\proj w_b} \le \dist{y}{\proj w_a},\\
&&\frac{\dist{\proj w_a}{\proj w_b}}{\dist{w_a}{w_b}} = \alpha.
\end{eqnarray*}
For simplicity, we may expand the constraint set to include all
$w_a,w_b \in \real^\dim$; the solution to this larger problem is an
upper bound for the solution to the case where $w_a,w_b \in
\manifold$.
%
%
This leaves
\begin{eqnarray*}
\max_{x,w_a,w_b \in \real^\dim} \dist{x}{w_b}
&\mathrm{s.t.}&\norm{y-\proj w_b} \le \norm{y-\Phi w_a},\\
&&\frac{\norm{\proj w_b-\Phi w_a}}{\norm{w_b-w_a}} = \alpha.
\end{eqnarray*}
where we also ignored the first constraint (because of its relation
to the objective function). Under the constraints above, the objective satisfies
\begin{align*}
	\|x-w_b\|& \le \|x-w_a\|+\|w_b-w_a\|\\
	& = \|x-w_a\|+\|\proj w_b-\Phi w_a\|/\alpha\\
	& \le  \|x-w_a\|+2 \|y-\Phi w_a\|/\alpha\\
	& \le  \|x-w_a\|+2 (\|\Phi x-\Phi w_a\|+\|n\|)/\alpha\\
	& \le \|x-w_a\|+\frac{2\sing_M(\proj)}{1-\epsilon}\|x-w_a\|+\frac{2\|n\|}{1-\epsilon}\\
	& \le \frac{1}{1-\epsilon}\left(2\sing_M(\proj)+1\right)\|x-w_a\|+\frac{2\|n\|}{1-\epsilon}.
\end{align*}
The first line follows from the triangle inequality. The first identity above uses the first constraint. The first constraint (via the triangle inequality) implies that $\|\proj w_b-\Phi w_a\|\le 2 \|y-\Phi w_a\|$ and the third line thus follows. The fourth line uses the triangle inequality one more time. The fifth line follows after considering the possible range of $\alpha$.
To reiterate, the above conclusion holds for any observation $x$ that could be mistakenly paired with $w_b$
instead of $w_a$ (under a  $\proj$ that scales the distance
$\dist{w_a}{w_b}$ by $\alpha$). This completes the proof of Theorem~\ref{thm:kappa} after noting that $(1-\epsilon)^{-1}\le 1+2\epsilon$ when $\epsilon\le 1/2$.

\section{Proof of Proposition~\ref{lemma:l2 l2 bound is sharp}}
\label{sec:Proof of l2 l2 bound is sharp}

Set
$$
\dev := (1+\epsilon)\left(\sing_m(\proj)\right)^{-1},
$$
and let
$$
x = e_1+\dev u,
$$
where $\|u\|\le1$ belongs to the row span of $\proj$ and satisfies $\proj x=\proj(e_1+\dev u)=0$.
Finding such $u$ is possible because
$$
\|\proj e_1\|\le (1+\epsilon)\|e_1\|=1+\epsilon= \dev \cdot\sing_m(\proj)=\dev\cdot \sing_m(\proj) \|v\| \le \dev\|\proj v\|,
$$
for every unit vector $v$ in the row span of $\proj$. The first inequality holds because $e_1,{\bf 0}\in\manifold$ and $\proj$ stably embeds $\manifold$. The second equality holds by our choice of $\dev$, and the last inequality holds because $v$ belongs to the row span of $\proj$. With our choice of $x$ above, we have $\proj x=0$ and therefore $\widehat{x}=0$. On the other hand,
$$
\|x-x^\ast\|\le \|x-e_1\|\le\dev.
$$
It follows that
$$
\frac{\|x-\widehat{x}\|}{\|x-x^\ast\|}\ge \frac{\|x-\widehat{x}\|}{\|x-e_1\|} \ge \frac{\|x\|}{\dev} \ge \frac{1-\dev}{\dev}\ge \frac{1}{2\dev}
 =\frac{1}{2(1+\epsilon)}\sing_m(\proj).
$$
Indeed, one can verify that $\dev\le 1/2$ because (by hypothesis) $\epsilon\le1/3$ and $\sing_m(\proj)\ge8/3$. This immediately implies the second to last inequality above. This completes the proof of Proposition~\ref{lemma:l2 l2 bound is sharp}.

\section{Proof of Theorem~\ref{theo:bound3}}
\label{sec:proof of prob par est theorem}

\notechecked{replaced some instances of $\manifold$ with ``the manifold''} Our success in stably embedding $\manifold$ via random linear measurements (and what distinguished Theorem~\ref{thm:new manifold embedding} from embedding of a point cloud) relied on the smoothness of the manifold. This assumption enabled us to control the behavior of short chords on $\manifold$. However, $x$ does not generally belong to the manifold and hence, in general, we cannot control the direction of short chords connecting $x$ to $\manifold$. To deal with this issue, we proceed as follows. 
For fixed $\thresh>0$ to be specified later, define
\[
	\manifold_{\thresh}:=\left\{ z\in \manifold\,:\,\left\Vert z-x\right\Vert >\thresh\condition\right\},
\]
and let $\manifold_{\thresh}^{C}:=\manifold\backslash \manifold_{\thresh}$, i.e., the complement of $\manifold_\thresh$ in $\manifold$. Note that one of the two sets may be empty. Our first step towards a proof is to show that, for every $z\in \manifold_{\thresh}$ with an appropriately chosen $\thresh$, we have
\begin{equation}
	\left(1-\epsilon\right)\left\Vert z-x\right\Vert \le\left\Vert \proj z-\proj x\right\Vert \le\left(1+\epsilon\right)\left\Vert z-x\right\Vert.
	\label{eq:stable emb of U(M,z)}
\end{equation}
In other words, we first study the stable
embedding of the directions of all the chords connecting $x$ to $\manifold_{\thresh}$, namely $U(\manifold_\thresh,x)$, for an appropriate $\thresh$. This is addressed next.
\begin{lemma}
\label{lemma:stable embed of U(M,z)}
    \notechecked{eliminated some large equations and replaced with references, in lemma and its proof}
	Choose $0<\epsilon\le1/3$ and $0<\rho<1$. Conveniently assume that \eqref{eq:conv assump for par est} holds.
	If
	\begin{equation}
	M\ge18\epsilon^{-2}\max\left(11K+K\log\left(\frac{\sqrt{K}}{\condition\epsilon^{2}}\right)+\log\volume\,,\,\log\left(\frac{8}{\failprob}\right)\right),
    \label{eq:Mboundinproof}
	\end{equation}
	then, except with a probability of at most $\failprob$, \eqref{eq:stable emb of U(M,z)}
	holds for every $z\in \manifold_{\epsilon/40}$.
\end{lemma}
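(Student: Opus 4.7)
The plan is to adapt the chaining argument of Appendix~\ref{sec:Proof of new manifold embedding} to the index set $U(\{x\},\manifold_{\thresh})$ of normalized secants from the fixed point $x$ to points $z\in\manifold$ with $\|z-x\|>\thresh\condition$ (where ultimately $\thresh=\epsilon/40$). The key simplification compared to Theorem~\ref{thm:new manifold embedding} is that every chord under consideration is a ``long chord'': its Euclidean length is bounded below by $\thresh\condition$. Consequently, the tangent-plane refinement $C'_j(\Mresolution,\Tresolution)$ used in Appendix~\ref{sec:Constructing a Sequence of Covers for U(M)} to handle short chords is not needed here, and a single sequence of nets on $\manifold$ suffices as the chain.

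First I would apply Lemma~\ref{lemma:niyogi covering no lemma} to build nets $\{C_j(\Mresolution)\}_{j\ge 0}\subset\manifold$ at geometrically decreasing scales $4^{-j}\Mresolution$, with $\#C_j(\Mresolution)\le 4^{jK}\coverbnd_0(\Mresolution)$. For each $z\in\manifold_\thresh$ and each $j\in\ints$, let $\pi_j(z)\in C_j(\Mresolution)$ denote the nearest net point, and set $y_j(z):=U(x,\pi_j(z))$. Taking $\Mresolution$ of order $\thresh\condition\epsilon$ (comparable to $\condition\epsilon^2$ since $\thresh=\epsilon/40$) and applying Lemma~\ref{lemma:implicit in Clarkson} with $a_1=b_1=x$, $a_2=\pi_{j+1}(z)$, $b_2=\pi_j(z)$, and using $\|\pi_j(z)-x\|\ge\|z-x\|-4^{-j}\Mresolution\gtrsim\thresh\condition$, one verifies that $\|y_{j+1}(z)-y_j(z)\|\lesssim 2^{-j}\epsilon$ and $\|U(x,z)-y_0(z)\|\lesssim\epsilon$. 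Hence $\{y_j(z)\}$ is a valid chain for $U(x,z)$ drawn from an index set of cardinality at most $4^{jK}\coverbnd_0(\Mresolution)$.

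Next I would run the union bound argument of Lemma~\ref{lemma:chaining} essentially verbatim. The ``level zero'' term is handled by \eqref{eq:COM_norm} in Lemma~\ref{lemma:basics for fixed} applied to each of the at most $\coverbnd_0(\Mresolution)$ secants $y_0(z)$; the chain term is handled, summed over $j\ge 0$, by \eqref{eq:COM_diff} applied to each of the at most $4^{2(j+1)K}\coverbnd_0(\Mresolution)^2$ pairs $(y_j(z),y_{j+1}(z))$. Choosing $\epsilon_1=9\epsilon/10$ and balancing the combinatorial growth $4^{jK}$ against the exponentially shrinking tail $e^{-(2j+1)M/7}$ exactly as in the derivation culminating in Appendix~\ref{sec:Proof of new manifold embedding}, the overall failure probability is at most $8e^{-M\epsilon_1^2/14}$, which is at most $\failprob$ provided $M$ satisfies \eqref{eq:Mboundinproof}. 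Notice that only a single power of $\coverbnd_0(\Mresolution)$ enters the log-cardinality bound (since the chain uses only the nets $C_j$ and not products thereof), which accounts for the coefficient $K$ and the $\log\volume$ in \eqref{eq:Mboundinproof}, as opposed to the $2K$ and $\log\volume^2$ appearing in \eqref{eq:mmeasmain}.

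The main technical hurdle will be the careful accounting in the second step: (i) choosing $\Mresolution$ so that the coarsest-level error $\|U(x,z)-y_0(z)\|$ is bounded by a prescribed fraction of $\epsilon$ while the telescoping increments decay like $2^{-j}$; (ii) verifying that $\|\pi_j(z)-x\|$ remains bounded below by a fixed multiple of $\thresh\condition$ uniformly in $j$, so that Lemma~\ref{lemma:implicit in Clarkson} applies at every level; and (iii) using the strengthened assumption \eqref{eq:conv assump for par est} to absorb auxiliary terms and to ensure that $\log\coverbnd_0(\Mresolution)$ dominates the combinatorial estimate (in the same spirit as the absorption step \eqref{eq:to be absorbed cond on M} of the original proof). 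The remainder is bookkeeping along the template of Appendix~\ref{sec:Proof of new manifold embedding}.
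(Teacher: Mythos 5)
Your proposal is correct and follows essentially the same route as the paper: the paper also forms the covers $\Uparcover_j(\Uresolution)=U(C_j(\Mresolution),x)$ directly from the nets $C_j(\Mresolution)$ on $\manifold$ (with $\Mresolution=\Uresolution^2\condition$, $\thresh=4\Uresolution$, and $\Uresolution=\epsilon/160$ so that $\thresh=\epsilon/40$), observes that only long chords occur so the tangent-plane refinement is unnecessary, and reruns the chaining/union-bound machinery of Appendix~\ref{sub:Applying-the-Chaining arg} with the single power of $\coverbnd_0(\Mresolution)$ yielding the $K$ and $\log\volume$ (rather than $2K$ and $\log\volume^2$) in \eqref{eq:Mboundinproof}. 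Your identification of Lemma~\ref{lemma:implicit in Clarkson} with $a_1=b_1=x$ as the tool for verifying the covering property, and your accounting of the final failure probability $8e^{-M\epsilon_1^2/14}$, match the paper's argument.
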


\begin{proof}
The proof strategy is identical to that in Appendix~\ref{sec:Proof of new manifold embedding}. We will prove that \eqref{eq:stable emb of U(M,z)} holds for every $z\in\manifold_{\epsilon/40}$, with high probability and provided that $M$ is large enough. As before, this is achieved by finding an upper bound on
\begin{equation}\label{eq:stable embed of U(M,x) goal}
	\pr{ \sup_{y\in U(\manifold_{\epsilon/40},x)}\left|\left\Vert \proj y\right\Vert -1\right|>\epsilon},
\end{equation}
for $\epsilon\le1/3$.

We begin again by constructing a sequence of increasingly finer covers for $U\left(\manifold_{\thresh},x\right)$, with $\thresh$ to be set later. We denote this sequence by $\{\Uparcover_j\left(\Uresolution\right)\}$---each $\Uparcover_j(\Uresolution)$ is a ($2^{-j}\Uresolution$)-net for $U(\manifold_{\thresh},x)$.
For $0<\Uresolution\le1/\sqrt{2}$, set $\Mresolution=\Uresolution^2\condition$ and $\thresh=4\Uresolution$. We form $\{\Uparcover_j\left(\Uresolution\right)\} $ from $\{C_{j}\left(\Mresolution\right)\} $, the sequence of covers for $\manifold$ constructed in Appendix~\ref{sec:Constructing a Sequence of Covers for U(M)}.
Indeed, the same argument in that section proves that $U(C_j(\Mresolution),x) $ is a ($2^{-j}\Uresolution$)-cover for $U(\manifold_\thresh,x)$. It also holds that
\begin{equation}
	\#\Uparcover_j(\Uresolution)\le\#C_j(\Mresolution)\le 4^{jK}\left(\frac{2}{\theta(\Uresolution^2/4)\Uresolution^2\condition}\right)^{K}\frac{\volume}{V_{\ball_K}}\le 4^{jK}\left(\frac{\sqrt{K}}{\Uresolution^2\condition}\right)^K\volume=:\Uparcoverno_j(\Uresolution).
\end{equation}
As before, we can represent every $y\in U(M_{\thresh},x)$ with an
infinite chain of points from the sequence of covers $\{\Uparcover_j\left(\Uresolution\right)\} $. After setting $\Uresolution=\epsilon/160$, using the same argument as the one in Appendix~\ref{sub:Applying-the-Chaining arg}, and exploiting the estimates above, one can verify that the failure probability in \eqref{eq:stable embed of U(M,x) goal} is at most $\failprob$, provided that \eqref{eq:Mboundinproof} holds.
\end{proof}

We now combine Lemma \ref{lemma:stable embed of U(M,z)} and
an elementary argument to complete the proof of Theorem~\ref{theo:bound3}.
It is possible to recognize two different cases: when $\widehat{x}\in \manifold_{\epsilon/40}^{C}$ and when $\widehat{x}\in \manifold_{\epsilon/40}$.
Clearly,
\begin{equation}
	\left\Vert x-x^{*}\right\Vert \le\left\Vert x-\widehat{x}\right\Vert \le\frac{\epsilon\condition}{40},\qquad\mbox{when }\widehat{x}\in \manifold_{\epsilon/40}^{C}.
	\label{eq:zhat bnd case 1}
\end{equation}
If, however, $\widehat{x}\in \manifold_{\epsilon/40}$, then a more detailed analysis is required. An application of Lemma~\ref{lemma:basics for fixed} implies that \eqref{eq:stable emb of U(M,z)} holds for $z=x^*$, except with a probability of at most $\failprob$ and provided that $M\ge6\epsilon^{-2}\log(2/\failprob)$. 
Suppose the assumptions in Lemma \ref{lemma:stable embed of U(M,z)} are met. Therefore, \eqref{eq:stable emb of U(M,z)} holds for every $z\in
\manifold_{\epsilon/40}\cup\{x^{*}\}$, except with a probability of at most $2\rho$. Also, by the definition of $x^{*}$ and $\widehat{x}$,  it holds true that
\[
	\left\Vert x-x^{*}\right\Vert \le\left\Vert x-\widehat{x}\right\Vert ~~~\text{and}~~~
	\left\Vert \left(\proj x+\noise\right)-\proj\widehat{x}\right\Vert \le\left\Vert \left(\proj x+\noise\right)-\proj x^{*}\right\Vert.
\]
Now, combining all these bounds and using several applications of the triangle inequality we obtain that \notechecked{removed a duplicate line below}
\begin{align*}
	\left\Vert x-\widehat{x}\right\Vert  & \le\left(1-\epsilon\right)^{-1}\left\Vert \proj x-\proj\widehat{x}\right\Vert \\
	& \le\left(1-\epsilon\right)^{-1}\left\Vert \left(\proj x+\noise\right)-\proj\widehat{x}\right\Vert +\left(1-\epsilon\right)^{-1}\left\Vert \noise\right\Vert \\
	& \le\left(1-\epsilon\right)^{-1}\left\Vert \left(\proj x+\noise\right)-\proj x^{*}\right\Vert +\left(1-\epsilon\right)^{-1}\left\Vert \noise\right\Vert \\
	& \le\left(1-\epsilon\right)^{-1}\left\Vert \proj x-\proj x^{*}\right\Vert +2\left(1-\epsilon\right)^{-1}\left\Vert \noise\right\Vert \\
	& \le\frac{1+\epsilon}{1-\epsilon}\left\Vert x-x^{*}\right\Vert +2\left(1-\epsilon\right)^{-1}\left\Vert \noise\right\Vert.
\end{align*}
\notechecked{corrected the next two inequalities and propagated the changes through to the end of the proof}
Since $\epsilon\le1/3$, one can easily check that
\[
	  \left(1-\epsilon\right)^{-1}\le1+2\epsilon ~~~\text{and}~~~
	\frac{1+\epsilon}{1-\epsilon}\le1+3\epsilon.
\]
 Consequently, we obtain that
\begin{equation}
	\left\Vert x-\widehat{x}\right\Vert \le\left(1+3\epsilon\right)\left\Vert x-x^{*}\right\Vert +\left(2+4\epsilon\right)\left\Vert \noise\right\Vert ,\qquad\mbox{when }\widehat{x}\in M_{\epsilon/40}.\label{eq:zhat bnd case 2}
\end{equation}
\notechecked{revised here through end of proof to account for new theorem statement} Combining \eqref{eq:zhat bnd case 1} and \eqref{eq:zhat bnd case 2}, we overall obtain that
\begin{align}\label{eq:temporary 1}
	\left\Vert x-\widehat{x}\right\Vert &
	 \le\max\left(\frac{\epsilon\condition}{40}\,,\,\left(1+3\epsilon\right)\left\Vert x-x^{*}\right\Vert +\left(2+4\epsilon\right)\left\Vert \noise\right\Vert \right)\nonumber\\
	 &\le \left(1+3\epsilon\right)\left\Vert x-x^{*}\right\Vert +\frac{\epsilon\condition}{40} +\left(2+4\epsilon\right)\left\Vert \noise\right\Vert,
\end{align}
which, to emphasize, is valid under the assumptions of Lemma \ref{lemma:stable embed of U(M,z)} and except for a probability of at most $2\failprob$.

On the other hand, according to Theorem~\ref{thm:kappa} and the remarks that followed it (see \eqref{eq:kappabound}), it holds that
\begin{equation}\label{eq:temporary}
\|x-\widehat{x}\| \le (1+2\epsilon)\left(2\sqrt{\frac{N}{M}}+5\right)\|x-x^*\|+(2+4\epsilon)\|n\|,
\end{equation}
except for a probability of at most $2\rho$ and as long as both \eqref{eq:mild assump on volume 1 main} and \eqref{eq:mmeasmain} hold.
 From \eqref{eq:temporary 1} and \eqref{eq:temporary}, we conclude that
\[
\left\Vert x-\widehat{x}\right\Vert \le
\min\left(  \left(1+3\epsilon\right)\left\Vert x-x^{*}\right\Vert +\frac{\epsilon\condition}{40}\,,\,  (1+2\epsilon)\left(2\sqrt{\frac{N}{M}}+5\right)\|x-x^*\| \right) +\left(2+4\epsilon\right)\left\Vert \noise\right\Vert,
\]
except for a probability of at most $4\rho$ and as long as both \eqref{eq:mmeasmain} and \eqref{eq:conv assump for par est} hold. This completes the proof of Theorem~\ref{theo:bound3}.


\section{Proof of Theorem~\ref{theo:bound4}}
\label{app:bound4}

Using the triangle inequality and (\ref{eq:bound3}), we have\notechecked{updated next inequality}
\begin{align}
\norm{\widehat{x} - x^\ast} & \le \norm{x-\widehat{x}} +
\norm{x-x^\ast} \nonumber\\
& \le  \min\left(  \left(2+3\epsilon\right)\left\Vert x-x^{*}\right\Vert +\frac{\epsilon\condition}{40}\,,\,  \left((2+4\epsilon)\sqrt{\frac{N}{M}}+6+10\epsilon\right)\|x-x^*\| \right)  +(2+4\eps)\norm{\noise}.
\label{eq:bound4a}
\end{align}
Now, since both $\widehat{x}$ and $x^\ast$ belong to $\manifold$, we
can invoke Lemma~\ref{lemma:d & dM} from~the Toolbox, which states that if
$\norm{\widehat{x} - x^\ast} \le \condition/2$, then
\begin{equation}
\gdist(\widehat{x}, x^\ast) \le \condition -
\condition\sqrt{1-2\norm{\widehat{x} -
x^\ast}/\condition}.\label{eq:bound4b}
\end{equation}
 To apply
this lemma, it is sufficient to know that\notechecked{updated next inequalities}
$$
 (2 +3
\eps)\norm{x-x^\ast}  +\frac{\eps \condition}{40}+(2+4\eps)\norm{\noise} \le \condition/2,
$$
i.e., that
$$
 \norm{x-x^\ast} + \frac{1+2\eps}{1+3\eps/2}\norm{\noise}\le
\frac{\condition}{4} \left(\frac{1 - \eps/20
}{1+3\eps/2}\right).
$$
For the sake of neatness, we may tighten this condition to\notechecked{edited next inequality}
$\norm{x-x^\ast} +\frac{10}{9}\norm{\noise}\le 0.163\condition$, which
implies the sufficient condition above (since $\eps \le 1/3$). Thus, if
$\norm{x-x^\ast}$ and $\norm{\noise}$ are sufficiently small (on the
order of the condition number $\condition$), then we may combine
(\ref{eq:bound4a}) and (\ref{eq:bound4b}), giving\notechecked{edited next inequality}
\begin{align}
& \gdist(\widehat{x}, x^\ast)\nonumber\\
 &\le \condition -
\condition\sqrt{1-\frac{2}{\condition}
\left(\min\left(  \left(2+3\epsilon\right)\left\Vert x-x^{*}\right\Vert +\frac{\epsilon\condition}{40}\,,\,  \left((2+4\epsilon)\sqrt{\frac{N}{M}}+6+10\epsilon\right)\|x-x^*\| \right)  +(2+4\eps)\norm{\noise}\right)} \nonumber \\
&= \condition - \condition\sqrt{1-\min\left( \frac{4+6\epsilon}{\tau}\|x-x^*\|+\frac{\epsilon}{20}\,,\,\tau^{-1}\left((4+8\epsilon)\sqrt{\frac{N}{M}}+12+20\epsilon\right)\|x-x^*\| \right) - \frac{4+8\epsilon}{\tau}\|n\|}. \label{eq:bound4c}
\end{align}
Under the assumption that \notechecked{edited next inequality} $\norm{x-x^\ast} +\frac{10}{9}\norm{\noise}\le
0.163\condition$, it follows that \notechecked{edited rest of sentence} the term inside the square root in the last line above must be nonnegative, and therefore \eqref{eq:bound4} holds.
%
%
This completes the proof of Theorem~\ref{theo:bound4}.

\footnotesize
\bibliographystyle{plain}
\bibliography{Revision-v10}

\end{document}